\documentclass[runningheads]{llncs}

\usepackage[T1]{fontenc}

\usepackage{hyperref}

\usepackage{mathtools, stmaryrd}

\usepackage{tikz}
\usetikzlibrary{decorations.pathmorphing}

\usepackage{prettyref}
\newcommand{\rref}[2][]{\prettyref{#2}}
\newrefformat{sec}{Section\,\ref{#1}}
\newrefformat{app}{Appendix\,\ref{#1}}
\newrefformat{def}{Def.\,\ref{#1}}
\newrefformat{thm}{Theorem\,\ref{#1}}
\newrefformat{prop}{Proposition\,\ref{#1}}
\newrefformat{lem}{Lemma\,\ref{#1}}
\newrefformat{cor}{Corollary\,\ref{#1}}
\newrefformat{ex}{Example\,\ref{#1}}
\newrefformat{tab}{Table\,\ref{#1}}
\newrefformat{fig}{Fig.\,\ref{#1}}
\newrefformat{rem}{Remark\,\ref{#1}}
\newrefformat{case}{Case\,\ref{#1}}

\usepackage{mathpartir}

\usepackage[prefixflatinterpret,prefixflatbracketmodalinterpret,prefixflatsetinterpret,sidenotecalculus,longseqcontext,shortmquant,mquantifiertype,mconnectiveformal,irunistore]{logic}
\usepackage[prefixflatinterpret,prefixflatbracketmodalinterpret,differentialdL,precisenames,fixformat]{dL}
\usepackage{math}
\newcommand{\I}{\interpretation[state=\nu]}
\newcommand{\It}{\interpretation[state=\omega]}
\newcommand{\Itt}{\interpretation[state=\omega']}
\newcommand{\sol}{\varphi}
\newcommand{\Isol}[1][t]{\interpretation[state=\sol(#1), const=I]}

\usepackage{dLmacros}

\definecolor{vred}{rgb}{.7,0,0}
\definecolor{vblue}{rgb}{.1,.15,.62}
\definecolor{darkishgray}{rgb}{.35,.35,.35}

\renewcommand*{\irrulename}[1]{\text{\textcolor{darkishgray}{\upshape(#1)}}}

\newcommand{\projctx}{projective context}
\newcommand{\Projctx}{Projective context}
\begin{document}

\cinferenceRuleStore[leqtrans|$\refines_{t}$]{Transitivity of $\refines$}
{\linferenceRule[lpmi]
    {\axkey{\ausprg \refines \busprg}}
    {\ausprg \refines \cusprg \land \cusprg \refines \busprg}
}{}
\cinferenceRuleStore[leqantisym|$\prgeq$]{Antisymmetry of $\refines$}
{\linferenceRule[equiv]
    {\ausprg \refines \busprg \land \busprg \refines \ausprg}
    {\axkey{\ausprg \prgeq \busprg}}
}{}
\cinferenceRuleStore[boxleq|{[$\refines$]}]{Box refinement}
{\linferenceRule[impl]
    {\ausprg \refines \busprg}
    {(\axkey{\dbox{\ausprg}\ausfml}\lylpmi\dbox{\busprg}{\ausfml})}
}{}
\cinferenceRuleStore[test|?]{test axiom}
{\linferenceRule[equiv]
{(\osfml \limply \osfmlb)}
{(\axkey{\ptest{\osfml} \refines  {\ptest{\osfmlb}}})}
}{}

\cinferenceRuleStore[update|${:}{=}$]{Update axiom}
{\linferenceRule[eq]
    {\prandom{x};\ptest{x = \oustrm}}
    {\pupdate{\pumod{x}{\oustrm}}}
}{}

\cinferenceRuleStore[testdet|?$_{\text{det}}$]{deterministic test}
{\linferenceRule[equiv]
{\dbox{\ptest{\osfml}}{\ausprg \leq \busprg}}
{\axkey{\ptest{\osfml};\ausprg \leq \ptest{\osfml};\busprg}}
}{}
\cinferenceRuleStore[stutter|stutter]{stuttering}
{\linferenceRule[eq]
{{\ptest{\ltrue}}}
{\axkey{\pupdate{\pumod{x}{x}}}}
}{}
\cinferenceRuleStore[choicel|$\cup_l$]{choice left axiom}
{\linferenceRule[equiv]
{\ausprg \refines \cusprg \land \busprg \refines \cusprg}
{\axkey{\pchoice{\ausprg}{\busprg} \refines \cusprg}}
}{}

\cinferenceRuleStore[choicer|$\cup_r$]{choice right axiom}
{\linferenceRule[lpmi]
{\axkey{\ausprg \refines \pchoice{\busprg}{\cusprg}}}
{\ausprg \refines \busprg \lor \ausprg \refines \cusprg}
}{}

\cinferenceRuleStore[sequence|;]{sequence axiom}
{\linferenceRule[lpmi]
{\axkey{\ausprg;\busprg \refines \cusprg;\dusprg}}
{\ausprg \refines \cusprg \land \dbox{\ausprg}{\busprg \refines \dusprg}}
}{}

\cinferenceRuleStore[loopl|$\text{loop}_l$]{loop left axiom}
{\linferenceRule[lpmi]
{\axkey{\prepeat{\ausprg};\busprg \refines \busprg}}
{\dbox{\prepeat{\ausprg}}{\ausprg;\busprg \refines \busprg}}
}{}

\cinferenceRuleStore[loopr|$\text{loop}_r$]{loop right axiom}
{\linferenceRule[lpmi]
{\axkey{\ausprg;\prepeat{\busprg} \refines \ausprg}}
{\ausprg;\busprg \refines \ausprg}
}{}

\cinferenceRuleStore[unloop|$\text{unloop}$]{unloop axiom}
{\linferenceRule[lpmi]
    {\axkey{\prepeat{\ausprg}\refines\prepeat{\busprg}}}
    {\dbox{\prepeat{\ausprg}}(\ausprg\refines\busprg)}
}{}

\cinferenceRuleStore[assigndet|$:=_{\text{det}}$]{deterministic assignment}
{\linferenceRule[equiv]
{\dbox{\pupdate{\pumod{x}{\oustrm}}}{\ausprg \leq \busprg}}
{\axkey{\pupdate{\pumod{x}{\oustrm}};\ausprg \leq \pupdate{\pumod{x}{\oustrm}};\busprg}}
}{}	

\cinferenceRuleStore[randtestmerge|${:}*_{\text{merge}}$]{Merging randoms with test}
{\linferenceRule[eq]
    {\prandom{x};\ptest{\lexists y \ousfml[y]}}
    {\axkey{\prandom{x};\ptest{\ousfml[x]};\prandom{x}}}
}{}

\cinferenceRuleStore[randtestassignmerge|${:}{=}*_{\text{merge}}$]{Merging random and assignement with test}
{\linferenceRule[eq]
	{\prandom{x};\ptest{\lexists y (\ousfml[y] \land x = \oustrm[y])}}
	{\axkey{\prandom{x};\ptest{\ousfml[x]};\pupdate{\umod{x}{\oustrm[x]}}}}
}{}
	
\cinferenceRuleStore[ode|ODE]{ODE axiom}
{\linferenceRule[equiv]
  {\dbox{\pode{\D{x}=\oustrm[x]}{\ousfml[x]}}{(\D{x}=\oustrmb[x] \land \ousfmlb[x])}}
  {\axkey{\pode{\D{x}=\oustrm[x]}{\ousfml[x]}\refines\pode{\D{x}=\oustrmb[x]}{\ousfmlb[x]}}}
}{}

\cinferenceRuleStore[DWref|$\text{DW}_{\prgeq}$]{DW axiom}
{\linferenceRule[eq]
  {{\ptest{\ousfml[x]};\pode{\D{x}=\oustrm[x]}{\ousfml[x]};\ptest{\ousfml[x]}}}
  {\axkey{\pode{\D{x}=\oustrm[x]}{\ousfml[x]}}}
}{}

\cinferenceRuleStore[DEref|$\text{DE}_{\prgeq}$]{DE axiom}
{\linferenceRule[eq]
  {\pode{\D{x}=\oustrm[x]}{\ousfml[x]};\pupdate{\umod{\D{x}}{\oustrm[x]}}}
  {\axkey{\pode{\D{x}=\oustrm[x]}{\ousfml[x]}}}
}{}

\cinferenceRuleStore[DX|DX]{DX axiom}
{\linferenceRule[leq]
  {\axkey{\pode{\D{x}=\oustrm[x]}{\ousfml[x]}}}
  {\pupdate{\umod{\D{x}}{\oustrm[x]}};\ptest{\ousfml[x]}}
}{}

\cinferenceRuleStore[ODEidemp|ODE$_{\text{idemp}}$]{ODE idempotent axiom}
{\linferenceRule[eq]
  {\pode{\D{x}=\oustrm[x]}{\ousfml[x]}}
  {\axkey{\pode{\D{x}=\oustrm[x]}{\ousfml[x]};\pode{\D{x}=\oustrm[x]}{\ousfml[x]}}}
}{}


\cinferenceRuleStore[leqrefl|$\refines_{\text{refl}}$]{Reflexivity of $\refines$}
{\linferenceRule[leq]
    {\ausprg}
    {\ausprg}
}{}
{\cinferenceRuleStore[cupidemp|$\cup_{\text{idemp}}$]{Idempotence of $\cup$}
{\linferenceRule[eq]
    {\ausprg}
    {\axkey{\pchoice{\ausprg}{\ausprg}}}
}{}}
{\cinferenceRuleStore[cupassoc|$\cup_{\text{assoc}}$]{Associativity of $\cup$}
{\linferenceRule[eq]
    {\pchoice{\ausprg}{(\pchoice{\busprg}{\cusprg})}}
    {\pchoice{(\pchoice{\ausprg}{\busprg})}{\cusprg}}
}{}}
\cinferenceRuleStore[seqidl|$;_{\text{id-l}}$]{Left neutral of $;$}
{\linferenceRule[eq]
    {\ausprg}
    {\axkey{\ptest{\ltrue};\ausprg}}
}{}
\cinferenceRuleStore[seqdistl|dist-l]{Left distributivity of $;$}
{\linferenceRule[eq]
    {\ausprg;(\pchoice{\busprg}{\cusprg})}
    {\pchoice{(\ausprg;\busprg)}{(\ausprg;\cusprg)}}
}{}
\cinferenceRuleStore[seqannih-l|annih-l]{$;$ left annihilator}
{\linferenceRule[eq]
    {\ptest{\lfalse}}
    {\axkey{\ptest{\lfalse};\ausprg}}
}{}
{\cinferenceRuleStore[testand|$?_{\text{and}}$]{Test and}
{\linferenceRule[eq]
    {\ptest{\osfml \land \osfmlb}}
    {\axkey{\ptest{\osfml};\ptest{\osfmlb}}}
}{}}

\cinferenceRuleStore[unfold-l|unfold-l]{Left unfolding of $\prepeat{\ausprg}$}
{\linferenceRule[eq]
    {\prepeat{\ausprg}}
    {\pchoice{\ptest{\ltrue}}{(\ausprg;\prepeat{\ausprg})}}
}{}
\cinferenceRuleStore[cupid|$\cup_{\text{id}}$]{Neutral of $\cup$}
{\linferenceRule[eq]
    {\ausprg}
    {\axkey{\pchoice{\ausprg}{\ptest{\lfalse}}}}
}{}
{\cinferenceRuleStore[cupcomm|$\cup_{\text{comm}}$]{Commutativity of $\cup$}
{\linferenceRule[eq]
    {\pchoice{\ausprg}{\busprg}}
    {\pchoice{\busprg}{\ausprg}}
}{}}
\cinferenceRuleStore[seqassoc|$;_{\text{assoc}}$]{Associativity of $;$}
{\linferenceRule[eq]
    {(\ausprg;\busprg);\cusprg}
    {\ausprg;(\busprg;\cusprg)}
}{}
\cinferenceRuleStore[seqidr|$;_{\text{id-r}}$]{Right neutral of $;$}
{\linferenceRule[eq]
    {\ausprg}
    {\axkey{\ausprg;\ptest{\ltrue}}}
}{}
{\cinferenceRuleStore[seqdistr|dist-r]{Right distributivity of $;$}
{\linferenceRule[eq]
    {(\pchoice{\ausprg}{\busprg});\cusprg}
    {\pchoice{(\ausprg;\cusprg)}{(\busprg;\cusprg)}}
}{}}
\cinferenceRuleStore[seqannih-r|annih-r]{$;$ right annihilator}
{\linferenceRule[eq]
    {\ptest{\lfalse}}
    {\axkey{\ausprg;\ptest{\lfalse}}}
}{}
{\cinferenceRuleStore[testor|$?_{\text{or}}$]{Test or}
{\linferenceRule[eq]
    {\ptest{\osfml \lor \osfmlb}}
    {\axkey{\pchoice{\ptest{\osfml}}{\ptest{\osfmlb}}}}
}{}}
\cinferenceRuleStore[unfold-r|unfold-r]{Right unfolding of $\prepeat{\ausprg}$}
{\linferenceRule[eq]
    {\prepeat{\ausprg}}
    {\pchoice{\ptest{\ltrue}}{(\prepeat{\ausprg};\ausprg)}}
}{}


\cinferenceRuleStore[randcomm|${:}{}*_{\text{comm}}$]{Commutativity of randoms}
{\linferenceRule[eq]
    {\prandom{y};\prandom{x}}
    {\prandom{x};\prandom{y}}
}{}
{\cinferenceRuleStore[assignsub|$:=_{\text{sub}}$]{Substitution of assignments}
{\linferenceRule[eq]
{\pumod{x}{\oustrm};\pumod{y}{\oustrmb[\oustrm]}}
{\pumod{x}{\oustrm};\pumod{y}{\oustrmb[x]}}
}{}}
{\cinferenceRuleStore[assigntest|$:=_{\text{test}}$]{Commutativity of assignment and test}
{\linferenceRule[eq]
{\ptest{\ousfml[\oustrm]};\pumod{x}{\oustrm}}
{\pumod{x}{\oustrm};\ptest{\ousfml[x]}}
}{}}
{\cinferenceRuleStore[assignmerge|$:=_{\text{merge}}$]{Merging assignments}
{\linferenceRule[eq]
{\pumod{x}{\oustrmb[\oustrm]}}
{\axkey{\pumod{x}{\oustrm};\pumod{x}{\oustrmb[x]}}}
}{}}
{\cinferenceRuleStore[assigncomm|$:=_{\text{comm}}$]{Commutativity of assignments}
{\linferenceRule[eq]
{\pumod{y}{\oustrmb[\oustrm]};\pumod{x}{\oustrm}}
{\pumod{x}{\oustrm};\pumod{y}{\oustrmb[x]}}
}{}}
{\cinferenceRuleStore[assignrandcomm|${:}{=}*_{\text{comm}}$]{Commutativity of assignement and random}
{\linferenceRule[eq]
{\prandom{y};\pupdate{\pumod{x}{\oustrm}}}
{\pupdate{\pumod{x}{\oustrm}};\prandom{y}}
}{}}
\cinferenceRuleStore[randtest|${:}*_{\text{test}}$]{Commutativity of random and test}
{\linferenceRule[eq]
{\ptest{\osfml};\prandom{x}}
{\prandom{x};\ptest{\osfml}}
}{}


\providecommand{\axkey}[1]{\textcolor{vblue}{#1}}%
\cinferenceRuleStore[diamond|$\didia{\cdot}$]{diamond axiom}
{\linferenceRule[equiv]
  {\lnot\dbox{\ausprg}{\lnot \ausfml}}
  {\axkey{\ddiamond{\ausprg}{\ausfml}}}
}
{}
\cinferenceRuleStore[diamondax|$\didia{\cdot}$]{diamond axiom}
{\linferenceRule[equiv]
  {\lnot\dbox{\ausprgax}{\lnot \ausfmlax}}
  {\axkey{\ddiamond{\ausprgax}{\ausfmlax}}}
}
{}
\cinferenceRuleStore[assignb|$\dibox{:=}$]{assignment / substitution axiom}
{\linferenceRule[equiv]
  {p(\genDJ{x})}
  {\axkey{\dbox{\pupdate{\umod{x}{\genDJ{x}}}}{p(x)}}}
}
{}
\cinferenceRuleStore[assignbax|$\dibox{:=}$]{assignment / substitution axiom}
{\linferenceRule[equiv]
  {p(\aconst)}
  {\axkey{\dbox{\pupdate{\umod{x}{\aconst}}}{p(x)}}}
}
{}
\cinferenceRuleStore[Dassignb|$\dibox{:=}$]{differential assignment}
{\linferenceRule[equiv]
{p(\astrm)}
{\axkey{\dbox{\Dupdate{\Dumod{\D{x}}{\astrm}}}{p(\D{x})}}}
}
{}
\cinferenceRuleStore[testb|$\dibox{?}$]{test}
{\linferenceRule[equiv]
  {(\ivr \limply \ausfml)}
  {\axkey{\dbox{\ptest{\ivr}}{\ausfml}}}
}{}%
\cinferenceRuleStore[testbax|$\dibox{?}$]{test}
{\linferenceRule[equiv]
  {(q \limply p)}
  {\axkey{\dbox{\ptest{q}}{p}}}
}{}%
\cinferenceRuleStore[evolveb|$\dibox{'}$]{evolve}
{\linferenceRule[equiv]
  {\lforall{t{\geq}0}{\dbox{\pupdate{\pumod{x}{\solf(t)}}}{p(x)}}}
  {\axkey{\dbox{\pevolve{\D{x}=\genDE{x}}}{p(x)}}}
}{\m{\D{\solf}(t)=\genDE{\solf}}}
\cinferenceRuleStore[choiceb|$\dibox{\cup}$]{axiom of nondeterministic choice}
{\linferenceRule[equiv]
  {\dbox{\ausprg}{\ausfml} \land \dbox{\busprg}{\ausfml}}
  {\axkey{\dbox{\pchoice{\ausprg}{\busprg}}{\ausfml}}}
}{}%
\cinferenceRuleStore[choicebax|$\dibox{\cup}$]{axiom of nondeterministic choice}
{\linferenceRule[equiv]
  {\dbox{\ausprgax}{\ausfmlax} \land \dbox{\busprgax}{\ausfmlax}}
  {\axkey{\dbox{\pchoice{\ausprgax}{\busprgax}}{\ausfmlax}}}
}{}%
\cinferenceRuleStore[evolveinb|$\dibox{'}$]{evolve}
{\linferenceRule[equiv]
  {
        \lforall{t{\geq}0}{\big(
          (\lforall{0{\leq}s{\leq}t}{q(\solf(s))})
          \limply
          \dbox{\pupdate{\pumod{x}{\solf(t)}}}{p(x)}
        \big)}
      }
  {
        \dbox{\pevolvein{\D{x}=\genDE{x}}{q(x)}}{p(x)}
  }
}{}
\cinferenceRuleStore[composeb|$\dibox{{;}}$]{composition} 
{\linferenceRule[equiv]
  {\dbox{\ausprg}{\dbox{\busprg}{\ausfml}}}
  {\axkey{\dbox{\ausprg;\busprg}{\ausfml}}}
}{}%
\cinferenceRuleStore[composebax|$\dibox{{;}}$]{composition} 
{\linferenceRule[equiv]
  {\dbox{\ausprgax}{\dbox{\busprgax}{\ausfmlax}}}
  {\axkey{\dbox{\ausprgax;\busprgax}{\ausfmlax}}}
}{}%
\cinferenceRuleStore[iterateb|$\dibox{{}^*}$]{iteration/repeat unwind} 
{\linferenceRule[equiv]
  {\ausfml \land \dbox{\ausprg}{\dbox{\prepeat{\ausprg}}{\ausfml}}}
  {\axkey{\dbox{\prepeat{\ausprg}}{\ausfml}}}
}{}%
\cinferenceRuleStore[iteratebax|$\dibox{{}^*}$]{iteration/repeat unwind} 
{\linferenceRule[equiv]
  {\ausfmlax \land \dbox{\ausprgax}{\dbox{\prepeat{\ausprgax}}{\ausfmlax}}}
  {\axkey{\dbox{\prepeat{\ausprgax}}{\ausfmlax}}}
}{}%
\cinferenceRuleStore[K|K]{K axiom / modal modus ponens}
{\linferenceRule[impl]
  {\dbox{\ausprg}{(\ausfml\limply\busfml)}}
  {(\dbox{\ausprg}{\ausfml}\limply\axkey{\dbox{\ausprg}{\busfml}})}
}{}%
\cinferenceRuleStore[Kax|K]{K axiom / modal modus ponens}
{\linferenceRule[impl]
  {\dbox{\ausprgax}{(\ausfmlax\limply\busfmlax)}}
  {(\dbox{\ausprgax}{\ausfmlax}\limply\axkey{\dbox{\ausprgax}{\busfmlax}})}
}{}%
\cinferenceRuleStore[I|II]{loop induction}
{\linferenceRule[impl]
  {\dbox{\prepeat{\ausprg}}{(\ausfml\limply\dbox{\ausprg}{\ausfml})}}
  {(\ausfml\limply\axkey{\dbox{\prepeat{\ausprg}}{\ausfml}})}
}{}%
\cinferenceRuleStore[Ieq|I]{loop induction}
{\linferenceRule[equiv]
  {\ausfml \land \dbox{\prepeat{\ausprg}}{(\ausfml\limply\dbox{\ausprg}{\ausfml})}}
  {\axkey{\dbox{\prepeat{\ausprg}}{\ausfml}}}
}{}%
\cinferenceRuleStore[Ieqax|I]{loop induction}
{\linferenceRule[equiv]
  {\ausfmlax \land \dbox{\prepeat{\ausprgax}}{(\ausfmlax\limply\dbox{\ausprgax}{\ausfmlax})}}
  {\axkey{\dbox{\prepeat{\ausprgax}}{\ausfmlax}}}
}{}%
\dinferenceRuleStore[backiterateb|\usebox{\backiterateb}]{backwards iteration/repeat unwind}
{\linferenceRule[equiv]
  {\ausfml \land \dbox{\prepeat{\ausprg}}{\dbox{\ausprg}{\ausfml}}}
  {\axkey{\dbox{\prepeat{\ausprg}}{\ausfml}}}
}{}
\dinferenceRuleStore[iterateiterateb|$\dibox{{}^*{}^*}$]{double iteration}
{\linferenceRule[equiv]
  {\dbox{\prepeat{\ausprg}}{\ausfml}}
  {\axkey{\dbox{\prepeat{\ausprg};\prepeat{\ausprg}}{\ausfml}}}
}{}
\dinferenceRuleStore[iterateiterated|$\didia{{}^*{}^*}$]{double iteration}
{\linferenceRule[equiv]
  {\ddiamond{\prepeat{\ausprg}}{\ausfml}}
  {\axkey{\ddiamond{\prepeat{\ausprg};\prepeat{\ausprg}}{\ausfml}}}
}{}
\cinferenceRuleStore[B|B]{Barcan and converse}
{\linferenceRule[equiv]
        {\ddiamond{\ausprg}{\lexists{x}{\ausfml}}}
        {\lexists{x}{\ddiamond{\ausprg}{\ausfml}}}
}{\m{x{\not\in}\ausprg}}
\cinferenceRuleStore[V|V]{vacuous $\dbox{}{}$}
{\linferenceRule[impl]
  {p}
  {\axkey{\dbox{\ausprg}{p}}}
}{\m{FV(p)\cap BV(\ausprg)=\emptyset}}
\cinferenceRuleStore[Vax|V]{vacuous $\dbox{}{}$}
{\linferenceRule[impl]
  {p}
  {\axkey{\dbox{a}{p}}}
}{}
\cinferenceRuleStore[G|G]{$\dbox{}{}$ generalization} 
{\linferenceRule[formula]
  {\ausfml}
  {\dbox{\ausprg}{\ausfml}}
}{}%
\cinferenceRuleStore[Gax|G]{$\dbox{}{}$ generalization} 
{\linferenceRule[formula]
  {\ausfmlax}
  {\dbox{\ausprgax}{\ausfmlax}}
}{}%
\cinferenceRuleStore[genaax|$\forall{}$]{$\forall{}$ generalisation}
{\linferenceRule[formula]
  {p(x)}
  {\lforall{x}{p(x)}}
}{}
\cinferenceRuleStore[MPax|MP]{modus ponens}
{\linferenceRule[formula]
  {p\limply q \quad p}
  {q}
}{}
\cinferenceRuleStore[Mb|M${\dibox{\cdot}}$]{$\dbox{}{}$ monotone}
{\linferenceRule[formula]
  {\ausfml\limply \busfml}
  {\dbox{\ausprg}{\ausfml}\limply\dbox{\ausprg}{\busfml}}
}{}%
\cinferenceRuleStore[M|M]{$\ddiamond{}{}$ monotone / $\ddiamond{}{}$-generalization}
{\linferenceRule[formula]
  {\ausfml\limply\busfml}
  {\ddiamond{\ausprg}{\ausfml}\limply\ddiamond{\ausprg}{\busfml}}
}{}%

\dinferenceRuleStore[Mbr|M\rightrule]
{$\ddiamond{}{}/\dbox{}{}$ generalization=M=G+K} 
{\linferenceRule[sequent]
  {\lsequent[L]{} {\dbox{\ausprg}{\busfml}} 
  &\lsequent[g]{\busfml} {\ausfml}}
  {\lsequent[L]{} {\dbox{\ausprg}{\ausfml}}}
}{}%

\dinferenceRuleStore[loop|loop]{inductive invariant}
{\linferenceRule[sequent]
  {\lsequent[L]{} {\inv}
  &\lsequent[g]{\inv} {\dbox{\ausprg}{\inv}}
  &\lsequent[g]{\inv} {\ausfml}}
  {\lsequent[L]{} {\dbox{\prepeat{\ausprg}}{\ausfml}}}
}{}%
\dinferenceRuleStore[invind|ind]{inductive invariant}
{\linferenceRule[sequent]
  {\lsequent[\globalrule]{\ausfml}{\dbox{\ausprg}{\ausfml}}}
  {\lsequent{\ausfml}{\dbox{\prepeat{\ausprg}}{\ausfml}}}
}{}%
\cinferenceRuleStore[con|con]{loop convergence right} 
{\linferenceRule[formula]
  {\lsequent[G]{\mapply{\var}{v}\land v>0}{\ddiamond{\ausprg}{\mapply{\var}{v-1}}}}
  {\lsequent[L]{\lexists{v}{\mapply{\var}{v}}}
      {\axkey{\ddiamond{\prepeat{\ausprg}}{\lexists{v{\leq}0}{\mapply{\var}{v}}}}}}
}{v\not\in\ausprg}
\dinferenceRuleStore[congen|con]{loop convergence}
{\linferenceRule[sequent]
  {\lsequent[L]{}{\lexists{v}{\mapply{\var}{v}}}
  &\lsequent[G]{}{\lforall{v{>}0}{({\mapply{\var}{v}}\limply{\ddiamond{\ausprg}{\mapply{\var}{v-1}})}}}
  &\lsequent[G]{\lexists{v{\leq}0}{\mapply{\var}{v}}}{\busfml}
  }
  {\lsequent[L]{}{\ddiamond{\prepeat{\ausprg}}{\busfml}}}
}{v\not\in\ausprg}

\dinferenceRuleStore[band|${[]\land}$]{$\dbox{\cdot}{\land}$}
{\linferenceRule[equiv]
  {\dbox{\ausprg}{\ausfml} \land \dbox{\ausprg}{\busfml}}
  {\axkey{\dbox{\ausprg}{(\ausfml\land\busfml)}}}
}{}%

\dinferenceRuleStore[Hoarecompose|H${;}$]{Hoare $;$}
{\linferenceRule
  {A\limply\dbox{\ausprg}{E} & E\limply\dbox{\busprg}{B}}
  {A \limply \dbox{\ausprg;\busprg}{B}}
}{}%
\dinferenceRuleStore[composebrexplicit|$\dibox{{;}}$\rightrule]{$;$}
{\linferenceRule
  {A\limply\dbox{\ausprg}{\dbox{\busprg}{B}}}
  {A \limply \dbox{\ausprg;\busprg}{B}}
}{}


\cinferenceRuleStore[notr|$\lnot$\rightrule]{$\lnot$ right}
{\linferenceRule[sequent]
  {\lsequent[L]{\asfml}{}}
  {\lsequent[L]{}{\lnot \asfml}}
}{}%
\cinferenceRuleStore[notl|$\lnot$\leftrule]{$\lnot$ left}
{\linferenceRule[sequent]
  {\lsequent[L]{}{\asfml}}
  {\lsequent[L]{\lnot \asfml}{}}
}{}%
\cinferenceRuleStore[andr|$\land$\rightrule]{$\land$ right}
{\linferenceRule[sequent]
  {\lsequent[L]{}{\asfml}
    & \lsequent[L]{}{\bsfml}}
  {\lsequent[L]{}{\asfml \land \bsfml}}
}{}%
\cinferenceRuleStore[andl|$\land$\leftrule]{$\land$ left}
{\linferenceRule[sequent]
  {\lsequent[L]{\asfml , \bsfml}{}}
  {\lsequent[L]{\asfml \land \bsfml}{}}
}{}%
\cinferenceRuleStore[orr|$\lor$\rightrule]{$\lor$ right}
{\linferenceRule[sequent]
  {\lsequent[L]{}{\asfml, \bsfml}}
  {\lsequent[L]{}{\asfml \lor \bsfml}}
}{}%
\cinferenceRuleStore[orl|$\lor$\leftrule]{$\lor$ left}
{\linferenceRule[sequent]
  {\lsequent[L]{\asfml}{}
    & \lsequent[L]{\bsfml}{}}
  {\lsequent[L]{\asfml \lor \bsfml}{}}
}{}%
\cinferenceRuleStore[implyr|$\limply$\rightrule]{$\limply$ right}
{\linferenceRule[sequent]
  {\lsequent[L]{\asfml}{\bsfml}}
  {\lsequent[L]{}{\asfml \limply \bsfml}}
}{}%
\cinferenceRuleStore[implyl|$\limply$\leftrule]{$\limply$ left}
{\linferenceRule[sequent]
  {\lsequent[L]{}{\asfml}
    & \lsequent[L]{\bsfml}{}}
  {\lsequent[L]{\asfml \limply \bsfml}{}}
}{}%
\cinferenceRuleStore[equivr|$\lbisubjunct$\rightrule]{$\lbisubjunct$ right}
{\linferenceRule[sequent]
  {\lsequent[L]{\asfml}{\bsfml}
   & \lsequent[L]{\bsfml}{\asfml}}
  {\lsequent[L]{}{\asfml \lbisubjunct \bsfml}}
}{}%
\cinferenceRuleStore[equivl|$\lbisubjunct$\leftrule]{$\lbisubjunct$ left}
{\linferenceRule[sequent]
  {\lsequent[L]{\asfml\limply\bsfml, \bsfml\limply\asfml}{}}
  {\lsequent[L]{\asfml \lbisubjunct \bsfml}{}}
}{}%
\cinferenceRuleStore[id|id]{identity}
{\linferenceRule[sequent]
  {}
  {\lsequent[L]{\asfml}{\asfml}}
}{}%
\cinferenceRuleStore[cut|cut]{cut}
{\linferenceRule[sequent]
  {\lsequent[L]{}{\cusfml}
  &\lsequent[L]{\cusfml}{}}
  {\lsequent[L]{}{}}
}{}%
\cinferenceRuleStore[weakenr|W\rightrule]{weakening right}
{\linferenceRule[sequent]
  {\lsequent[L]{}{}}
  {\lsequent[L]{}{\asfml}}
}{}%
\cinferenceRuleStore[weakenl|W\leftrule]{weakening left}
{\linferenceRule[sequent]
  {\lsequent[L]{}{}}
  {\lsequent[L]{\asfml}{}}
}{}%
\cinferenceRuleStore[exchanger|P\rightrule]{exchange right}
{\linferenceRule[sequent]
  {\lsequent[L]{}{\bsfml,\asfml}}
  {\lsequent[L]{}{\asfml,\bsfml}}
}{}%
\cinferenceRuleStore[exchangel|P\leftrule]{exchange left}
{\linferenceRule[sequent]
  {\lsequent[L]{\bsfml,\asfml}{}}
  {\lsequent[L]{\asfml,\bsfml}{}}
}{}%
\cinferenceRuleStore[contractr|c\rightrule]{contract right}
{\linferenceRule[sequent]
  {\lsequent[L]{}{\asfml,\asfml}}
  {\lsequent[L]{}{\asfml}}
}{}%
\cinferenceRuleStore[contractl|c\leftrule]{contract left}
{\linferenceRule[sequent]
  {\lsequent[L]{\asfml,\asfml}{}}
  {\lsequent[L]{\asfml}{}}
}{}
\cinferenceRuleStore[closeTrue|$\top$\rightrule]{close by true}
{\linferenceRule[sequent]
  {}
  {\lsequent[L]{}{\ltrue}}
}{}%
\cinferenceRuleStore[closeFalse|$\bot$\leftrule]{close by false}
{\linferenceRule[sequent]
  {}
  {\lsequent[L]{\lfalse}{}}
}{}%

\cinferenceRuleStore[CE|CE]{congequiv congruence of equivalences on formulas}
{\linferenceRule[formula]
  {\ausfml \lbisubjunct \busfml}
  {\contextapp{C}{\ausfml} \lbisubjunct \contextapp{C}{\busfml}}
}{}%
\dinferenceRuleStore[CEr|CE\rightrule]{congequiv congruence of equivalences on formulas}
{\linferenceRule[formula]
  {\lsequent[L]{} {\contextapp{C}{\busfml}}
  &\lsequent[g]{} {\ausfml \lbisubjunct \busfml}}
  {\lsequent[L]{} {\contextapp{C}{\ausfml}}}
}{}%
\dinferenceRuleStore[CEl|CE\leftrule]{congequiv congruence of equivalences on formulas}
{\linferenceRule[formula]
  {\lsequent[L]{\contextapp{C}{\busfml}} {}
  &\lsequent[g]{} {\ausfml \lbisubjunct \busfml}}
  {\lsequent[L]{\contextapp{C}{\ausfml}} {}}
}{}%

\cinferenceRuleStore[allr|$\forall$\rightrule]{$\lforall{}{}$ right}
{\linferenceRule[sequent]
  {\lsequent[L]{}{p(y)}}
  {\lsequent[L]{}{\lforall{x}{p(x)}}}
}{\m{y\not\in\Gamma{,}\Delta{,}\lforall{x}{p(x)}}}%
\cinferenceRuleStore[alll|$\forall$\leftrule]{$\lforall{}{}$ left instantiation}
{\linferenceRule[sequent]
  {\lsequent[L]{p(\astrm)}{}}
  {\lsequent[L]{\lforall{x}{p(x)}}{}}
}{arbitrary term $\astrm$}
\cinferenceRuleStore[existsr|$\exists$\rightrule]{$\lexists{}{}$ right}
{\linferenceRule[sequent]
  {\lsequent[L]{}{p(\astrm)}}
  {\lsequent[L]{}{\lexists{x}{p(x)}}}
}{arbitrary term $\astrm$}
\cinferenceRuleStore[existsl|$\exists$\leftrule]{$\lexists{}{}$ left}
{\linferenceRule[sequent]
  {\lsequent[L]{p(y)} {}}
  {\lsequent[L]{\lexists{x}{p(x)}} {}}
}{\m{y\not\in\Gamma{,}\Delta{,}\lexists{x}{p(x)}}}%

\cinferenceRuleStore[qear|\usebox{\Rval}]{quantifier elimination real arithmetic}
{\linferenceRule[sequent]
  {}
  {\lsequent[g]{\Gamma}{\Delta}}
}{\text{if}~\landfold_{\ausfml\in\Gamma} \ausfml \limply \lorfold_{\busfml\in\Delta} \busfml ~\text{is valid in \LOS[\reals]}}%

\dinferenceRuleStore[allGi|i$\forall$]{inverse universal generalization / universal instantiation}
{\linferenceRule[sequent]
  {\lsequent[L]{} {\lforall{x}{\ausfml}}}
  {\lsequent[L]{} {\ausfml}}
}{}

\cinferenceRuleStore[applyeqr|=\rightrule]{apply equation}
{\linferenceRule[sequent]
  {\lsequent[L]{x=\astrm}{p(\astrm)}}
  {\lsequent[L]{x=\astrm}{p(x)}}
}{}%
\cinferenceRuleStore[applyeql|=\leftrule]{apply equation}
{\linferenceRule[sequent]
  {\lsequent[L]{x=\astrm,p(\astrm)}{}}
  {\lsequent[L]{x=\astrm,p(x)}{}}
}{}%

\dinferenceRuleStore[alldupl|$\forall\forall$\leftrule]{$\lforall{}{}$ left instantiation retaining duplicates}
{\linferenceRule[sequent]
  {\lsequent[L]{\lforall{x}{p(x)},p(\astrm)}{}}
  {\lsequent[L]{\lforall{x}{p(x)}}{}}
}{}

\dinferenceRuleStore[choicebrinsist|$\dibox{\cup}\rightrule$]{}
{\linferenceRule
  {\lsequent[L]{}{\dbox{\asprg}{\ausfml}\land\dbox{\bsprg}{\ausfml}}}
  {\lsequent[L]{}{\dbox{\pchoice{\asprg}{\bsprg}}{\ausfml}}}
}{}
\dinferenceRuleStore[choiceblinsist|$\dibox{\cup}\leftrule$]{}
{\linferenceRule
  {\lsequent[L]{\dbox{\asprg}{\ausfml}\land\dbox{\bsprg}{\ausfml}}{}}
  {\lsequent[L]{\dbox{\pchoice{\asprg}{\bsprg}}{\ausfml}}{}}
}{}
\dinferenceRuleStore[choicebrinsist2|$\dibox{\cup}\rightrule2$]{}
{\linferenceRule
  {\lsequent[L]{}{\dbox{\asprg}{\ausfml}}
  &\lsequent[L]{}{\dbox{\bsprg}{\ausfml}}}
  {\lsequent[L]{}{\dbox{\pchoice{\asprg}{\bsprg}}{\ausfml}}}
}{}
\dinferenceRuleStore[choiceblinsist2|$\dibox{\cup}\leftrule2$]{}
{\linferenceRule
  {\lsequent[L]{\dbox{\asprg}{\ausfml},\dbox{\bsprg}{\ausfml}}{}}
  {\lsequent[L]{\dbox{\pchoice{\asprg}{\bsprg}}{\ausfml}}{}}
}{}
\dinferenceRuleStore[cutr|cut\rightrule]{cut right}
{\linferenceRule[sequent]
  {\lsequent[L]{}{\bsfml}
  &\lsequent[L]{}{\bsfml\limply\asfml}}
  {\lsequent[L]{}{\asfml}}
}{}
\dinferenceRuleStore[cutl|cut\leftrule]{cut left}
{\linferenceRule[sequent]
  {\lsequent[L]{\bsfml} {}
  &\lsequent[L]{}{\asfml\limply\bsfml}}
  {\lsequent[L]{\asfml} {}}
}{}


\cinferenceRuleStore[Dplus|$+'$]{derive sum}
{\linferenceRule[eq]
  {\der{\asdtrm}+\der{\bsdtrm}}
  {\axkey{\der{\asdtrm+\bsdtrm}}}
}
{}
\cinferenceRuleStore[Dplusax|$+'$]{derive sum}
{\linferenceRule[eq]
  {\der{\asdtrmax}+\der{\bsdtrmax}}
  {\axkey{\der{\asdtrmax+\bsdtrmax}}}
}
{}
\cinferenceRuleStore[Dminus|$-'$]{derive minus}
{\linferenceRule[eq]
  {\der{\asdtrm}-\der{\bsdtrm}}
  {\axkey{\der{\asdtrm-\bsdtrm}}}
}
{}
\cinferenceRuleStore[Dminusax|$-'$]{derive minus}
{\linferenceRule[eq]
  {\der{\asdtrmax}-\der{\bsdtrmax}}
  {\axkey{\der{\asdtrmax-\bsdtrmax}}}
}
{}
\cinferenceRuleStore[Dtimes|$\cdot'$]{derive product}
{\linferenceRule[eq]
  {\der{\asdtrm}\cdot \bsdtrm+\asdtrm\cdot\der{\bsdtrm}}
  {\axkey{\der{\asdtrm\cdot \bsdtrm}}}
}
{}
\cinferenceRuleStore[Dtimesax|$\cdot'$]{derive product}
{\linferenceRule[eq]
  {\der{\asdtrmax}\cdot \bsdtrmax+\asdtrmax\cdot\der{\bsdtrmax}}
  {\axkey{\der{\asdtrmax\cdot \bsdtrmax}}}
}
{}
\cinferenceRuleStore[Dquotient|$/'$]{derive quotient}
{\linferenceRule[eq]
  {\big(\der{\asdtrm}\cdot \bsdtrm-\asdtrm\cdot\der{\bsdtrm}\big) / \bsdtrm^2}
  {\axkey{\der{\asdtrm/\bsdtrm}}}
}
{}
\cinferenceRuleStore[Dquotientax|$/'$]{derive quotient}
{\linferenceRule[eq]
  {\big(\der{\asdtrmax}\cdot \bsdtrmax-\asdtrmax\cdot\der{\bsdtrmax}\big) / \bsdtrmax^2}
  {\axkey{\der{\asdtrmax/\bsdtrmax}}}
}
{}
\cinferenceRuleStore[Dconst|$c'$]{derive constant}
{\linferenceRule[eq]
  {0}
  {\axkey{\der{\aconst}}}
  \hspace{3cm}
}
{\text{for numbers or constants~$\aconst$}}
\cinferenceRuleStore[Dvar|$x'$]{derive variable}
{\linferenceRule[eq]
  {\D{x}}
  {\axkey{\der{x}}}
}
{\text{for variable~$x\in\allvars$}}

\cinferenceRuleStore[DE|DE]{differential effect} 
{\linferenceRule[viuqe]
  {\axkey{\dbox{\pevolvein{\D{x}=\genDE{x}}{\ivr}}{\ausfml}}}
  {\dbox{\pevolvein{\D{x}=f(x)}{\ivr}}{\dbox{\axeffect{\Dupdate{\Dumod{\D{x}}{\genDE{x}}}}}{\ausfml}}}
}
{}%
\cinferenceRuleStore[DEax|DE]{differential effect} 
{\linferenceRule[viuqe]
  {\axkey{\dbox{\pevolvein{\D{x}=\genDE{x}}{q(x)}}{\ausfmlax}}}
  {\dbox{\pevolvein{\D{x}=f(x)}{q(x)}}{\dbox{\axeffect{\Dupdate{\Dumod{\D{x}}{\genDE{x}}}}}{\ausfmlax}}}
}
{}%


\cinferenceRuleStore[Dand|${\land}'$]{derive and}
{\linferenceRule[equiv]
  {\der{\asfml}\land\der{\bsfml}}
  {\axkey{\der{\asfml\land\bsfml}}}
}
{}
\cinferenceRuleStore[Dor|${\lor}'$]{derive or}
{\linferenceRule[equiv]
  {\der{\asfml}\land\der{\bsfml}}
  {\axkey{\der{\asfml\lor\bsfml}}}
}
{}
\cinferenceRuleStore[diffweaken|DW]{differential evolution domain} 
{\linferenceRule[viuqe]
  {\axkey{\dbox{\pevolvein{\D{x}=\genDE{x}}{\ivr}}{\ousfml[x]}}}
  {\dbox{\pevolvein{\D{x}=\genDE{x}}{\ivr}}{(\axeffect{\ivr}\limply \ousfml[x])}}
}
{}
\cinferenceRuleStore[diffweakenax|DW]{differential evolution domain} 
{\linferenceRule[viuqe]
  {\axkey{\dbox{\pevolvein{\D{x}=\genDE{x}}{q(x)}}{p(x)}}}
  {\dbox{\pevolvein{\D{x}=\genDE{x}}{q(x)}}{(\axeffect{q(x)}\limply p(x))}}
}
{}
\cinferenceRuleStore[dW|dW]{differential weakening}
{\linferenceRule[sequent]
  {\lsequent[g]{\ivr} {\ousfml[x]}}
  {\lsequent[g]{\Gamma} {\dbox{\pevolvein{\D{x}=f(x)}{\ivr}}{\ousfml[x]},\Delta}}
}
{}
\cinferenceRuleStore[DI|DI]{differential induction}
{\linferenceRule[lpmi]
  {\big(\axkey{\dbox{\pevolvein{\D{x}=\genDE{x}}{\ivr}}{\ousfml[x]}}
  \lbisubjunct \dbox{\ptest{\ivr}}{\ousfml[x]}\big)}
  {(\ivr\limply\dbox{\pevolvein{\D{x}=\genDE{x}}{\ivr}}{\axeffect{\der{\ousfml[x]}}})}
}
{}
\cinferenceRuleStore[DIax|DI]{differential induction}
{\linferenceRule[lpmi]
  {\big(\axkey{\dbox{\pevolvein{\D{x}=\genDE{x}}{q(x)}}{p(x)}}
  \lbisubjunct \dbox{\ptest{q(x)}}{p(x)}\big)}
  {(q(x)\limply\dbox{\pevolvein{\D{x}=\genDE{x}}{q(x)}}{\axeffect{\der{p(x)}}})}
}
{}
\cinferenceRuleStore[DIlight|DI]{differential induction}
{\linferenceRule[lpmi]
  {\big(\axkey{\dbox{\pevolvein{\D{x}=\genDE{x}}{\ivr}}{\ousfml[x]}}
  \lbisubjunct \dbox{\ptest{\ivr}}{\ousfml[x]}\big)}
  {\dbox{\pevolvein{\D{x}=\genDE{x}}{\ivr})}{\axeffect{\der{\ousfml[x]}}}}
}
{}

\cinferenceRuleStore[dI|dI]{differential invariant}
{\linferenceRule[sequent]
  {\lsequent[g]{\ivr}{\Dusubst{\D{x}}{\genDE{x}}{\der{F}}}}
  {\lsequent{F}{\dbox{\pevolvein{\D{x}=\genDE{x}}{\ivr}}{F}}}
}{}
\cinferenceRuleStore[DC|DC]{differential cut}
{\linferenceRule[lpmi]
  {\big(\axkey{\dbox{\pevolvein{\D{x}=\genDE{x}}{\ivr}}{\ousfml[x]}} \lbisubjunct \dbox{\pevolvein{\D{x}=\genDE{x}}{\ivr\land \axeffect{\ousfmlc[x]}}}{\ousfml[x]}\big)}
  {\dbox{\pevolvein{\D{x}=\genDE{x}}{\ivr}}{\axeffect{\ousfmlc[x]}}}
}
{}
\cinferenceRuleStore[DCax|DC]{differential cut}
{\linferenceRule[lpmi]
  {\big(\axkey{\dbox{\pevolvein{\D{x}=\genDE{x}}{q(x)}}{p(x)}} \lbisubjunct \dbox{\pevolvein{\D{x}=\genDE{x}}{q(x)\land \axeffect{r(x)}}}{p(x)}\big)}
  {\dbox{\pevolvein{\D{x}=\genDE{x}}{q(x)}}{\axeffect{r(x)}}}
}
{}
\cinferenceRuleStore[dC|dC]{differential cut}
{\linferenceRule[sequent]
  {\lsequent[L]{}{\dbox{\pevolvein{\D{x}=\genDE{x}}{\ivr}}{\axeffect{\cusfml}}}
  &\lsequent[L]{}{\dbox{\pevolvein{\D{x}=\genDE{x}}{(\ivr\land \axeffect{\cusfml})}}{\ousfml[x]}}}
  {\lsequent[L]{}{\dbox{\pevolvein{\D{x}=\genDE{x}}{\ivr}}{\ousfml[x]}}}
}{}
\cinferenceRuleStore[DGanyode|DG]{differential ghost variables (unsound!)}
{\linferenceRule[viuqe]
  {\axkey{\dbox{\pevolvein{\D{x}=\genDE{x}}{\ivr}}{\ousfml[x]}}}
  {\lexists{y}{\dbox{\pevolvein{\D{x}=\genDE{x}\syssep\axeffect{\D{y}=g(x,y)}}{\ivr}}{\ousfml[x]}}}
}
{}
\cinferenceRuleStore[DG|DG]{differential ghost variables}
{\linferenceRule[viuqe]
  {\axkey{\dbox{\pevolvein{\D{x}=\genDE{x}}{\ivr}}{\ousfml[x]}}}
  {\lexists{y}{\dbox{\pevolvein{\D{x}=\genDE{x}\syssep\axeffect{\D{y}=a(x)\cdot y+b(x)}}{\ivr}}{\ousfml[x]}}}
}
{}
\cinferenceRuleStore[DGax|DG]{differential ghost variables}
{\linferenceRule[viuqe]
  {\axkey{\dbox{\pevolvein{\D{x}=\genDE{x}}{q(x)}}{p(x)}}}
  {\lexists{y}{\dbox{\pevolvein{\D{x}=\genDE{x}\syssep\axeffect{\D{y}=a(x)\cdot y+b(x)}}{q(x)}}{p(x)}}}
}
{}
\cinferenceRuleStore[dG|dG]{dG}
{\linferenceRule[sequent]
  {\lsequent[L]{} {\lexists{y}{\dbox{\pevolvein{\D{x}=f(x)\syssep\axeffect{\D{y}=a(x)\cdot y+b(x)}}{\oivr[x]}}{\ousfml[x]}}}
  }
  {\lsequent[L]{} {\dbox{\pevolvein{\D{x}=f(x)}{\oivr[x]}}{\ousfml[x]}}}
}
{}%

\dinferenceRuleStore[assignbeqr|$\dibox{:=}_=$]{assignb}
  {\linferenceRule[sequent]
    {\lsequent[L]{y=\austrm} {p(y)}}
    {\lsequent[L]{} {\dbox{\pupdate{\umod{x}{\austrm}}}{p(x)}}}
  }
  {\text{$y$ new}}

\cinferenceRuleStore[DSax|DS]{(constant) differential equation solution} 
{\linferenceRule[viuqe]
  {\axkey{\dbox{\pevolvein{\D{x}=\aconst}{q(x)}}{p(x)}}}
  {\lforall{t{\geq}0}{\big((\lforall{0{\leq}s{\leq}t}{q(x+\aconst\itimes s)}) \limply \dbox{\pupdate{\pumod{x}{x+\aconst\itimes t}}}{p(x)}\big)}}
}
{}

\dinferenceRuleStore[DIeq0|DI]{differential invariant axiom}
{\linferenceRule[lpmi]
  {\big(\axkey{\dbox{\pevolve{\D{x}=\genDE{x}}}{\,\astrm=0}} \lbisubjunct \astrm=0\big)}
  {\dbox{\pevolve{\D{x}=\genDE{x}}}{\,\axeffect{\der{\astrm}=0}}}
}
{}
\dinferenceRuleStore[diffindeq0|dI]{differential invariant $=0$ case}
{\linferenceRule[sequent]
  {\lsequent{~}{\Dusubst{\D{x}}{\genDE{x}}{\der{\astrm}}=0}}
  {\lsequent{\astrm=0}{\dbox{\pevolve{\D{x}=\genDE{x}}}{\astrm=0}}}
}{}
\cinferenceRuleStore[Liec|dI$_c$]{}
{\linferenceRule
  {\lsequent{\ivr}{\Dusubst{\D{x}}{\genDE{x}}{\der{\astrm}}=0}}
  {\lsequent{}{\lforall{c}{\big(\astrm=c \limply \dbox{\pevolvein{\D{x}=\genDE{x}}{\ivr}}{\astrm=c}\big)}}}
}{}


\cinferenceRuleStore[DIeq|DI$_=$]{differential induction $=$ case}
{\linferenceRule[lpmi]
  {\big(\axkey{\dbox{\pevolvein{\D{x}=\genDE{x}}{\ivr}}{\asdtrm=\bsdtrm}}
  \lbisubjunct \dbox{\ptest{\ivr}}{\asdtrm=\bsdtrm}\big)}
  {\dbox{\pevolvein{\D{x}=\genDE{x}}{\ivr})}{\axeffect{\der{\asdtrm}=\der{\bsdtrm}}}}
}
{}

\dinferenceRuleStore[diffindgen|dI']{differential invariant}
{\linferenceRule[sequent]
  {\lsequent[L]{}{\inv}
  &\lsequent[g]{\ivr}{\Dusubst{\D{x}}{\genDE{x}}{\der{\inv}}}
  &\lsequent[g]{\inv}{\psi}
  }
  {\lsequent[L]{}{\dbox{\pevolvein{\D{x}=\genDE{x}}{\ivr}}{\psi}}}
}{}
\cinferenceRuleStore[diffindunsound|dI$_{??}$]{unsound}
{\linferenceRule[sequent]
  {\lsequent{\ivr\land\inv}{\Dusubst{\D{x}}{\genDE{x}}{\der{\inv}}}}
  {\lsequent{\inv}{\dbox{\pevolvein{\D{x}=\genDE{x}}{\ivr}}{\inv}}}
}{}

\dinferenceRuleStore[introaux|iG]{introduce discrete ghost variable}
{\linferenceRule[sequent]
  {\lsequent[L]{}{\dbox{\axeffect{\pupdate{\pumod{y}{\astrm}}}}{p}}}
  {\lsequent[L]{} {p}}
}{\text{$y$ new}}
\dinferenceRuleStore[diffaux|dA]{differential auxiliary variables}
{\linferenceRule[sequent]
  {\lsequent[\globalrule]{}{\inv\lbisubjunct\lexists{y}{G}}
  &\lsequent{G} {\dbox{\pevolvein{\D{x}=\genDE{x}\syssep\axeffect{\D{y}=a(x)\cdot y+b(x)}}{\ivr}}{G}}}
  {\lsequent{\inv} {\dbox{\pevolvein{\D{x}=\genDE{x}}{\ivr}}{\inv}}}
}{}
\cinferenceRuleStore[randomd|$\didia{{:}*}$]{nondeterministic assignment}
{\linferenceRule[equiv]
  {\lexists{x}{\ousfml[x]}}
  {\axkey{\ddiamond{\prandom{x}}{\ousfml[x]}}}
}{}
\cinferenceRuleStore[randomb|$\dibox{{:}*}$]{nondeterministic assignment}
{\linferenceRule[equiv]
  {\lforall{x}{\ousfml[x]}}
  {\axkey{\dbox{\prandom{x}}{\ousfml[x]}}}
}{}

\cinferenceRuleStore[box|$\dibox{\cdot}$]{box axiom}
{\linferenceRule[equiv]
  {\lnot\ddiamond{\ausprg}{\lnot\ausfml}}
  {\axkey{\dbox{\ausprg}{\ausfml}}}
}
{}
\cinferenceRuleStore[assignd|$\didia{:=}$]{assignment / substitution axiom}
{\linferenceRule[equiv]
  {p(\genDJ{x})}
  {\axkey{\ddiamond{\pupdate{\umod{x}{\genDJ{x}}}}{p(x)}}}
}
{}
\cinferenceRuleStore[evolved|$\didia{'}$]{evolve}
{\linferenceRule[equiv]
  {\lexists{t{\geq}0}{\ddiamond{\pupdate{\pumod{x}{\solf(t)}}}{p(x)}}\hspace{1cm}}
  {\axkey{\ddiamond{\pevolve{\D{x}=\genDE{x}}}{p(x)}}}
}{\m{\D{\solf}(t)=\genDE{\solf}}}
\cinferenceRuleStore[evolveind|$\didia{'}$]{evolve}
{\linferenceRule[equiv]
  {\lexists{t{\geq}0}{\big((\lforall{0{\leq}s{\leq}t}{q(\solf(s))}) \land 
  \ddiamond{\pupdate{\pumod{x}{\solf(t)}}}{p(x)}\big)}}
  {\axkey{\ddiamond{\pevolvein{\D{x}=\genDE{x}}{q(x)}}{p(x)}}}
}{\m{\D{\solf}(t)=\genDE{\solf}}}
\cinferenceRuleStore[testd|$\didia{?}$]{test}
{\linferenceRule[equiv]
  {\ivr \land \ausfml}
  {\axkey{\ddiamond{\ptest{\ivr}}{\ausfml}}}
}{}
\cinferenceRuleStore[choiced|$\didia{\cup}$]{axiom of nondeterministic choice}
{\linferenceRule[equiv]
  {\ddiamond{\ausprg}{\ausfml} \lor \ddiamond{\busprg}{\ausfml}}
  {\axkey{\ddiamond{\pchoice{\ausprg}{\busprg}}{\ausfml}}}
}{}
\cinferenceRuleStore[composed|$\didia{{;}}$]{composition}
{\linferenceRule[equiv]
  {\ddiamond{\ausprg}{\ddiamond{\busprg}{\ausfml}}}
  {\axkey{\ddiamond{\ausprg;\busprg}{\ausfml}}}
}{}
\cinferenceRuleStore[iterated|$\didia{{}^*}$]{iteration/repeat unwind pre-fixpoint, even fixpoint}
{\linferenceRule[equiv]
  {\ausfml \lor \ddiamond{\ausprg}{\ddiamond{\prepeat{\ausprg}}{\ausfml}}}
  {\axkey{\ddiamond{\prepeat{\ausprg}}{\ausfml}}}
}{}
\cinferenceRuleStore[duald|$\didia{{^d}}$]{dual}
{\linferenceRule[equiv]
  {\lnot\ddiamond{\ausprg}{\lnot\ausfml}}
  {\axkey{\ddiamond{\pdual{\ausprg}}{\ausfml}}}
}{}
\cinferenceRuleStore[dualb|$\dibox{{^d}}$]{dual}
{\linferenceRule[equiv]
  {\lnot\dbox{\ausprg}{\lnot\ausfml}}
  {\axkey{\dbox{\pdual{\ausprg}}{\ausfml}}}
}{}
\cinferenceRuleStore[FP|FP]{iteration is least fixpoint / reflexive transitive closure RTC, equivalent to invind in the presence of R}
{\linferenceRule[formula]
  {\ausfml \lor \ddiamond{\ausprg}{\busfml} \limply \busfml}
  {\ddiamond{\prepeat{\ausprg}}{\ausfml} \limply \busfml}
}{}
\cinferenceRuleStore[invindg|ind]{inductive invariant for games}
{\linferenceRule[formula]
  {\ausfml\limply\dbox{\ausprg}{\ausfml}}
  {\ausfml\limply\dbox{\prepeat{\ausprg}}{\ausfml}}
}{}

\dinferenceRuleStore[dchoiced|$\didia{{\cap}}$]{Demon's choice}
{
\axkey{\ddiamond{\dchoice{\ausprg}{\busprg}}{\ausfml}} \lbisubjunct \ddiamond{\ausprg}{\ausfml} \land \ddiamond{\busprg}{\ausfml}
}{}
\dinferenceRuleStore[dchoiceb|$\dibox{{\cap}}$]{Demon's choice}
{
\axkey{\dbox{\dchoice{\ausprg}{\busprg}}{\ausfml}} \lbisubjunct \dbox{\ausprg}{\ausfml} \lor \dbox{\busprg}{\ausfml}
}{}
\dinferenceRuleStore[diterateb|$\dibox{\drepeat{}}$]{Demon's repetition}
{\linferenceRule[equiv]
  {\ausfml \lor \dbox{\ausprg}{\dbox{\drepeat{\ausprg}}{\ausfml}}}
  {\axkey{\dbox{\drepeat{\ausprg}}{\ausfml}}}
}{}
\dinferenceRuleStore[diterated|$\didia{\drepeat{}}$]{Demon's repetition}
{\linferenceRule[equiv]
  {\ausfml \land \ddiamond{\ausprg}{\ddiamond{\drepeat{\ausprg}}}{\ausfml}}
  {\axkey{\ddiamond{\drepeat{\ausprg}}{\ausfml}}}
}{}
\dinferenceRuleStore[dinvindg|ind$\drepeat{}$]{inductive invariant for games}
{\linferenceRule[formula]
  {\ausfml\limply\ddiamond{\ausprg}{\ausfml}}
  {\ausfml\limply\ddiamond{\drepeat{\ausprg}}{\ausfml}}
}{}
\dinferenceRuleStore[dFP|FP$\drepeat{}$]{dual iteration is least fixpoint in Demon's winning strategy}
{\linferenceRule[formula]
  {\ausfml \lor \dbox{\ausprg}{\busfml} \limply \busfml}
  {\dbox{\drepeat{\ausprg}}{\ausfml} \limply \busfml}
}{}

\cinferenceRuleStore[US|US]{uniform substitution}
{\linferenceRule[formula]
  {\phi}
  {\applyusubst{\sigma}{\phi}}
}{}

\cinferenceRuleStore[linequs|$\exists$lin]{linear equation uniform substitution}
{\linferenceRule[impl]
  {b\neq0}
  {\big(\lexists{x}{(b\cdot x+c=0 \land q(x))}
  \lbisubjunct {q(-c/b)}\big)}
}{}

\dinferenceRuleStore[FA|FA]{First arrival}
{\ddiamond{\prepeat{\ausprg}}{\ausfml} \limply \ausfml \lor \ddiamond{\prepeat{\ausprg}}{(\lnot\ausfml\land\ddiamond{\ausprg}{\ausfml})}
}{}
\dinferenceRuleStore[Mor|M]{monotonicity axiom}
{\ddiamond{\ausprg}{(\ausfml\lor\busfml)}
\lbisubjunct
\ddiamond{\ausprg}{\ausfml} \lor \ddiamond{\ausprg}{\busfml}
}{}
\dinferenceRuleStore[VK|VK]{vacuous possible $\dbox{}{}$}
{\linferenceRule[impl]
  {p}
  {(\dbox{\ausprg}{\ltrue}{\limply}\dbox{\ausprg}{p})}
  \qquad
}{\m{\freevars{p}\cap \boundvars{\ausprg}=\emptyset}}
\dinferenceRuleStore[R|R]{Regular}
{\linferenceRule[formula]
  {\ausfml_1\land\ausfml_2\limply\busfml}
  {\dbox{\ausprg}{\ausfml_1} \land \dbox{\ausprg}{\ausfml_2} \limply \dbox{\ausprg}{\busfml}}
}{}

\title{Refactoring-as-Propositions: Proved Refactoring of Hybrid Systems via Proved Refinements}
\titlerunning{Refactoring-as-Propositions}

\author{Enguerrand Prebet\orcidID{0009-0008-0160-5219} \and
Andr\'e Platzer\orcidID{0000-0001-7238-5710}}

\authorrunning{E. Prebet, A. Platzer}

\institute{
  Karlsruhe Institute of Technology, Karlsruhe, Germany
  \email{\{enguerrand.prebet,platzer\}@kit.edu}
}

\maketitle              

\begin{abstract}
  Cyber-physical systems are inherently complex due to their connection between software and the physical world.
  Iterative design reduces their complexity, but increases the need to repeatedly recheck their safety in full after every change.
  We introduce the \emph{refactoring-as-propositions} principle in which refactorings are represented as propositions along with a method for proving that system refactorings preserve their required properties by transferring the proof along the respective modification.
  It is based on \emph{differential refinement logic} (\dRL), with which one can simultaneously and rigorously refer to properties of the systems and the relation between a refactored system and its original version.
  Refinements represent a uniform way of expressing different types of hybrid system refactorings, including those that introduce auxiliary variables.
  Furthermore, we show how these refactorings can be proved automatically, and/or reduce to a modular proof solely about the local change rather than about the whole system.
\keywords{Refactoring \and Differential dynamic logic \and Refinement \and Hybrid systems}
\end{abstract}

\section{Introduction}

Cyber-physical systems (CPS) are crucial but subtle to get right.
Their design needs to deal with several interacting features, making it challenging to get them all right all at the same time.
That is where iterative designs help when starting with simpler systems, figuring them out, and then iteratively adding more and more complicated features.
While this reduces the effort per design, the cost is that systems need to be analyzed for safety repeatedly, often with similar but different arguments.
CPS refactoring techniques \cite{DBLP:conf/fm/MitschQP14} offer specific refactoring operations on CPS that preserve safety when only certain differences are reproved.
But the correctness of the safety verification of the final system depends on the correct implementation and application of every refactoring operation.

This paper shows that all of those prior CPS refactoring operations can, instead, be derived syntactically as propositions in \emph{differential refinement logic} (\dRL) \cite{DBLP:conf/lics/LoosP16,DBLP:conf/ijcar/PrebetP24}.
Instead of verifying refactoring algorithms for specific properties, \emph{refactorings-as-propositions} are refinements proved in \dRL, and can be reused independently of the systems and properties considered.
The correctness of such refactoring applications then boils down to \dRL's refinements-to-properties axiom \irref{boxref}.
Since \dRL has a sound proof calculus \cite{DBLP:conf/lics/LoosP16}, syntactic derivations in \dRL give a sound implementation of CPS refactoring techniques.
Furthermore, \dRL's proof calculus is a uniform substitution calculus \cite{DBLP:conf/ijcar/PrebetP24}, so that the resulting derivations can be implemented in the prover \KeYmaeraX solely on top of its small soundness-critical prover microkernel.
This paves the way to provably correct automatic CPS refactoring technique implementations.

While most standard refactorings are usually meant to preserve the same semantics, it gets subtle for refactorings that introduce fresh auxiliary variables in imperative programs, which \dRL can model.
Consider adding the program $\ptest{x = 0};\pupdate{\umod{x}{1}}$ in a system where $x$ is fresh.
This snippet first tests if the value of $x$ is $0$ and aborts execution otherwise before setting the value of $x$ to $1$.
If neither the rest of the system nor the postcondition of interest use $x$, the corresponding execution should be unaffected, and so should be the validity of the postcondition.
Except this reasoning \emph{only} holds if the snippet is not added within a loop.
In a loop, the second iteration of the loop would abort since the test $x = 0$ would fail now that $x$ has been set to $1$.
This drastically alters the behavior despite not interacting with any existing variable!
Conversely, an assignment $\pupdate{\umod{x}{\astrm}}$ where $\astrm$ is any polynomial expression can safely be added no matter what variables occur in $\astrm$.
Having a formal proof for handling correct refactorings, based on the notion of \emph{ghost refinement}, gives a general understanding when such subtle changes are sound, while remaining automatable.

\paragraph{Contributions.}
We introduce the refactoring-as-propositions principle by providing techniques for transforming proved refinements in \dRL to generic, proved refactoring techniques for hybrid programs.
We do so for three classes of refinements: globally sound refinements, locally sound refinements (or conditional refinements), and ghost refinements, all of which require negligible changes to \dRL's axiomatization.
It allows for an implementation of these techniques in the theorem prover \KeYmaeraX \cite{DBLP:conf/cade/FultonMQVP15}, enabling sound automatic refactorings via refinement, while keeping the soundness-critical kernel small.
We show how these different classes of refactorings can be applied in case studies.

\paragraph{Related Works.}
Refactorings for differential dynamic logic (\dL) \cite{DBLP:conf/lics/Platzer12a}, which \dRL is an extension of, has been studied by adding a separate layer on top of the logic \cite{DBLP:conf/fm/MitschQP14}.
Other works include compositions \cite{DBLP:conf/acsd/LunelBT17} and parallelism \cite{DBLP:conf/cade/BriegerMP23}, which decompose a hybrid system into parts.
Thus, if a component is updated, then only part of the proof needs to be updated if it still satisfies the same contract.
The Event-B method \cite{DBLP:books/daglib/0024570} initially designed for discrete programs has been extended to continuous evolutions \cite{DBLP:conf/asm/BanachZSW12b,DBLP:conf/tase/DupontAPS19}.
The base logic of Event-B is not expressive enough for modelling continuous evolution so it must rely on theories \cite{DBLP:conf/birthday/ButlerM13}.
Hybrid Event-B \cite{DBLP:conf/asm/AfendiLM20,DBLP:conf/asm/AbrialSZ12} builds upon it to handle hybrid programs.
It can be used with the Rodin tool \cite{DBLP:journals/sttt/AbrialBHHMV10,DBLP:journals/scp/SuAZ14} to generate the proof obligations when proving refinements.
A detailed comparison between Event-B and \dL is provided in the literature \cite{DBLP:conf/asm/DupontAPS18}.

\section{Background: Differential Refinement Logic}
This section provides an overview of differential refinement logic (\dRL) \cite{DBLP:conf/lics/2016,DBLP:conf/ijcar/PrebetP24,DBLP:journals/scp/Platzer25}.
After recalling its syntax and semantics, we briefly present how the logic can be used for proofs, both in theory and using the theorem prover \KeYmaeraX \cite{DBLP:conf/cade/FultonMQVP15}.
\subsection{Syntax}
At its core, \dRL is a first-order multi-modal logic where the modalities are interpreted by hybrid programs, and with a dedicated refinement operator.
\dRL terms evaluate over real states, that is, valuations from variables to values in \R.
\begin{definition}[Term]
  \emph{Terms} are defined by the following grammar where $\astrm,\bstrm$ are terms, $x$ is variable, and $n$ is an integer.
\[
	\astrm, \bstrm \Coloneqq x \OR n
   \OR \astrm + \bstrm \OR \astrm \cdot \bstrm \OR \der{\astrm}
\]
\end{definition}
Beyond standard arithmetic, terms can also have differentials $\der{\astrm}$.
This is to account for the continuous fragment of hybrid programs.
For that purpose, each variable $x$ is associated a differential variable $\D{x}$.
Their behaviors are linked during differential equations, and differential variables are essential for the soundness of axioms like \irref{DIeq0} and \irref{refOde}.

\begin{definition}[Formula]
  \emph{Formulas} are defined by the following grammar where $\astrm,\bstrm$ are terms, $\asfml,\bsfml$ are formulas and $\asprg,\bsprg$ are hybrid programs (\rref{def:hp}).
\[
    \asfml, \bsfml \Coloneqq \astrm \leq \bstrm
    \OR \lnot \asfml \OR \asfml \land \bsfml \OR \lforall{x}{\asfml} \OR \dbox{\asprg}{\asfml} \OR \asprg \refines \bsprg
\]
\end{definition}
\dRL formulas may be true on only a subset of states.
We focus on the additional constructs not in first-order logic which are the modalities and refinements.
A formula $\dbox{\asprg}{\asfml}$ expresses that the formula $\asfml$ always holds after executing the hybrid program $\asprg$.
Thus the formula $\asfml \limply \dbox{\asprg}{\bsfml}$ can be understood as the standard Hoare triple $\{\asfml\} \asprg \{\bsfml\}$ but for hybrid systems.
By duality, there is another modality $\ddiamond{\asprg}{\asfml} \defeq \lnot \dbox{\asprg}{\lnot\asfml}$ that holds in a state $s$, if there exists an execution of $\asprg$ starting from $s$ whose final state satisfies $\asfml$.

Refinements $\asprg \refines \bsprg$ on the other hand relate the output states of $\asprg$, not to a formula $\asfml$ but to the output states of another program $\bsprg$.
A refinement $\asprg \refines \bsprg$ holds in a state $s$, if all states reachable by executing $\asprg$ from $s$ are also reachable by some execution of $\bsprg$ from $s$.
Program equivalence, written $\asprg \prgeq \bsprg$, simply denotes the symmetric version of refinement, and can be seen as syntactic sugar for $\asprg \refines \bsprg \land \bsprg \refines \asprg$.
Having refinements as a construct directly in the logic allows to reason and prove partial refinements, i.e., refinements that only holds under some assumption or after some execution like $\dbox{\asprg}{(\bsprg\refines\csprg)}$.
This idea is central for the refactoring techniques of \rref{sec:loc_ref}.

\begin{definition}[Hybrid Program]\label{def:hp}
  \emph{Hybrid program} are defined by the following grammar where $\astrm$ are terms, $\bsfml$ are formulas and $\asprg,\bsprg$ are hybrid programs.
\[
	\asprg, \bsprg \Coloneqq
  {\ptest{\bsfml}} \OR \pupdate{\pumod{x}{\astrm}} \OR \prandom{x} \OR \pode{\D{x}=\astrm}{\bsfml} \OR \pchoice{\asprg}{\bsprg} \OR \asprg;\bsprg \OR \prepeat{\asprg}
\]
\end{definition}

Hybrid programs form a Kleene Algebra with tests \cite{DBLP:journals/toplas/Kozen97} and represent relations from input states to output states.
The test $\ptest{\asfml}$ behaves as a noop in states where $\asfml$ holds and aborts otherwise.
The assignment $\pupdate{\umod{x}{\astrm}}$ updates the value of $x$ in the current state to the computed value of $\astrm$.
Similarly, the nondeterministic assignment $\prandom{x}$ updates the value of $x$ to any possible real value.
Practically, nondeterministic assignment are usually followed by a test to limit the possible values, e.g., $\prandom{x};\ptest{(0 \leq x \land x \leq 10)}$ updates $x$ to be any real inside the interval $[0,10]$.
Nondeterministic choice $\pchoice{\asprg}{\bsprg}$ can behave as $\asprg$ or as $\bsprg$.
Sequence composition $\asprg;\bsprg$ behaves as $\asprg$ first and then proceeds as $\bsprg$.
The \emph{differential equation} \(\pode{\D{x}=\astrm}{\bsfml}\) behaves like a continuous evolution where both the differential equation \(\D{x}=\astrm\) and the evolution domain $\bsfml$ holds and can be extended for multiple ODEs.
During the evolution, both the value of the variables $x$ and $\D{x}$ are modified.
The duration of the evolution is nondeterministic:
it may stop immediately, in which case only $\D{x}$ is modified, or continue up until $\bsfml$ is about to be violated, etc.
The evolution domain $\bsfml$ can be used to model physical reality -- e.g., a ball stays above the floor -- or internal triggers by stopping the evolution so that another part of the system is executed before continuing.
Finally, the Kleene star $\prepeat{\asprg}$ executes $\asprg$ a nondeterministic number of times, possibly none.

The highly nondeterministic behavior of hybrid programs pairs well with the box modality as it ensures the postcondition for all possible executions.
More concrete, deterministic, programs can then be proved safe by refining the nondeterministic one.
Another benefit of nondeterminism is regarding differential equations.
By definition $\dbox{\pode{\D{x}=\astrm}{\bsfml}}{\asfml}$ means that $\asfml$ only holds in the final state.
However, since the duration can be arbitrary, proving such formula amounts to proving that $\asfml$ holds throughout the whole continuous evolution.

\subsection{Calculus}

\dRL comes with an axiomatization that is used to prove validity of its formulas.
Given that formulas may hold only on subsets of states, we say that a formula is \emph{valid} if it holds for all states.
Similarly, a proof rule is sound if the validity of its premises implies the validity of its conclusion.

In \rref{fig:dRL}, we present a subset of axioms and rules of \dRL \cite{DBLP:conf/ijcar/PrebetP24} to highlight its expressiveness and reasoning capabilities.
\begin{figure}
\begin{calculuscollection}
\begin{calculus}
  
\cinferenceRule[assignb|$\dibox{:=}$]{assignment / substitution axiom}
{\linferenceRule[equiv]
  {\asfml(\astrm)}
  {\axkey{\dbox{\pupdate{\umod{x}{\astrm}}}{\asfml(x)}}}
}
{}

\dinferenceRule[DIeq0|DI]{differential invariant axiom}
{\linferenceRule[lpmi]
  {\big(\axkey{\dbox{\pevolve{\D{x}=\astrm}}{\,\bstrm=0}} \lbisubjunct \bstrm=0\big)}
  {\dbox{\pevolve{\D{x}=\astrm}}{\,\axeffect{\der{\bstrm}=0}}}
}
{}

\cinferenceRule[boxref|{[$\refines$]}]{Box refinement}
{\linferenceRule[impl]
  {\asprg \refines \bsprg}
  {(\dbox{\bsprg}{\asfml}\limply \axkey{\dbox{\asprg}\asfml})}
}{}

\cinferenceRule[leqtrans|$\refines_{t}$]{Transitivity of $\refines$}
{\linferenceRule[impl]
  {\asprg \refines \bsprg}
  {(\bsprg \refines \csprg \limply \axkey{\asprg \refines \csprg})}
}{}

\cinferenceRule[sequence|$;_{\refines}$]{sequence axiom}
{\linferenceRule[lpmi]
{\axkey{\asprg;\bsprg \refines \csprg;\dsprg}}
{\asprg \refines \csprg \land \dbox{\asprg}{\bsprg \refines \dsprg}}
}{}

\cinferenceRule[refOde|ODE]{ODE axiom}
{\linferenceRule[equiv]
  {\dbox{\pode{\D{x}=\astrm}{\asfml}}{(\D{x}=\bstrm \land \bsfml)}}
  {(\axkey{\pode{\D{x}=\astrm}{\asfml}\refines\pode{\D{x}=\bstrm}{\bsfml}})}
}{}
\end{calculus}
\begin{calculus}
\cinferenceRule[G|G]{$\dbox{}{}$ generalization} 
{\linferenceRule[formula]
  {\asfml}
  {\dbox{\asprg}{\asfml}}
}{}%

\cinferenceRule[composeb|$\dibox{{;}}$]{composition} 
{\linferenceRule[equiv]
  {\dbox{\asprg}{\dbox{\bsprg}{\asfml}}}
  {\axkey{\dbox{\ausprg;\bsprg}{\asfml}}}
}{}%

\cinferenceRule[leqrefl|$\refines_{r}$]{Reflexivity of $\refines$}
{\linferenceRule[leq]
  {\asprg}
  {\asprg}
}{}

\cinferenceRule[test|?]{test axiom}
{\linferenceRule[equiv]
{(\asfml \limply \bsfml)}
{(\axkey{\ptest{\asfml} \refines  {\ptest{\bsfml}}})}
}{}
\end{calculus}
\end{calculuscollection}
\caption{Some \dRL axioms and rules}
\label{fig:dRL}
\end{figure}
If the differential of a term is 0 during continuous evolution, it is a differential invariant, and the value of the term is determined by its initial value (\irref{DIeq0}).
Axioms \irref{composeb} and \irref{sequence} decompose a proof for a modality and refinement into a proof for their respective constituant.
Axiom \irref{boxref} says that if all outputs of a program $\bsprg$ satisfy a formula $\asfml$, then all outputs of any refined program $\asprg$ also do.
Refinement is reflexive (\irref{leqrefl}) and transitive (\irref{leqtrans}).
Axiom \irref{test} relates a refinement between tests and their respective formula.
A refinement between differential equations only holds if the equation of the more general holds throughout the execution of the more refined program (\irref{refOde}).
When the domain does not change, this gives a program equivalence instead.
Additional axioms and rules used but not central to this paper are given in \rref{app:axs}.
The basic usage of the axioms and rules is to use propositional reasoning to reduce the expression in \axkey{blue} by a simpler one, where the hybrid programs are decomposed in simpler subprograms.

\paragraph{\KeYmaeraX.}

The theorem prover \KeYmaeraX \cite{DBLP:conf/cade/FultonMQVP15} implements \dL, and more recently its microkernel now also supports its extension \dRL\cite{DBLP:conf/ijcar/PrebetP24}.
It uses a uniform substitution calculus version\cite{DBLP:conf/cade/Platzer19,DBLP:conf/ijcar/PrebetP24} of the one presented here.
Axioms are expressed directly in the logic, with a unique rule -- the uniform substitution rule -- recovering all its sound instantiations.
Uniform substitution is the key to \KeYmaeraX's small soundness-critical microkernel.
Basing the refactoring techniques on the \dRL proofs thus provides more automation, while limiting the changes done to the kernel to the new axioms required: \rref{lem:cprge} and \rref{lem:ghost_ode}.

\section{Three Levels of Refactoring}
The objective of a CPS refactoring is to change a part of a program without affecting the safety of the program or its resulting proof.
The reason such change is valid can be justified on three different levels, from a local argument that can be proved automatically to a more general reasoning that requires fixing the proof to accommodate the change. The proofs from this section are accessible in \rref{app:312} and \rref{app:33}.

\subsection{Global Refinement}
\label{sec:glob_ref}

A refinement between two systems always ensures that a proof using one correctly transfers to a proof of the other.
It is highlighted by the axioms \irref{boxref} and \irref{leqtrans}.
In both cases, a proof for a refinement $\asprg \refines \bsprg$ ensures that the formula in \axkey{blue} can be proved by proving the same formula but with the program $\bsprg$ instead.
Thus, any refactoring from $\bsprg$ to $\asprg$ immediately preserves all the corresponding properties, whether about safety or refinement.
Note that the axiom \irref{leqtrans} is equivalent -- up-to renaming and propositional reasoning -- to \(\bsprg \refines \asprg \limply (\csprg \refines \bsprg \limply \axkey{\csprg \refines \asprg})\) which allows reasoning on the right-hand side too, but requires a proof of the converse refinement: \(\bsprg \refines \asprg\).
\dRL axiomatization contains all axioms -- and thus all facts -- about Kleene Algebra with tests, including for instance:

\begin{calculuscollection}
\begin{calculus}
\cinferenceRule[choiceL|$\cup_l$]{Refinement choice left}
{\linferenceRule[leq]
  {\pchoice{\asprg}{\bsprg}}
  {\asprg}
}{}
\cinferenceRule[seqidl|$;_{\text{id-l}}$]{Left neutral of $;$}
{\linferenceRule[eq]
  {\asprg}
  {\axkey{\ptest{\ltrue};\asprg}}
}{}
\end{calculus}
\qquad
\begin{calculus}
    \cinferenceRuleQuoteDef{seqdistl}
    \cinferenceRuleQuoteDef{seqdistr}
\end{calculus}
\end{calculuscollection}

\paragraph{Congruence.}
On their own, the axioms \irref{leqtrans} and \irref{boxref} are limited to top-level refinements about the whole program.
This can be corrected to extend the applicability of global refinement.
If the refinement is not on the full program but only a subcomponent, this principle can be extended via general congruence rules.
\rref{fig:glob_ref} presents one such refactoring extending \irref{boxref} for subcomponents.
\irlabel{Ax|Ax}
\begin{figure}
\begin{sequentdeduction}[array]
\linfer[boxref]{
  \lsequent{\Gamma}{\dbox{C(\bsprg)}{\asfml}}
  ! \linfer{
    \linfer{
      \linfer[Ax]{\lclose}
      {\lsequent{}{\asprg \refines \bsprg}}
    }{&\quad\vdots \text{ ind. on }C}
  }{\lsequent{\Gamma}{C(\asprg) \refines C(\bsprg)}}
}{\lsequent{\Gamma}{\dbox{C(\asprg)}{\asfml}}}
\end{sequentdeduction}
\caption{Refactor template for box in context via global refinement}
\label{fig:glob_ref}
\end{figure}

Its soundness revolves around the notion of monotone (resp.\ antitone) contexts.
Contexts are expressions (formulas or programs) with a hole.
The hole $\cdot_p$ (resp.\ $\cdot_f$) expects a program (resp.\ a formula), and we denote by $C(e)$ the expression obtained by substituting the hole by the expression $e$ assuming the kind are respected.
Monotone (resp.\ antitone) contexts are specific contexts that preserve (resp.\ commute) implications and refinements when applied.
The polarity of contexts expresses whether they are monotone ($+$) or antitone ($-$).
\begin{definition}[Monotone context]\label{def:mono_ctx}
  We define simultaneously monotone and antitone contexts where both the hole and the resulting expression can be formulas or programs as $C_{k_1 k_2}^\pm$, where $\pm \in \{+,-\}$ denotes the polarity, $k_1 \in \{f,p\}$ the kind of the resulting expression, and $k_2 \in \{f,p\}$ the kind of the hole.
  \begin{align*}
    C_{kk}^+ & \Coloneqq \cdot_k\\
    C_{pk}^\pm & \Coloneqq \asprg \OR {\ptest{C_{fk}^\pm}} \OR \pode{\D{x}=\astrm}{C_{fk}^\pm} \OR C_{pk}^\pm;C_{pk}^\pm \OR \pchoice{C_{pk}^\pm}{C_{pk}^\pm} \OR \prepeat{C_{pk}^\pm}\\
    C_{fk}^\pm & \Coloneqq \asfml \OR C_{fk}^\pm \land C_{fk}^\pm \OR \lnot C_{fk}^\mp \OR \lforall{x}C_{fk}^\pm \OR \dbox{C_{pk}^\mp}{C_{fk}^\pm} \OR C_{pk}^\mp \refines C_{pk}^\pm
  \end{align*}
\end{definition}
The kind $k \in \{f,p\}$ specifies the kind of the hole (formula or program), e.g., the hole $\cdot_p$ is a program to program context, i.e., $C_{pp}^+$.
$\mp$ stands for the polarity opposite to $\pm$, e.g., for antitone program to formula contexts the grammatical rule for box modalities is $C_{fp}^- \Coloneqq \dbox{C_{pp}^+}{C_{fp}^-}$.
The only constructs switching polarities are the negation, the box modality (for the program), and the refinement (for the left-hand side program).
It is implicitly extended to derived constructs, e.g., \(C_{fk}^\mp \limply C_{fk}^\pm\) as \(\asfml \limply \bsfml \equiv \lnot (\asfml \land \lnot \bsfml)\).
Note that these contexts may not have a hole under an equivalence, e.g., \(\cdot_f \lbisubjunct \bsfml\) would be \((\cdot_f \limply \bsfml) \land (\bsfml \limply \cdot_f)\) which is not allowed by the grammar, as both implications have conflicting polarities.

\begin{theorem}[Monotone congruence]\label{thm:useAt}
  Given contexts $C_{pp}^+$, $C_{fp}^-$, $C_{pf}^-$ (and similarly for $C_{pp}^-,C_{fp}^+, C_{pf}^+, C_{ff}^\pm$), then the following rules are derivable in \dRL{}:
  \vspace*{-0.1cm}
  \begin{mathpar}
    \linfer{\asprg \refines \bsprg}{C_{pp}^+(\asprg) \refines C_{pp}^+(\bsprg)}
    \and
    \linfer{\bsprg \refines \asprg}{C_{fp}^-(\asprg) \limply C_{fp}^-(\bsprg)}
    \and
    \linfer{\bsfml \limply \asfml}{C_{pf}^-(\asfml) \refines C_{pf}^-(\bsfml)}
  \end{mathpar}
\end{theorem}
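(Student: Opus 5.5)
We prove all six rules (for every combination of polarity and hole kind) simultaneously, by structural induction on the grammar of \rref{def:mono_ctx}. We first strengthen the statement to a uniform one indexed by polarity and kinds. For a context $C^\pm_{k_1k_2}$, write $e_1 \sqsubseteq e_2$ for $e_1 \refines e_2$ when the expressions are programs and for $e_1 \limply e_2$ when they are formulas. The claim is that from the premise $e \sqsubseteq e'$ (proved, hence valid) one derives $C^+(e)\sqsubseteq C^+(e')$ for monotone contexts and $C^-(e')\sqsubseteq C^-(e)$ for antitone ones. The three displayed rules, and the variants listed in parentheses, are instances of this claim.

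\textbf{Base cases.} The hole case $C^+_{kk}=\cdot_k$ is just the premise. A constant context ($\asprg$ or $\asfml$ not containing the hole) follows from \irref{leqrefl} or propositional reflexivity.

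\textbf{Inductive cases, program contexts.}
\begin{itemize}
\item Test: $\ptest{C_{fk}}$ uses \irref{test} together with the IH implication.
\item Sequence: $C;C'$ uses \irref{sequence}. The first conjunct is the IH, and the second conjunct $\dbox{\cdot}{(\cdot\refines\cdot)}$ is obtained from the IH for $C'$ by \irref{G}. This is where the premises must be proved outright rather than assumed in context, so the rule is derivable but not a valid implication.
\item Choice: use the \dRL/KAT choice axioms $\cup_l$ (both branches refine) and $\cup_r$.
\item Star $\prepeat{C}$: use the loop refinement axioms (unloop, from $\dbox{\prepeat{\cdot}}{(\cdot\refines\cdot)}$) with \irref{G} applied to the IH.
\item ODE with domain $\pode{\D{x}=\astrm}{C_{fk}}$: use \irref{refOde}. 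We need $\dbox{\pode{\D{x}=\astrm}{\asfml}}{(\D{x}=\astrm\land\bsfml)}$ where $\asfml\limply\bsfml$ is the IH. The conjunct $\D{x}=\astrm$ comes from \irref{DEref}/\irref{DWref}-style reasoning, and the conjunct $\bsfml$ comes from differential weakening plus monotonicity with the IH.
\end{itemize}

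\textbf{Inductive cases, formula contexts.}
\begin{itemize}
\item Conjunction and $\forall$: propositional reasoning, plus $\forall$-generalization applied to the IH.
\item Negation: flips the polarity, which matches the grammar ($\lnot C^\mp$) by contraposition.
\item Box $\dbox{C_p^\mp}{C_f^\pm}$: the postcondition is handled by \irref{G}+\irref{K} (monotonicity) with the IH. The program position is handled by \irref{boxref}, which is exactly antitone: $\asprg\refines\bsprg$ gives $\dbox{\bsprg}{\asfml}\limply\dbox{\asprg}{\asfml}$. Chain the two implications.
\item Refinement $C_p^\mp\refines C_p^\pm$: use \irref{leqtrans} twice, on the right for the monotone side and on the left for the antitone side.
\end{itemize}
Throughout, the polarity bookkeeping must line up with the grammar, and it does so case by case.

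\textbf{Expected main obstacle.} The ODE case, i.e.\ turning an implication between domains into $\dbox{\pode{\D{x}=\astrm}{\asfml}}{(\D{x}=\astrm\land\bsfml)}$. We handle it via differential weakening (\irref{DWref}) for the domain and \irref{DEref} for the differential equation, combined with $\dibox{:=}$ on the differential assignment. The loop case is the second delicate point. If unloop is not available in the required direction, we would instead use \irref{loopl}/\irref{loopr} and unfolding with transitivity to show $\prepeat{\asprg}\refines\prepeat{\bsprg}$.
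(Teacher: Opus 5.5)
Your proposal is correct and follows the paper's own proof. Both argue by structural induction on the context grammar, using \irref{leqrefl} for hole-free contexts, \irref{test} for tests, \irref{sequence} with \irref{G} for sequences, \irref{choicel}/\irref{choicer} for choice, \irref{unloop} with \irref{G} for loops, \irref{refOde} for ODEs, \irref{boxref} chained with \irref{K}/\irref{G} for boxes, and \irref{leqtrans} twice for refinements. The only cosmetic difference is in the ODE case: the paper splits the postcondition with \irref{band}, gets $\D{x}=\astrm$ from the \dL axiom \irref{DE} with \irref{G} and \irref{assignb}, and gets the domain from the rule \irref{dW} applied to the induction hypothesis, rather than going through the refinement forms of DE and DW.
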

The theorem is proved by induction on the contexts, using axioms like \irref{sequence} to reduce each construct to its subcomponents.
The template of \rref{fig:glob_ref} is now a simple application of \rref{thm:useAt} for $C_{pp}^+$ contexts.
Note that even for $C_{ff}$ contexts, the proof may require the use of refinement, for instance with \(\dbox{\ptest{\cdot_f}}\asfml\) since the hole is in program (itself in a formula).
Given the mutually inductive nature of formulas and programs in \dRL, having a program counterpart of implications leverages this kind of induction for obtaining much more general principles.

\paragraph{Dealing with equivalence.}
We denote by $C_{k_1 k_2}$ with $k_1,k_2\in\{f,p\}$ general contexts, defined by the same grammar as \rref{def:mono_ctx} by dropping all polarities.
It only ensures the holes are all of the same kind.
Since polarities are dropped, \(\cdot_f \lbisubjunct \bsfml\) is a valid $C_{ff}$ context.
Compared to refinements, given a global program equivalence, i.e., a proof $\lsequent{}{\asprg \prgeq \bsprg}$, it is always sound to replace the program $\asprg$ by $\bsprg$ in a context, without considering monotonicity or free and bound variables.
Program equivalences are dealt with via a specific rule \irref{CPrgE}, similar to the \dL congruence rules for terms and formulas \cite{DBLP:journals/jar/Platzer17}.

\begin{lemma}[Contextual program equivalence]\label{lem:cprge}
  Rule \irref{CPrgE} is admissible.
\begin{center}
\cinferenceRule[CPrgE|\text{CPE}]{congequiv congruence of equivalences on formulas}
{\linferenceRule[formula]
  {\asprg \prgeq \bsprg}
  {\contextapp{C_{fp}}{\asprg} \lbisubjunct \contextapp{C_{fp}}{\bsprg}}
}{}%
\end{center}
\end{lemma}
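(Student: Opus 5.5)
We show that the rule \irref{CPrgE} is admissible by turning any proof of the premise $\asprg \prgeq \bsprg$ into a \dRL derivation of the conclusion. We cannot simply appeal to \rref{thm:useAt}: a general context $C_{fp}$ may contain the hole at both polarities, for instance under $\lbisubjunct$. Instead we prove a stronger statement by simultaneous structural induction on general contexts $C_{fp}$ and $C_{pp}$, for a fixed proof of $\lsequent{}{\asprg\prgeq\bsprg}$:
\begin{enumerate}
\item[(i)] $\lsequent{}{C_{pp}(\asprg)\prgeq C_{pp}(\bsprg)}$ is derivable;
\item[(ii)] $\lsequent{}{C_{fp}(\asprg)\lbisubjunct C_{fp}(\bsprg)}$ is derivable.
\end{enumerate}
Since $\prgeq$ abbreviates $\asprg \refines \bsprg \land \bsprg \refines \asprg$, every induction hypothesis yields both directions. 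This symmetry is what removes the polarity restriction: each case is handled by proving the two monotone directions separately, exactly as in the proof of \rref{thm:useAt}.

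\textbf{Program cases (i).} If the context is the hole, the claim is the premise itself; if it is a hole-free program $\csprg$, it follows from \irref{leqrefl}. For $\ptest{C_{fp}}$, the induction hypothesis (ii) gives the two implications, and \irref{test} turns each into a refinement between the tests. For $C_1;C_2$, we apply \irref{sequence} in each direction; the box obligation $\dbox{C_1(\asprg)}{(C_2(\asprg)\refines C_2(\bsprg))}$ is obtained from the hypothesis for $C_2$ by \irref{G}. Choice is handled with the Kleene algebra axioms \irref{choicel} and \irref{choicer}. Loops are handled by \irref{unloop}, again discharging its box premise by \irref{G} from the hypothesis for the body.

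\textbf{Formula cases (ii).} Hole-free formulas are immediate. The connectives $\land$ and $\lnot$ follow by propositional reasoning, and $\lforall{x}{}$ by $\forall$-generalization. For $\dbox{C_{pp}}{C_{fp}}$ we split into two steps. First, \irref{boxref} applied to both refinements from (i) exchanges the program, in each direction. Second, \irref{G} followed by \irref{K}, using the equivalence from (ii), exchanges the postcondition. For $C_1 \refines C_2$, we use \irref{leqtrans} twice per direction, composing with the refinements from (i) on the left and on the right.

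\textbf{Main obstacle.} The step I expect to be hardest is the evolution-domain case $\pode{\D{x}=\astrm}{C_{fp}}$. By \irref{refOde}, it reduces to proving $\dbox{\pode{\D{x}=\astrm}{\asfml}}{(\D{x}=\astrm\land\bsfml)}$, where $\asfml\lbisubjunct\bsfml$ is the domain equivalence supplied by (ii) (and symmetrically with the roles exchanged).
\begin{itemize}
\item For the $\bsfml$ conjunct, I would use \irref{DWref}: the domain holds after the evolution, so $\dbox{\pode{\D{x}=\astrm}{\asfml}}{\asfml}$ holds. The proved implication $\asfml\limply\bsfml$ is then lifted under the box by \irref{G} and \irref{K}.
\item For the $\D{x}=\astrm$ conjunct, I would rewrite the ODE with \irref{DEref} into $\pode{\D{x}=\astrm}{\asfml};\pupdate{\umod{\D{x}}{\astrm}}$. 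Then \irref{composeb} and \irref{assignb} reduce the goal to the trivially valid $\astrm=\astrm$, as in the other box cases.
\end{itemize}
This completes the induction. Every step uses only axioms of \dRL and the rules \irref{G} and $\forall$-generalization applied to derived formulas, so each proof of the premise yields a proof of the conclusion, and \irref{CPrgE} is admissible.
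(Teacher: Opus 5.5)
Your proof is correct, but it takes a different route from the paper. The paper does not run a new induction. It numbers the hole occurrences of $C_{fp}$ and, for each $i$, fills the earlier holes with $\bsprg$ and the later ones with $\asprg$. Each resulting single-hole context has a definite polarity, so \rref{thm:useAt} applied once to $\asprg\refines\bsprg$ and once to $\bsprg\refines\asprg$ yields both implications. The $n$ single-hole equivalences are then chained by transitivity of $\lbisubjunct$. So the mixed-polarity problem you raise (a hole under $\lbisubjunct$) is avoided by treating one occurrence at a time, not by exploiting the symmetry of the hypothesis.

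Your simultaneous induction over polarity-free contexts instead replays the case analysis of \rref{thm:useAt} with equivalences in place of implications. This gives a self-contained argument with no hole numbering or polarity bookkeeping; the paper merely asserts that single-hole contexts always have a polarity. Your part (i) also directly gives congruence of $\prgeq$ in arbitrary, possibly mixed-polarity, program contexts. The cost is duplicating the per-constructor cases that the paper inherits by reusing \rref{thm:useAt}.

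Your ``main obstacle'' is not really one: the ODE case is exactly the corresponding case of \rref{thm:useAt}. There, the rule \irref{dW} gives $\dbox{\pode{\D{x}=\astrm}{\asfml}}{\bsfml}$ directly from $\lsequent{\asfml}{\bsfml}$, and \irref{DE}, \irref{G} and \irref{assignb} give the $\D{x}=\astrm$ conjunct. Your detour through $\text{DW}_{\prgeq}$ and $\text{DE}_{\prgeq}$ also works. You only need to apply them at the top level of the box via \irref{boxref}, rather than by rewriting with the congruence you are proving.
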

Rule \irref{CPrgE} circumvents the need of an induction on $C_{fp}$, and so the corresponding template for program equivalence, given in \rref{fig:glob_eq}, is simpler than for refinement.

\begin{figure}
\begin{sequentdeduction}[array]
\linfer[MPax]{
  \lsequent{\Gamma}{\dbox{C(\bsprg)}{\asfml}}
  ! \linfer[equivifyr]{\linfer[CPrgE]{
      \linfer[Ax]{\lclose}
      {\lsequent{}{\bsprg \prgeq \asprg}}
  }{\lsequent{\Gamma}{\dbox{C(\bsprg)}{\asfml} \lbisubjunct \dbox{C(\asprg)}{\asfml}}}}
  {\lsequent{\Gamma}{\dbox{C(\bsprg)}{\asfml} \limply \dbox{C(\asprg)}{\asfml}}}
}{\lsequent{\Gamma}{\dbox{C(\asprg)}{\asfml}}}
\end{sequentdeduction}
\caption{Refactor template via global program equivalence}
\label{fig:glob_eq}
\end{figure}

Having both these templates transforms simple \dRL axioms into refactoring techniques, allowing to freely rewrite hybrid programs during the proofs.
For instance, with \irref{seqdistl}, \irref{seqdistr}, \irref{refOde}, and \irref{choiceL}, we immediately obtain formal generic versions of the refactorings (R1--2), (R6--7), and (R9--10) respectively \cite{DBLP:conf/fm/MitschQP14}, which are given in \rref{fig:glob_refact}.
Since (R6--7) and (R9--10) are obtained from refinements and not program equivalences, they are only allowed in one direction.
Box modality results transfer via (R6) and (R10), while diamond modality results transfer via (R7) and (R9).
\begin{figure}
  \centering
  \begin{tikzpicture}[decoration=zigzag]
      \node (a1) at (0,0) {$\bigcup_i(\asprg_i;\bsprg)$};
      \node (a2) at (3,0) {$(\bigcup_i\asprg_i);\bsprg$};
      \draw[->] (a1.north east) to [out=30,in=150] node[midway, above] {(R1)} (a2.north west) ;
      \draw[<-] (a1.south east) to [out=-30,in=-150] node[midway, below] {(R2)} (a2.south west);

      \node (b1) at (0+7,0) {$\pode{\D{x}=\astrm}{\asfml}$};
      \node (b2) at (3+7,0) {$\pode{\D{x}=\astrm}{\asfml \land \bsfml}$};
      \draw[->] (b1.north east) to [out=30,in=150] node[midway, above] {(R6)} (b2.north west) ;
      \draw[<-] (b1.south east) to [out=-30,in=-150] node[midway, below] {(R7)} (b2.south west);

      \node (c1) at (0+3.5,-2) {$\asprg;\bsprg$};
      \node (c2) at (3+3.5,-2) {$(\pchoice{\asprg}{\csprg});\bsprg$};
      \draw[->] (c1.north east) to [out=30,in=150] node[midway, above] {(R9)} (c2.north west) ;
      \draw[<-] (c1.south east) to [out=-30,in=-150] node[midway, below] {(R10)} (c2.south west);

  \end{tikzpicture}
  \caption{Program refactorings induced by global refinements}
  \label{fig:glob_refact}
\end{figure}
\vspace*{-1.5em}
\subsection{Local Refinement}
\label{sec:loc_ref}
Sometimes, local changes are only justified due to the surrounding context.
For instance, a test may evaluate to false due to a previous assignment, making one side of a choice $\pchoice{\asprg}{\csprg}$ unreachable.
In that case, the other direction for the refactorings (R9--10) of \rref{fig:glob_refact} is correct.
To allow these refactorings to occur, we must be able to reason about the local information that is known at the position where the programs lie.
This is an issue for the template of \rref{fig:glob_ref} since all context is lost in the congruence, and is answered by considering \emph{local refinements}.

Whereas global refinements enable direct usage of the axioms of \dRL, local refinements are useful for applying more complex lemmas, where the refinement or implication holds thanks to some side condition.
Compared to the examples of \rref{sec:glob_ref}, the refinement statements are of the form $\bsfml \limply \asprg \refines \bsprg$ or $\dbox{\csprg}{\asprg \refines \bsprg}$.
This makes it a more powerful but also more manual refactoring.
The conditional refinement may need to be proved by the user beforehand, and the assumption $\bsfml$ has to hold locally, which must also be proved.
Nonetheless, using the proved lemma for refactoring can still be done automatically, similarly to the congruence implementation for \rref{sec:glob_ref}.
The proof is altered to keep the original assumptions $\Gamma$, so as to prove that the refinement holds at the position where it is used.
From here and onward, we will focus on single-hole program contexts ($C_{pp}^+$) which will be noted $C$ when unambiguous.

\paragraph{Towards context-aware congruence.}

To preserve the context, the idea is to look for a congruence property that holds in the same state.
Implication-based and equivalence-based axioms are better suited for this task than proof rules as the conclusion must hold in the same state as the assumption.
This makes it possible to use them even if the refinement does not hold globally.
In that regards, consider the right subcomponents in \irref{sequence}.
The refinement we obtain ($\bsprg \refines \dsprg$) is now under a modality to express the fact that it \emph{only} needs to hold in the states reachable from the current state after any possible execution of the context program $\asprg$.
Given a single-hole context $C$, we can construct the program that must be executed before the hole using \emph{\projctx{}s}.
\begin{definition}[\Projctx{}]
Given a monotone program context $C$ and a program $\asprg$, we define the \projctx{} $C_\asprg$ as a formula-to-formula context inductively on the structure of $C$:
\begin{itemize}
  \item when $C \equiv \cdot_p$, the context $C_\asprg$ is $\cdot_f$
  \item when $C \equiv \csprg;C'$, the context $C_\asprg$ is $\dbox{\csprg}{C'_\asprg}$
  \item when $C \equiv \prepeat{C'}$, the context $C_\asprg$ is $\dbox{\contextapp{C}{\asprg}}{C'_\asprg}$
  \item when $C \equiv C';\csprg$, or $\pchoice{C'}{\csprg}$, or $\pchoice{\csprg}{C'}$, the context $C_\asprg$ is $C'_\asprg$
\end{itemize}
\end{definition}
The dependency on $\asprg$ only shows in loops.
\rref{lem:focus} would still hold when using $C_\bsprg$, though it would be less practical in general.
Indeed, even though both rules are derivable, the premise with $C_\bsprg$ may not hold even when ours is.

\begin{lemma}\label{lem:focus}
  For a monotone program context $C$, the following rule is derivable in \dRL{}:
  \[
    \linfer{\lsequent{\Gamma}{C_\asprg(\asprg \refines \bsprg)}}{\lsequent{\Gamma}{\contextapp{C}{\asprg} \refines \contextapp{C}{\bsprg}}}
  \]
\end{lemma}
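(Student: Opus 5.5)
We argue by structural induction on the monotone single-hole program context $C$, following the cases of the definition of \projctx{}s. For each case the goal is a \dRL{} derivation that turns the premise $\lsequent{\Gamma}{C_\asprg(\asprg \refines \bsprg)}$ into the conclusion $\lsequent{\Gamma}{\contextapp{C}{\asprg} \refines \contextapp{C}{\bsprg}}$, keeping $\Gamma$ throughout. We use only axioms that hold in the same state, such as \irref{sequence}, \irref{leqrefl}, the choice axioms and the loop axioms, rather than the global congruence of \rref{thm:useAt}, which discards $\Gamma$. The induction hypothesis must hold for arbitrary $\Gamma$; it is also convenient to strengthen the statement to the implication $C_\asprg(\asprg\refines\bsprg) \limply \contextapp{C}{\asprg}\refines\contextapp{C}{\bsprg}$ being valid. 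For contexts nested under a box, we then obtain what we need by \irref{G} and \irref{K}, since a rule with a context $\Gamma$ cannot be pushed under a modality directly.

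The base case $C \equiv \cdot_p$ is immediate. For $C \equiv \csprg;C'$ we apply \irref{sequence}, which reduces the goal to $\csprg\refines\csprg$ (closed by \irref{leqrefl}) and $\dbox{\csprg}{(\contextapp{C'}{\asprg}\refines\contextapp{C'}{\bsprg})}$. The premise is $\dbox{\csprg}{C'_\asprg(\asprg\refines\bsprg)}$, so we conclude by the induction hypothesis in implication form, lifted with \irref{G} and \irref{K}. For $C \equiv C';\csprg$ we again use \irref{sequence}: the left conjunct is the induction hypothesis, and the right conjunct $\dbox{\contextapp{C'}{\asprg}}{(\csprg\refines\csprg)}$ follows from \irref{leqrefl} and \irref{G}. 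For the choice cases we use the Kleene algebra facts of \dRL: $\pchoice{\asprg}{\bsprg}\refines\csprg$ holds iff both summands refine $\csprg$, and each summand refines a choice containing it (via \irref{choiceL}, commutativity and \irref{leqtrans}). This gives $\pchoice{\contextapp{C'}{\asprg}}{\csprg}\refines\pchoice{\contextapp{C'}{\bsprg}}{\csprg}$ from the induction hypothesis by monotonicity of $\cup$.

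The loop case $C\equiv\prepeat{C'}$ is the main obstacle. Here the premise is $\dbox{\prepeat{(\contextapp{C'}{\asprg})}}{C'_\asprg(\asprg\refines\bsprg)}$. By the induction hypothesis together with \irref{G} and \irref{K}, this yields $\dbox{\prepeat{(\contextapp{C'}{\asprg})}}{(\contextapp{C'}{\asprg}\refines\contextapp{C'}{\bsprg})}$. We then need an unloop principle: $\dbox{\prepeat{\ausprg}}{(\ausprg\refines\busprg)}\limply\prepeat{\ausprg}\refines\prepeat{\busprg}$. This is exactly the \dRL{} axiom \textsf{unloop} (see \rref{app:axs}), and instantiating it closes the case. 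This case also explains why the box is taken over $\contextapp{C}{\asprg}$ rather than $\contextapp{C}{\bsprg}$: the iterations actually performed are those of the refined loop. If \textsf{unloop} were unavailable, we would derive it from loop induction. We would show $\prepeat{\ausprg}\refines\prepeat{\busprg}$ by unfolding with \irref{unfold-l}, reducing $\ausprg;\prepeat{\ausprg}\refines\busprg;\prepeat{\busprg}$ with \irref{sequence}, and then closing the recursion through an inductive invariant of the form $\dbox{\prepeat{\ausprg}}{(\ausprg\refines\busprg)}$ using \irref{I} and \irref{iterateb}. Getting this coinductive step right, so that the refinement of the iterates is itself preserved along $\prepeat{\ausprg}$, is the delicate point.
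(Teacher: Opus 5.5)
Your proposal is correct and follows essentially the same route as the paper's proof. The paper also proves the strengthened sequent \(\lsequent{C_\asprg(\asprg \refines \bsprg)}{\contextapp{C}{\asprg} \refines \contextapp{C}{\bsprg}}\) by induction on $C$ and recovers the rule by cut and weakening; it reuses the \rref{thm:useAt} steps for constructors that leave the projective context unchanged, and in the sequence and loop cases lifts the induction hypothesis under the box of the projective context via \irref{G}, \irref{K} and modus ponens, closing the loop case with \irref{unloop}. Your fallback derivation of \irref{unloop} from loop induction is unnecessary, since \irref{unloop} is already an axiom of the calculus.
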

Using \rref{lem:focus}, we obtain a refactoring whose structure is similar to \rref{fig:glob_ref} and shown in \rref{fig:loc_ref}, except that the right branch now keeps the assumptions $\Gamma$ by using the \projctx{} $C_\asprg$.
Closing the goal $\lsequent{\Gamma}{C_\asprg(\asprg \refines \bsprg)}$ then depends on the form of the refinement statement used.
For statements like $\bsfml \limply \asprg \refines \bsprg$, it reduces to $\lsequent{\Gamma}{C_\asprg(\bsfml)}$ by monotonicity, which still need to be proved.
For statements like $\dbox{\csprg}{\asprg \refines \bsprg}$, it can be concluded immediately when the last modality of $C_\asprg$ is $\dbox{\csprg}$, by using \irref{G} to remove the outer modalities.

\vspace*{-0.4em}
\begin{figure}
\begin{sequentdeduction}[array]
\linfer[boxref]{
  \lsequent{\Gamma}{\dbox{C(\bsprg)}{\asfml}}
  ! \linfer{
    \linfer{
      \lsequent{\Gamma}{C_\asprg(\asprg \refines \bsprg)}
    }{&\vdots \text{ ind. on $C$ (\rref{lem:focus})}}
  }{\lsequent{\Gamma}{C(\asprg) \refines C(\bsprg)}}
}{\lsequent{\Gamma}{\dbox{C(\asprg)}{\asfml}}}
\end{sequentdeduction}
\caption{Refactor template via local refinement}
\label{fig:loc_ref}
\end{figure}
\vspace*{-1.3em}
\paragraph{Program Equivalence.}
In order to preserve local information, program equivalences need to be treated in the same way as local refinements.
As the \projctx{} $C_\asprg$ is dependent on the structure of $C$, the congruence cannot be implemented in just one rule like with \rref{sec:glob_ref}.
\begin{corollary}\label{cor:focusEq}
  For a program context $C$, the following rule is derivable in \dRL{}:
  \[
    \linfer{\lsequent{\Gamma}{\contextapp{C_\asprg}{\asprg \prgeq \bsprg}}}{\lsequent{\Gamma}{C(\asprg) \prgeq C(\bsprg)}}
  \]
\end{corollary}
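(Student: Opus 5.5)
The plan is to derive Corollary~\ref{cor:focusEq} by applying Lemma~\ref{lem:focus} twice, once in each direction, and combining the results with the definition of $\prgeq$ as $\asprg \refines \bsprg \land \bsprg \refines \asprg$. Unfolding the premise, $C_\asprg(\asprg \prgeq \bsprg)$ is $C_\asprg(\asprg \refines \bsprg \land \bsprg \refines \asprg)$. Since a \projctx{} is built only from holes and box modalities, it distributes over conjunction: by \irref{band}, or by \irref{G} and \irref{K} at each box layer, we obtain both $\lsequent{\Gamma}{C_\asprg(\asprg \refines \bsprg)}$ and $\lsequent{\Gamma}{C_\asprg(\bsprg \refines \asprg)}$. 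The first of these, via Lemma~\ref{lem:focus}, directly yields $\lsequent{\Gamma}{C(\asprg) \refines C(\bsprg)}$. One point to note is that $C$ is an arbitrary program context rather than a monotone one. For single-hole program-to-program contexts built from $;$, $\cup$, and ${}^*$ (which are the only constructs the \projctx{} definition covers), every such context is automatically in $C^+_{pp}$, so Lemma~\ref{lem:focus} applies.

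The converse direction is the main obstacle. Lemma~\ref{lem:focus} applied with the roles of $\asprg$ and $\bsprg$ swapped requires the premise $C_\bsprg(\bsprg \refines \asprg)$, whereas we only have $C_\asprg(\bsprg \refines \asprg)$. These two contexts differ exactly at the loop cases, where $C_\asprg$ contains $\dbox{C(\asprg)}{\cdot}$ and $C_\bsprg$ contains $\dbox{C(\bsprg)}{\cdot}$. I would handle this by induction on $C$, proving the stronger statement that $C_\asprg(\asprg\prgeq\bsprg)$ implies both $C(\asprg)\prgeq C(\bsprg)$ and $C_\asprg(\phi)\limply C_\bsprg(\phi)$ for every $\phi$. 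In the loop case $C\equiv\prepeat{C'}$, the induction hypothesis applied to the inner part gives the equivalence $C'(\asprg)\prgeq C'(\bsprg)$ under the relevant modal context. From this we first get $C(\asprg) \prgeq C(\bsprg)$ (the refinement direction comes from Lemma~\ref{lem:focus}; the converse is shown together with it). Then \irref{boxref}, using the refinement $C(\bsprg)\refines C(\asprg)$, transforms $\dbox{C(\asprg)}{C'_\asprg(\phi)}$ into $\dbox{C(\bsprg)}{C'_\asprg(\phi)}$, and the inner induction hypothesis converts $C'_\asprg$ into $C'_\bsprg$.

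For the remaining cases ($;$ and $\cup$ with the hole on either side), both the \projctx{} and the equivalence reduce directly to the subcontext: we use \irref{sequence} with \irref{leqrefl}, and the choice axioms \irref{choicel}/\irref{choicer}. Once $C_\bsprg(\bsprg\refines\asprg)$ is established, Lemma~\ref{lem:focus} gives $C(\bsprg)\refines C(\asprg)$, and conjoining this with the first direction yields the conclusion $C(\asprg)\prgeq C(\bsprg)$.
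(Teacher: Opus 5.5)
Your reduction works up to the loop case. There, the step ``the converse is shown together with it'' is exactly the obstacle you identified, and nothing in the proposal discharges it.

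For $C\equiv\prepeat{C'}$, the induction hypothesis gives only $\dbox{\prepeat{C'(\asprg)}}{(C'(\asprg)\prgeq C'(\bsprg))}$, so the box sits over the loop of the $\asprg$-instance. To get $\prepeat{C'(\bsprg)}\refines\prepeat{C'(\asprg)}$ by \irref{unloop}, you need the box over the \emph{left-hand} loop, $\dbox{\prepeat{C'(\bsprg)}}{(C'(\bsprg)\refines C'(\asprg))}$. Moving the box from $C(\asprg)$ to $C(\bsprg)$ by \irref{boxref} requires $C(\bsprg)\refines C(\asprg)$, which is the very refinement being proved. Your transfer $C_\asprg(\phi)\limply C_\bsprg(\phi)$ at a loop uses that same refinement, so the loop case of your strengthened induction is circular as it stands. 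The other parts are fine: distributing the boxes of $C_\asprg$ over $\land$, the $;$ and $\cup$ cases, and the remark that these contexts are monotone.

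The missing ingredient is a version of \irref{unloop} with the box over the right-hand loop: $\dbox{\prepeat{\csprg}}{(\dsprg\refines\csprg)}\limply\prepeat{\dsprg}\refines\prepeat{\csprg}$. It is derivable as follows:
\begin{itemize}
\item From the hypothesis, \irref{sequence} with \irref{leqrefl} gives $\prepeat{\csprg};\dsprg\refines\prepeat{\csprg};\csprg$.
\item Since $\prepeat{\csprg};\csprg\refines\prepeat{\csprg}$ holds in Kleene algebra, \irref{leqtrans} gives $\prepeat{\csprg};\dsprg\refines\prepeat{\csprg}$.
\item Then \irref{loopr} yields $\prepeat{\csprg};\prepeat{\dsprg}\refines\prepeat{\csprg}$.
\item Finally, chaining $\prepeat{\dsprg}\refines\prepeat{\csprg};\prepeat{\dsprg}$ (by \irref{seqidl}, \irref{choiceL}, \irref{unfold-l}) in front via \irref{leqtrans} concludes.
\end{itemize}
Instantiating $\csprg:=C'(\asprg)$ and $\dsprg:=C'(\bsprg)$ closes your loop case, and then your induction goes through.

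The paper's route is shorter. It treats the statement as a corollary of \rref{lem:focus} together with its remark that \rref{lem:focus} also holds with the \projctx{} $C_\bsprg$ of the right-hand program. Applied with $\asprg$ and $\bsprg$ exchanged, that variant turns $C_\asprg(\bsprg\refines\asprg)$ directly into $C(\bsprg)\refines C(\asprg)$, and \irref{leqantisym} combines the two directions. No transfer from $C_\asprg$ to $C_\bsprg$ is needed. However, the loop case of that variant rests on exactly the right-box \irref{unloop} fact above, which the paper does not spell out either.
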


Refactorings via local refinements are essential as it can be relatively frequent that equivalences that arise from cyber-physical systems rely on generic assumptions like non-negativity of some parameters.
Even if generally assumed, the presence of these small assumptions makes the argument inherently local.
\rref{sec:acas} shows how that would apply in a realistic setting.

\subsection{Ghost Equivalence}
\label{sec:ghost}
The last type of refinements involve introducing new variables.
Inside modalities, it is simple to add fresh variables: for instance by \irref{assignb}, $\dbox{\pupdate{\pumod{x}{\astrm}}}{\asfml} \lbisubjunct \asfml$ holds whenever $x$ is fresh, and introduces the fresh variable $x$ as a ghost variable.
Comparatively, as refinements look at the output value of all variables, its meaning is strongly affected by the addition of new variables.
However, it is still possible in \dRL to express refinements that disregard the value of a variable: the refinement $\asprg \refines \bsprg;\prandom{x}$ holds when the program $\asprg$ refines $\bsprg$ without considering the final value of $x$.
For program equivalences, the corresponding version is symmetric: $(\asprg;\prandom{x}) \prgeq (\bsprg;\prandom{x})$, which we will focus on for this section.
This technique enables additional refactoring principles:

\vspace{0.2em}
\cinferenceRule[odeconst|{ODE$_{cst}$}]{ODE const}
{\linferenceRule[eq]
  {(\pupdate{\umod{x}{\astrm}};\pode{\D{y}=x}{\bsfml};\prandom{x})}
  {\axkey{\pode{\D{y}=\astrm}{\bsfml}};\prandom{x}}
}{$x,y\notin\astrm; x\notin\bsfml$}
\\[-0.8em]

\cinferenceRule[refDG|{$\prgeq_{dG}$}]{Ref dG}
{\linferenceRule[eq]
  {(\pupdate{\umod{x}{c}};\pode{\D{y}=\astrm,\D{x}=\bstrm}{\bsfml};\prandom{x};\prandom{\D{x}})}
  {\axkey{\pode{\D{y}=\astrm}{\bsfml}};\prandom{x};\prandom{\D{x}}}
}{$\bstrm$ linear in $x$, $x\notin\astrm,\bsfml,y$}
\vspace{0.2em}

The latter program equivalence is the \dRL counterpart of \dL axiom dG\cite{DBLP:conf/lics/PlatzerT18}.

\begin{lemma}\label{lem:ghost_ode}
The axioms \irref{odeconst} and \irref{refDG} are sound.
\end{lemma}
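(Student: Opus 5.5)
Both axioms are program equivalences, so soundness amounts to showing that in every state $\nu$ the two sides have the same set of reachable final states. Since both sides end with $\prandom{x}$ (and, for \irref{refDG}, also $\prandom{\D{x}}$), I would argue semantically: two final states are identified up to the values of $x$ (and $\D{x}$). It therefore suffices to show that the projections of the reachable states onto all other variables coincide. I would work directly with the transition semantics of ODEs, that is, solutions $\sol:[0,r]\to$ states satisfying the differential equation and staying in the domain, with differential variables set to the right-hand side.

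For \irref{odeconst}, take a run of $\pode{\D{y}=\astrm}{\bsfml}$ from $\nu$ with solution $\sol$ of duration $r$. I would build the run of the left side from the state $\nu'=\nu[x\mapsto \nu\llbracket\astrm\rrbracket]$. Because $x,y\notin\astrm$ and $\astrm$ is evaluated only at the start on the left, there is a point to check: on the right-hand side $\astrm$ is re-evaluated along the flow, but only $y,\D{y}$ change there. Since $y\notin\astrm$, and $\D{y}$ does not occur in $\astrm$ (terms of this shape without differentials; I would add this to the side condition if needed), $\astrm$ stays constant along $\sol$. Hence $\D{y}=x$ with $x$ fixed to that constant yields the same trajectory on $y,\D{y}$. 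The domain $\bsfml$ agrees along both trajectories because $x\notin\bsfml$ and all other variables coincide. The final states then differ only in $x$, which the trailing $\prandom{x}$ erases. The converse direction is symmetric, since both ODEs have unique solutions given the initial values (they are constant-rate).

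For \irref{refDG}, the direction from left to right (ghost run projects to a run) is easy. Along a solution of the extended system, the $y$-component solves $\D{y}=\astrm$ because $x\notin\astrm$. The domain $\bsfml$ does not mention $x$, so the projected trajectory is a valid run of the right-hand side with the same duration. Final states agree except on $x,\D{x}$, which are erased.

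The main obstacle is the converse: given a solution of $\D{y}=\astrm$ on $[0,r]$, I must extend it by a solution $x(t)$ of $\D{x}=\bstrm$ with $x(0)=c$ on the \emph{whole} interval $[0,r]$. Here linearity is essential. Writing $\bstrm=a(\cdot)x+b(\cdot)$ with $a,b$ not mentioning $x$, the coefficients become continuous functions of $t$ along the given trajectory. The resulting linear non-autonomous ODE has a global solution on $[0,r]$ by the standard existence theorem for linear ODEs (no blow-up, for example via Grönwall or the explicit variation-of-constants formula). Uniqueness of the $y$-part is not needed. The combined trajectory satisfies the system, stays in $\bsfml$, and sets $\D{y}$ correctly. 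Its final state agrees with the original one outside $x,\D{x}$, so the randomizations close the argument. I would mirror the soundness proof of \dL's dG axiom \cite{DBLP:conf/lics/PlatzerT18} for this step, including the technicality that $c$ must not depend on $y$ or $x$ in problematic ways (it is evaluated at the initial state).
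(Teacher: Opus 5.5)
Your proof is correct, but it takes a different route from the paper's. You argue purely semantically. Because both sides end in $\prandom{x}$ (and $\prandom{\D{x}}$), you reduce the equivalence to agreement of reachable states away from the ghost variables, and then you construct runs explicitly. For \irref{odeconst} you use that $\astrm$ is constant along the flow, since $y,\D{y}\notin\astrm$. For the hard direction of \irref{refDG} you re-prove the analytic core of the soundness of \irref{DG}: a linear non-autonomous equation has a ghost solution on all of $[0,r]$.

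The paper instead stays inside the logic as far as possible. It derives \irref{odeconst} syntactically: it introduces $\pupdate{\umod{x}{\astrm}}$ by \irref{assignb} since $x$ is not free, then applies \irref{assigndet}, \irref{sequence}, \irref{assignmerge}, \irref{G} and \irref{leqrefl}, so soundness is inherited from the calculus. For \irref{refDG}, it characterizes refinement through diamonds with fresh variables $x^+,y^+$ and uses \irref{randomd}. This reduces the axiom to the equivalence of $\dbox{\pupdate{\umod{x}{c}};\pode{\D{y}=\astrm,\D{x}=\bstrm}{\bsfml}}{y\neq y^+}$ and $\dbox{\pode{\D{y}=\astrm}{\bsfml}}{y\neq y^+}$ for all $y^+$. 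The two directions then follow from \irref{DG} and its universally quantified variant.

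What each route buys:
\begin{itemize}
\item The paper's route is shorter and outsources all ODE analysis to the already established \irref{DG}.
\item Yours is self-contained and handles all non-ghost variables uniformly through the projection observation, where the paper writes $y=y^+$ as shorthand for them.
\item Yours also makes explicit where linearity and each side condition enter. This includes the constancy of $\astrm$ along the flow, which the paper's derivation of \irref{odeconst} uses only implicitly when it identifies $\D{y}=x$ with $\D{y}=\astrm$ after $\pupdate{\umod{x}{\astrm}}$.
\end{itemize}

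One small simplification: you need no restriction on $c$ at all. The linear ghost equation has a global solution from every initial value of $x$.
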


These new axioms are focused on introducing ghost variables near differential equations.
Similar equivalences for assignments can also be used, as long as they are provable in \dRL, e.g., \(\pupdate{\umod{x}{\astrm}};\prandom{x} \prgeq \prandom{x}\).

\paragraph{Congruence.}
The drawback of the version using refinement is the presence of these nondeterministic assignments.
To perform refactorings on a subprogram, these assignments need to be moved in position to ensure that the full program will not be affected by the introduction of the new variable.

\rref{fig:ghost_ref} showcases the structure to apply a ghost program equivalence of the form ${\asprg;\prandom{x} \prgeq \bsprg;\prandom{x}}$.
First, we need to introduce the nondeterministic assignment of $x$, which only succeeds if $x$ is effectively fresh in $\asfml$.
Then, it is moved next to $\asprg$.
Though the local reasoning is different from the previous refactoring, the general structure is similar and done by deriving the proof by induction on the structure of $C$.
Compared to the previous inductions, the case for the loop is particularly tricky and discussed in more detail below.
\begin{theorem}
  \label{thm:fresh}
If $x$ is fresh in $C$ and $x$ is not free in $\asprg$, then the following program equivalence is derivable in \dRL{}:
\[
C(\asprg);\prandom{x} \prgeq C(\asprg;\prandom{x});\prandom{x}
\]
\end{theorem}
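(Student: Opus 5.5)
The proof goes by structural induction on the single-hole monotone program context $C$. Throughout, we use that $x$ is fresh in $C$ (neither free nor bound in the non-hole parts) and that $x$ is not free in $\asprg$. The basic tools are a set of commutation equivalences for $\prandom{x}$, all derivable in \dRL{} from its KAT axioms together with \irref{randtest}, \irref{assignrandcomm}, \irref{randcomm} and \irref{refOde}. Concretely, for any program $\csprg$ in which $x$ is fresh we want $\prandom{x};\csprg \prgeq \csprg;\prandom{x}$. This is itself proved by a separate induction on $\csprg$: tests and assignments use \irref{randtest} and \irref{assignrandcomm}, while ODEs use \irref{refOde} with the vacuous effect on $x$. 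We also use idempotence, $\prandom{x};\prandom{x}\prgeq\prandom{x}$. Since the statement is a global equivalence, \irref{CPrgE} and \irref{leqtrans} let us rewrite freely inside subterms.

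The base case $C\equiv\cdot_p$ is exactly idempotence: $\asprg;\prandom{x};\prandom{x}\prgeq\asprg;\prandom{x}$.

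For $C\equiv C';\csprg$, we commute the trailing $\prandom{x}$ leftwards past $\csprg$, apply the induction hypothesis to $C'$, and commute back. For $C\equiv\csprg;C'$, the induction hypothesis applies directly under the prefix $\csprg$ via congruence (\irref{CPrgE}) together with \irref{seqassoc}. For $C\equiv\pchoice{C'}{\csprg}$, we distribute with \irref{seqdistr}. On the $C'$ branch we use the induction hypothesis. On the $\csprg$ branch we observe that $\csprg;\prandom{x}\prgeq\csprg;\prandom{x};\prandom{x}$ by idempotence. Both branches then recombine, since $C(\asprg;\prandom{x});\prandom{x}$ distributes the same way.

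The main obstacle is the loop case $C\equiv\prepeat{C'}$. Here we must show $\prepeat{(C'(\asprg))};\prandom{x}\prgeq\prepeat{(C'(\asprg;\prandom{x}))};\prandom{x}$. The key auxiliary fact is that $C'(\asprg);\prandom{x}\prgeq\prandom{x};C'(\asprg);\prandom{x}$. This holds because $x$ is fresh in $C'$ and not free in $\asprg$, so no free occurrence of $x$ influences execution; it is proved by a side induction mirroring the main one, with the base case using that $x$ is not free in $\asprg$. Combining this with the induction hypothesis yields a ``sliding'' equivalence $\bsprg;\prandom{x}\prgeq\prandom{x};\bsprg';\prandom{x}$, where $\bsprg=C'(\asprg)$ and $\bsprg'=C'(\asprg;\prandom{x})$. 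From this we obtain both refinement directions by loop induction. Specifically, we use \irref{loopl} (and symmetrically \irref{loopr}) with \irref{unfold-l}/\irref{unfold-r}: to show $\prepeat{\bsprg};\prandom{x}\refines\prepeat{(\bsprg')};\prandom{x}$, we prove $\bsprg;\prepeat{(\bsprg')};\prandom{x}\refines\prepeat{(\bsprg')};\prandom{x}$ under $\dbox{\prepeat{\bsprg}}{}$ (actually globally, then apply \irref{G}). To do this, we push $\prandom{x}$ through the loop body by the sliding equivalence and its loop generalization $\prepeat{(\bsprg')};\prandom{x}\prgeq\prandom{x};\prepeat{(\bsprg')};\prandom{x}$, which itself follows by the same loop axioms. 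The converse direction is handled symmetrically.

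The crux I expect to need care is arranging the $\prandom{x}$ so that the loop axioms, which have a fixed shape ($\prepeat{\asprg};\busprg\refines\busprg$), apply. If direct use fails, I would fall back on the semantic soundness of the KAT-style induction rule for $\refines$ and derive it once as a lemma.
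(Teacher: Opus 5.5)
Your overall skeleton matches the paper: induction on $C$, idempotence for the hole, commute--IH--commute for $C';\csprg$, distribution for choice, and a loop case driven by the fact $\prandom{x};\bsprg;\prandom{x}\prgeq\bsprg;\prandom{x}$. Your \irref{loopl}-based sliding argument for the loop is a workable alternative to the paper's intermediate equivalence $\prepeat{\csprg};\prandom{x}\prgeq\prepeat{(\csprg;\prandom{x})};\prandom{x}$ followed by rewriting with the IH under the star.

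The gap is in how you justify that key fact. A side induction over $C'$ ``mirroring the main one'' only moves all of its content into the base case $\prandom{x};\asprg;\prandom{x}\prgeq\asprg;\prandom{x}$, for an \emph{arbitrary} $\asprg$ with $x\notin\freevars{\asprg}$. ``$x$ not free in $\asprg$'' does not discharge this, because $x$ may be \emph{bound} in $\asprg$, and then your commutation lemma (which needs freshness) is unavailable. This is exactly the situation the theorem must cover, e.g.\ $\asprg = \pupdate{\umod{x}{c}};\pode{\D{y}=\astrm,\D{x}=\bstrm}{\bsfml}$ from \irref{refDG}, where both $x$ and $\D{x}$ are written.

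The paper proves this fact as a separate lemma (\rref{lem:notfree}) by induction on the program itself. It has cases your toolkit does not cover:
\begin{itemize}
\item an assignment $\pupdate{\umod{x}{\astrm}}$ is rewritten with \irref{update} as $\prandom{x};\ptest{x=\astrm}$ and then collapsed with \irref{randidem};
\item an ODE whose differential variable is the randomized one needs the new axiom \irref{DErefL}, since the ODE overwrites $\D{y}$ and ignores its initial value;
\item a loop needs the auxiliary equivalence $\prepeat{\bsprg};\prandom{x}\prgeq\prandom{x};\prepeat{(\bsprg;\prandom{x})}$, proved by antisymmetry using \irref{loopl} and \irref{loopr}.
\end{itemize}
Your ``loop generalization'' $\prepeat{(\bsprg')};\prandom{x}\prgeq\prandom{x};\prepeat{(\bsprg')};\prandom{x}$ is the loop case of this same lemma, so it is also left unproved.

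There is a second, smaller problem. In the ODE case, the commutation $\prandom{x};\csprg\prgeq\csprg;\prandom{x}$ for fresh $x$ cannot be obtained from \irref{refOde}. That axiom only relates two differential equations over the same variables and says nothing about a $\prandom{x}$ placed before or after one. The paper does not derive this commutation at all: it adds it as the axiom \irref{refRandomComm} and proves it sound semantically via the coincidence and bound-effect properties. So your claim that everything follows from the existing KAT and assignment axioms plus \irref{refOde} is unsupported. You need that axiom (or an equivalent ODE-specific one) together with \irref{DErefL}.

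A minor point: in the choice case, nothing is needed on the $\csprg$ branch, since after distributing both sides already reduce to $\csprg;\prandom{x}$.
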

Note that it would be unsound to remove the final nondeterministic assignment in general.
Indeed, $\prepeat{(\pupdate{\umod{c}{c+1}};\prandom{x})};\prandom{x}$ and $\prepeat{(\pupdate{\umod{c}{c+1}};\prandom{x})}$ are not equivalent since only the former program can modify $x$ but not $c$.
The program equivalence rule \irref{CPrgE} enables the replacement of $\asprg$ by $\bsprg$ using the congruence for global refinement from \rref{sec:glob_ref}.
The rest of the refactoring of \rref{fig:ghost_ref} uses \rref{thm:fresh} again to removing the leftover assignment.
\vspace*{-0.5em}
\begin{figure}
  \renewcommand{\linferPremissSeparation}{\hspace{0.1cm}}
\resizebox{\textwidth}{!}{
\begin{minipage}{1.2\textwidth}
\begin{sequentdeduction}[array]
\linfer[assignb]{
  \linfer[boxref]{
    \linfer[boxref]{
      \linfer{
        \linfer{
          \linfer[assignb]{
            \lsequent{\Gamma}{\dbox{C(\bsprg)}{\asfml}}
          }{
            \lsequent{\Gamma}{\dbox{C(\bsprg);\prandom{x}}{\asfml}}
          }
        }{&\quad\vdots \text{ (\rref{thm:fresh})}}
      }{\lsequent{\Gamma}{\dbox{C(\bsprg;\prandom{x});\prandom{x}}{\asfml}}}
      ! \linfer[CPrgE]{
          \linfer[Ax]{\lclose}{\lsequent{}{\asprg;\prandom{x} \prgeq \bsprg;\prandom{x}}}
          }{&\quad\dots}
    }{\lsequent{\Gamma}{\dbox{C(\asprg;\prandom{x});\prandom{x}}{\asfml}}}
    ! \linfer[leqantisym]{
      \linfer{
        \linfer{\lclose}{&\quad\vdots \text{ ind. on $C$ (\rref{thm:fresh})}}
      }{
      \lsequent{\Gamma}{C(\asprg);\prandom{x} \prgeq C(\asprg;\prandom{x});\prandom{x}}  
      }
    }{\lsequent{\Gamma}{C(\asprg);\prandom{x} \refines C(\asprg;\prandom{x});\prandom{x}}}
  }{\lsequent{\Gamma}{\dbox{C(\asprg);\prandom{x}}{\asfml}}}
}{\lsequent{\Gamma}{\dbox{C(\asprg)}{\}asfml}}}
\end{sequentdeduction}
\end{minipage}
}
\caption{Refactor template via ghost refinement}
\label{fig:ghost_ref}
\end{figure}
\vspace*{-1.1em}
\paragraph{Handling loops.}
Looking at the induction case for loops, we want to automatically derive $\prepeat{(C(\asprg);\prandom{x})};\prandom{x} \prgeq \prepeat{C(\asprg)};\prandom{x}$.
Due to the repetition, even though the assignment is after the program, it may influence the execution before the next iteration.
Thus, the proof of the equivalence above relies on $(\prandom{x};C(\asprg);\prandom{x}) \prgeq C(\asprg);\prandom{x}$, which only holds if $x$ is not free \cite[Def.15]{DBLP:journals/jar/Platzer17} in $C(\asprg)$, and is proved by induction on $C$.
Using the template of \rref{fig:ghost_ref} with \irref{odeconst} or \irref{refDG} requires applying \rref{thm:fresh} on a program that mentions the introduced variable $x$.
To remain applicable, \rref{thm:fresh} cannot assume that $x$ is fresh for $\asprg$ like it is for $C$.
This weaker condition holds for both refinements considered, \irref{odeconst} and \irref{refDG}, as the right program always start with an assignment on $x$.
This unexpected other induction strengthens the need for having such reasoning automated, as the proof would grow very fast.
\section{Applications}
This section demonstrates how these refactorings via refinements are applied at various levels of abstraction, from the very concrete implementation in \KeYmaeraX, to a case study showcasing the proof simplification provided by proved refactorings, to a more abstract method for proving time-triggered models out of event-triggered models.\footnote{\url{https://github.com/EnguerrandPrebet/Refactoring-as-Propositions}}

\subsection{Implementation in \KeYmaeraX}
The source code containing the extension of \KeYmaeraX for \dRL is available online\cite{prebet_2026_20142337}.
The implementation consists of two parts.
First, the new axioms and rules need to be added to the soundness-critical kernel.
This includes \rref{lem:cprge}, \rref{lem:ghost_ode}.
Additionally, to prevent unsound renaming, the kernel disallows instantiating by a differential variable axioms that are phrased for plain variables (such as in \irref{assignb}).
This showed a completeness issue in the previous implementation of \dRL.
As the axiom \irref{refDG} deals with both kinds of variables, manipulating these nondeterministic assignments also required to duplicate some preexisting axioms with differential variants.

Secondly, the refactorings are implemented as tactics, i.e., automated proofs that combine \verb|provable|s, \KeYmaeraX's certificate of sound inference rules.
This tactics are then made accessible to the user via the tactic language \cite{DBLP:conf/itp/FultonMBP17}.
This provides most of the automation to build complex proofs without being soundness-critical.
The global refinements from \rref{sec:glob_ref} led to the extension of the \verb|useAt| tactic which handles congruence for implications and formula equivalences.
Without refinement, congruence proofs for implication were limited to $C_{ff}^\pm$ contexts where the hole is not under a program, e.g., \(\dbox{\ptest{\cdot_f}}\asfml\) was not supported.
The tactic now supports all monotone and antitone contexts from \rref{def:mono_ctx}, by deriving the approriate sound rule from \rref{thm:useAt}.
Refactorings like \rref{fig:glob_ref} are executed simply by providing the \verb|useAt| tactic with the refinement axiom used as well as the position of $\asprg$ in the sequent.
Local refinements are given a specific tactic \verb|focus| which \emph{automatically} computes the \projctx{} and then derives the sound rule of \rref{lem:focus}, focusing a refinement down to its core change.
The extension to \rref{cor:focusEq} is similar, but since it is not required for the application below, it is left as future work.
Finally, the proof of \rref{thm:fresh} is automatically constructed using the new axiom \irref{refRandomComm}.
The interface takes the form of two tactics, \verb|moveRandomIn| and \verb|moveRandomOut|, that uses \rref{thm:fresh} to rewrite the goal in one way or another.
Thus, \rref{fig:ghost_ref} is done by using \irref{assignb} to introduce the fresh variable, \verb|moveRandomIn| to copy it next to $\asprg$, then applying the ghost equivalence (via \verb|useAt|), before removing the nondeterministic assignment by \verb|moveRandomOut| and \irref{assignb}.

\subsection{Next-generation Airborne Collision Avoidance System ACAS~X}
\label{sec:acas}

ACAS~X is a Airborne Collision Avoidance System developed by the Federal Aviation Administration (FAA).
It gives vertical advisories to a pilot to prevent the plane from colliding with other aircraft.
Previous formal verification of ACAS~X \cite{DBLP:conf/tacas/JeanninGKGSZP15} provides hybrid models safe under various assumptions (e.g., pilot's reaction time).
Crucially, all these models rely on the definition of the region where an advisory is safe.
We focus on the first model where a formula $L^{-1}$ expresses that a given advisory will always be safe, i.e., no further advisory will be required.
Two formulas expressing that property where given:
\begin{itemize}
  \item An implicit domain formulation $L^{-1}_\text{impl}$, that quantifies over multiple variables including time to ensure that the computed positions of the aircraft are sufficiently far apart.
  This formula's intent is clear which makes the proof simpler, but the presence of quantifiers makes the effective checking of such formula inefficient and thus not suited for implementation in a real aircraft.
  \item An explicit domain formulation $L^{-1}_\text{expl}$, where the space is split in different modes, leading to a quantifier-free formula, that can be computed quickly in an actual controller.
\end{itemize}

The ACAS~X safety theorem considered has the form
\[
(\mathit{init} \land L^{-1}) \limply \dbox{\prepeat{((\pchoice{\mathit{advisory};\ptest{L^{-1}}}{\mathit{skip}});\mathit{acc};\mathit{motion})}}{\mathit{nocol}}
\]
At each iteration, if not skipped, a new advisory satisfying $L^{-1}$ is chosen, then the acceleration is updated according to the advisory before the two aircraft move in a continuous $\mathit{motion}$.
Assuming a safe start ($L^{-1}$) and some initial condition, the system always ensures the absence of collision $\mathit{nocol}$.
This formula was proved using $L^{-1}_\text{impl}$ \cite[Theorem 1]{DBLP:conf/tacas/JeanninGKGSZP15}.
These two formulations are then proved equivalent assuming $\mathit{init}$, i.e., \(\mathit{init} \limply (L^{-1}_\text{impl} \lbisubjunct L^{-1}_\text{expl})\) \cite[Lemma 1]{DBLP:conf/tacas/JeanninGKGSZP15}.
Because that equivalence is only conditional, \dL{}'s congruence rule cannot be used to directly prove the same safety property using $L^{-1}_\text{expl}$.
The safety transfer from one model to the other had to be done manually.
This made the proof exceedingly tedious, with over 200 tactics used.

With local refinements, updating the test from $\ptest{L^{-1}_\text{impl}}$ to $\ptest{L^{-1}_\text{expl}}$ is \emph{one} local refinement which requires a proof of 
\[
(\mathit{init} \land L^{-1}_\text{expl}) \limply \dbox{\prepeat{((\pchoice{\mathit{advisory};\ptest{L^{-1}_\text{expl}}}{\mathit{skip}});\mathit{acc};\mathit{motion})}}{\dbox{\mathit{advisory}}{\mathit{init}}}
\]
The condition $\mathit{init}$ being composed of a simple fact about the advisory and that some constants are nonnegative (thus a trivial loop invariant), proving the safety of the system using the explicit formulation follows easily.
Overall, this decreases the size of the proof to 16 tactics, reducing the need for manual guidance by an order of magnitude.

\subsection{Refactoring Event-triggered Control to Time-triggered Control}
We show how the refactoring techniques presented allow transforming an event-triggered system to a time-triggered system.
The overall result is stronger than \cite{DBLP:conf/lics/LoosP16}, in addition to being streamlined by the refactorings and formally proved using \KeYmaeraX.
The equivalent model $\mathit{event}$ (\rref{fig:model}) from \cite{DBLP:conf/lics/LoosP16} already mentions time, which we circumvent by using the ghost refinement (R2) (\rref{fig:lics}).

Event-triggered systems use events as detection mechanisms, which intrinsically assume a continuous sensing from the system: the system regains control as soon as an event occurs.
This behavior makes them simple to prove safe.
Conversely, time-triggered systems react on a clock, similar to traditional software.
They are thus easier to implement in real systems, but also harder to prove.

Refactoring via refinement enables the transfer of the proof from event-triggered systems to time-triggered systems.
Thus, it makes possible to start designing systems as event-triggered, and iterates until a proof of safety is found, before then tackling the more challenging time-triggered system and reusing the insights developed in the first phase without having to restart the proof from scratch.
Essentially, bridging the gap between the two type of triggers makes implementable systems also easier to prove.

For demonstrating the capabilities of refinements, we will use the example of a car with position $x$ which must preserve some safety distance in front $x \leq \astrm$, and may either brake or accelerate.
The continuous dynamics are written $\mathit{plant}$ and could be simply expressing the link between acceleration and position, i.e., $\D{x}=v,\D{v}=a$.
An event-triggered model for the car may sense the position of $x$ exactly and force the controller to take a decision at the limit, i.e., when $x = \astrm$.
The hybrid program $\mathit{event}$ from \rref{fig:model} corresponds to this event-triggered model.
Having a choice between two continuous dynamics, one with $x \leq \astrm$ and one with $x \geq \astrm$ ensures that going from $x < \astrm$ to $x > \astrm$ requires at least two loop iterations, the first one stopping at $x = \astrm$, while the second depend on the controller decision.
\begin{figure}
\vspace*{-1em}
\begin{align*}
\mathit{event} \defeq & \prepeat{((\pchoice{\mathit{brake}}{\ptest{x < \astrm};\mathit{acc}});(\pchoice{\pode{\mathit{plant}}{x \leq \astrm}}{\pode{\mathit{plant}}{x \geq \astrm}}))}\\
\mathit{time} \defeq & \prepeat{((\pchoice{\mathit{brake}}{\ptest{x < \astrm_T};\mathit{acc}});\pupdate{\umod{t}{0}};\pode{\mathit{plant}, \D{t}=1}{t \leq T})}
\end{align*}
\caption{Event and time-triggered systems}
\label{fig:model}
\end{figure}
\vspace*{-0.5em}
On the other hand, a time-triggered system needs an explicit time $t$ and may only stop at the latest every $T$ seconds.
Thus, for this model to be still correct, it should react preemptively, so the condition for accelerating should be stronger, and depends on $T$.
This is expressed in the hybrid program $\mathit{time}$ from \rref{fig:model}.

As a special case, we will consider the situation where both accelerating and braking are constant -- with value $A$, $-B$ respectively -- and the car needs to stop before a position $m$, meaning the postcondition is $x \leq m$.
In that case, $\astrm \defeq m - \frac{v^2}{2B}$ and $\astrm_T \defeq m - \frac{v^2}{2B} - (\frac{A}{B} + 1)(\frac{A}{2}\cdot t^2 + t v)$ lead to safe systems assuming all constants are positive.

A proof $\asfml \limply \dbox{\mathit{event}}{\bsfml}$ for the event model can be transformed into a proof to the time model $\asfml \limply \dbox{\mathit{time}}{\bsfml}$ via the refactorings in \rref{fig:lics}.
\begin{figure}
  \centering
\begin{tikzpicture}
  \node (a) at (0,0) {\(\prepeat{((\pchoice{\mathit{brake}}{\ptest{x < \astrm};\mathit{acc}});(\pchoice{\pode{\mathit{plant}}{x \leq \astrm}}{\pode{\mathit{plant}}{x \geq \astrm}}))}\)};
  
  \node (b) at (0,-1.3) {\(\prepeat{((\pchoice{\mathit{brake}}{\ptest{x < \astrm};\mathit{acc}});\pode{\mathit{plant}}{x \leq \astrm})}\)};
  \draw[->] (a) -- (b) node[midway, right] {global refinement (R1)};
  
  \node (c) at (0,-2.6) {\(\prepeat{((\pchoice{\mathit{brake}}{\ptest{x < \astrm};\mathit{acc}});\pupdate{\pumod{t}{0}};\pode{\mathit{plant},\D{t}=1}{x \leq \astrm})}\)};
  \draw[->] (b) -- (c) node[midway, right] {ghost refinement (R2)};

  \node (d) at (0,-3.9) {\(\prepeat{((\pchoice{\mathit{brake}}{\ptest{x < \astrm_T};\mathit{acc}});\pupdate{\pumod{t}{0}};\pode{\mathit{plant},\D{t}=1}{x \leq \astrm})}\)};
  \draw[->] (c) -- (d) node[midway, right] {local refinement (R3)};

  \node (e) at (0,-5.2) {\(\prepeat{((\pchoice{\mathit{brake}}{\ptest{x < \astrm_T};\mathit{acc}});\pupdate{\pumod{t}{0}};\pode{\mathit{plant},\D{t}=1}{t \leq T})}\)};
  \draw[->] (d) -- (e) node[midway, right] {local refinement (R4)};
\end{tikzpicture}
\caption{Event-triggered to time-triggered system}
\label{fig:lics}
\end{figure}
The first refactoring (R1) removes one of the continuous dynamics.
This is done by a single global refinement using \irref{choiceL}.
Refactoring (R2) introduces the new time variable $t$ in the differential equation.
It corresponds to a ghost refinement using \irref{refDG}.
Compared to (R1), this step will only succeed if $t$ and $t'$ are correctly fresh with regards to the system (e.g., $\mathit{plant}$ does not use $t$) and the postcondition $\bsfml$.
Refactoring (R3) updates the test condition to take into account for the potential delay $T$.
It requires as input a proof that $\astrm_T \leq \astrm$ which is unlikely to hold in general.
Thus, we fall back to local refinement, and need to prove that this inequality holds locally.
For our specific application, this leads to proving \(A > 0 \land B > 0 \land T > 0 \limply \astrm_T < \astrm\) which is simple enough to be proved automatically using tools for first-order arithmetic.
This lemma proved, local refinement justifies the refactoring if \(A > 0 \land B > 0 \land T > 0\) holds locally, i.e., it is an invariant of the system.
Since these are all constants, the proof is trivial.
All the manual reasoning above for (R3) only needed to be done on a local basis, as the extension to the full system is automatic.
Refactoring (R4) is similar but updates the domain constraint.
Since the change occurs inside a differential equation, the resulting lemma is more complex.
Indeed, we must show that \(\dbox{\pode{\mathit{plant}, \D{t}=1}{t \leq T}} x \leq \astrm\) holds locally, i.e., after running the updated controller.
This is the essence of the transformation from event-triggered to time-triggered, which has been distilled from the general proof via the automatic congruence reasoning.

Due to the complex handling of nondeterministic assignments for ghost refinements, (R2) has more than half the total number of internal steps used, but the automation ensures it only requires 5-10 manual steps similar to the others.

\section{Conclusion}
We showed how refinements lead to different notions of formally proved refactoring for hybrid programs.
The flexibility of \dRL's refactoring-as-propositions principle enables proofs to go beyond global reasoning and to have access to local assumptions when proving refinement of subprograms, without drastically extending its axiomatization.
By adding final nondeterministic assignments, we have shown how refinements can suppress the value of some variables, opening the path to ghost refinement as a formal base to allow introduction of fresh variables soundly.
These refactoring techniques via refinements are implemented in the theorem prover \KeYmaeraX.
Since the refactorings presented here rely on refinements for their soundness, they cannot be used directly to prove refactorings that introduce more behaviors to a safe system.
Integrating these refactorings formally would require to use the current techniques to isolate the new behaviors---without affecting its overall execution---so that safety is proved by simply looking at the isolated change rather than the full system.
Future works include supporting refinements of more complex programs of \dL extensions such as communications \cite{DBLP:conf/cade/BriegerMP23}, which are also used in cyber-physical systems.

\begin{credits}
\subsubsection{\ackname}
This work has been supported by an Alexander von Humboldt Professorship, the pilot program Core Informatics (KiKIT) of the Helmholtz Association (HGF), and the Deutsche Forschungsgemeinschaft (DFG, German Research Foundation) - SFB 1608 - 501798263.

\subsubsection{\discintname}
The authors have no competing interests to declare that are
relevant to the content of this article.
\end{credits}

\bibliographystyle{splncs04}
\bibliography{platzer,dRL}

\appendix
\section{Additional Rules and Axioms used}
\label{app:axs}
\begin{calculuscollection}
\begin{calculus}
  \cinferenceRule[randtestmerge|${:}*_{\text{merge}}$]{Merging randoms with test}
  {\linferenceRule[eq]
    {\prandom{x};\ptest{\lexists y \ousfml[y]}}
    {\axkey{\prandom{x};\ptest{\ousfml[x]};\prandom{x}}}
  }{}    

  \cinferenceRule[leqantisym|$\prgeq$]{Antisymmetry of $\refines$}
  {\linferenceRule[equiv]
    {\ausprg \refines \busprg \land \busprg \refines \ausprg}
    {\axkey{\ausprg \prgeq \busprg}}
  }{}
\end{calculus}
\qquad
\begin{calculus}
  \cinferenceRule[MPax|MP]{modus ponens}
  {\linferenceRule[formula]
    {\asfml\limply \bsfml \quad \asfml}
    {\bsfml}
  }{}

  \cinferenceRule[equivifyr|${\limply}2{\lbisubjunct}$]{equivify right}
  {\linferenceRule[sequent]
    {\lsequent{}{\asfml\lbisubjunct\bsfml}}
    {\lsequent{}{\asfml\limply\bsfml}}
  }{}    
\end{calculus}
\\
\begin{calculus}
\cinferenceRuleQuoteDef{id}

\cinferenceRuleQuoteDef{implyr}

\cinferenceRuleQuoteDef{andr}

\cinferenceRuleQuoteDef{andl}
\cinferenceRuleQuoteDef{dW}
\end{calculus}
\qquad
\begin{calculus}
\cinferenceRuleQuoteDef{orr}

\cinferenceRuleQuoteDef{weakenl}
\cinferenceRuleQuoteDef{weakenr}
\cinferenceRule[cut|cut]{cut}
{\linferenceRule[sequent]
  {\lsequent[L]{}{\csfml}
  & \lsequent[L]{\csfml}{}}
  {\lsequent[L]{}{}}
}{}%
\end{calculus}
\\
\begin{calculus}
\cinferenceRuleQuoteDef{band}
\cinferenceRuleQuoteDef{randomd}
\cinferenceRuleQuoteDef{K}
\cinferenceRuleQuoteDef{DE}
\cinferenceRuleQuoteDef{DG}
\end{calculus}
\\
\begin{calculus}
\cinferenceRuleQuoteDef{choicel}
\cinferenceRuleQuoteDef{update}
\cinferenceRuleQuoteDef{seqidr}
\cinferenceRuleQuoteDef{loopl}
\cinferenceRuleQuoteDef{unloop}
\end{calculus}
\quad
\begin{calculus}
\cinferenceRuleQuoteDef{choicer}
\cinferenceRuleQuoteDef{assignmerge}
\cinferenceRule[skip|${:}{*}_{\text{skip}}$]{Skip random}
{\linferenceRule[leq]
  {\prandom{x}}
  {\ptest{\ltrue}}
}{}%
\cinferenceRuleQuoteDef{loopr}
\cinferenceRuleQuoteDef{unfold-l}
\end{calculus}
\\
\begin{calculus}
\cinferenceRuleQuoteDef{assigndet}
\cinferenceRuleQuoteDef{ode}
\end{calculus}
\end{calculuscollection}

\section{Proofs for \rref{sec:glob_ref} and \rref{sec:loc_ref}}\label{app:312}

\irlabel{prop|prop}
\irlabel{R|Real}
\begin{proof}[\rref{thm:useAt}]
  The proof use axioms in the form of implication or equivalence as a rule.
  \rref{fig:ax_rule} shows how the implicit rule on the left expands to the concrete proof via \irref{MPax} (and \irref{equivifyr}).

  \begin{figure}
  \begin{minipage}{0.20\textwidth}
  \begin{sequentdeduction}[array]
    \linfer[Ax]{\lsequent[L]{}{\bsfml}}{\lsequent[L]{}{\asfml}}
  \end{sequentdeduction}
  \end{minipage}
  \begin{minipage}{0.45\textwidth}
  \begin{sequentdeduction}[array]
    \linfer[MPax]{
      \linfer[Ax]{}{\lsequent{}{\bsfml\limply\asfml}}
      !\lsequent[L]{}{\bsfml}
    }{\lsequent[L]{}{\asfml}}
  \end{sequentdeduction}
  \end{minipage}
  \begin{minipage}{0.45\textwidth}
  \begin{sequentdeduction}[array]
    \linfer[MPax]{
      \linfer[equivifyr]{
        \linfer[Ax]{}{\lsequent{}{\bsfml\lbisubjunct\asfml}}
      }{\lsequent{}{\bsfml\limply\asfml}}
      !\lsequent[L]{}{\bsfml}
    }{\lsequent[L]{}{\asfml}}
  \end{sequentdeduction}
  \end{minipage}
  \caption{Axiom as rule with the corresponding proof}\label{fig:ax_rule}
  \end{figure}

  The derivation is done by induction on the context $C_{k_1,k_2}^{\pm}$.
  The inference does not depend on the type of hole ($k_2$), only the constructors and the polarities, thus the inference used generic names -- $\asprg_i,\bsprg_i$ for programs, $\asfml_i,\bsfml_i$ for formulas -- to increase readability.
  The case for first-order logic constructs are standard and are omitted.
  In line with the implementation, rather than implications $\asfml \limply \bsfml$, we use sequents $\lsequent{\asfml}{\bsfml}$ instead, which are equivalent (by \irref{MPax} and \irref{implyr}).

  \begin{itemize}
  \item $\cdot_p$: the inference is straightforward as $\contextapp{\cdot_p}{\asprg} = \asprg$.
    \[\linfer{\asprg \refines \bsprg}{\contextapp{\cdot_p}{\asprg} \refines \contextapp{\cdot_p}{\bsprg}}\]
  
  \item $\asprg$ (no hole):
  \[\linfer[leqrefl]{}{\asprg \refines \asprg}\]
  
  \item $\ptest{C_{fk}^\pm}$:
  \begin{sequentdeduction}[array]
    \linfer[test]{
      \linfer[implyr]{
        \lsequent{\asfml}{\bsfml}
      }\lsequent{}{\asfml \limply \bsfml}
    }{\lsequent{}{\ptest{\asfml} \refines \ptest{\bsfml}}}
  \end{sequentdeduction}
  
  \item $C_{pk}^\pm;C_{pk}^\pm$:
  \begin{sequentdeduction}[array]
    \linfer[sequence]{
      \linfer[andr]{
        \lsequent{}{\asprg_1 \refines \bsprg_1}
        ! \linfer[G]{
          \lsequent{}{\asprg_2 \refines \bsprg_2}
        }{\lsequent{}{\dbox{\asprg_1}(\asprg_2 \refines \bsprg_2)}}
      }{\lsequent{}{\asprg_1 \refines \bsprg_1 \land \dbox{\asprg_1}\asprg_2\refines \bsprg_2}}
    }\lsequent{}{\asprg_1;\asprg_2 \refines \bsprg_1;\bsprg_2}
  \end{sequentdeduction}
  
  \item $\pchoice{C_{pk}^\pm}{C_{pk}^\pm}$:
  \begin{sequentdeduction}[array]
    \linfer[choicel]{
      \linfer[andr]{
        \linfer[choicer]{
          \linfer[orr]{
            \linfer[weakenr]{
              \lsequent{}{\asprg_1 \refines \bsprg_1}
            }{\lsequent{}{\asprg_1 \refines \bsprg_1, \asprg_1 \refines \bsprg_2}}
          }{\lsequent{}{\asprg_1 \refines \bsprg_1 \lor \asprg_1 \refines \bsprg_2}}
        }{\lsequent{}{\asprg_1 \refines \pchoice{\bsprg_1}{\bsprg_2}}}
        !\linfer[choicer]{
          \linfer[orr]{
            \linfer[weakenr]{
              \lsequent{}{\asprg_2 \refines \bsprg_2}
            }{\lsequent{}{\asprg_2 \refines \bsprg_1, \asprg_2 \refines \bsprg_2}}
          }{\lsequent{}{\asprg_2 \refines \bsprg_1 \lor \asprg_2 \refines \bsprg_2}}
        }{\lsequent{}{\asprg_2 \refines \pchoice{\bsprg_1}{\bsprg_2}}}
      }{\lsequent{}{\asprg_1 \refines \pchoice{\bsprg_1}{\bsprg_2} \land \asprg_2\refines \pchoice{\bsprg_1}{\bsprg_2}}}
    }{\lsequent{}{\pchoice{\asprg_1}{\asprg_2}\refines \pchoice{\bsprg_1}{\bsprg_2}}}
  \end{sequentdeduction}
  
  \item $\pode{\D{x}=\astrm}{C_{fk}^\pm}$:
  \begin{sequentdeduction}[array]
    \linfer[refOde]{
      \linfer[band]{
        \linfer[andr]{
          \linfer[DE]{
            \linfer[G]{
              \linfer[assignb]{
                \linfer[R]{}{\lsequent{}{\astrm = \astrm}}
              }{\lsequent{}{\dbox{\umod{\D{x}}{\astrm}}\D{x}=\astrm}}
            }{\lsequent{}{\dbox{\pode{\D{x}=\astrm}{\asfml}}{\dbox{\umod{\D{x}}{\astrm}}\D{x}=\astrm}}}
          }{\lsequent{}{\dbox{\pode{\D{x}=\astrm}{\asfml}}{\D{x}=\astrm}}}
          !
          \linfer[dW]{
            \lsequent{\asfml}{\bsfml}
          }{\lsequent{}{\dbox{\pode{\D{x}=\astrm}{\asfml}}{\bsfml}}}
        }{\lsequent{}{\dbox{\pode{\D{x}=\astrm}{\asfml}}{\D{x}=\astrm} \land \dbox{\pode{\D{x}=\astrm}{\asfml}}{\bsfml}}}
      }{\lsequent{}{\dbox{\pode{\D{x}=\astrm}{\asfml}}{(\D{x}=\astrm \land \bsfml)}}}
    }{\lsequent{}{\pode{\D{x}=\astrm}{\asfml} \refines \pode{\D{x}=\astrm}{\bsfml}}}
  \end{sequentdeduction}
  
  \item $\prepeat{C_{pk}^\pm}$:
  \begin{sequentdeduction}
    \linfer[unloop]{
      \linfer[G]{
        \lsequent{}{\asprg \refines \bsprg}
      }{\lsequent{}{\dbox{\prepeat{\asprg}}{\asprg \refines \bsprg}}}
    }{\lsequent{}{\prepeat{\asprg} \refines \prepeat{\bsprg}}}
  \end{sequentdeduction}
  
  \item $\dbox{C_{pk}^\mp}{C_{fk}^\pm}$:
  \begin{sequentdeduction}[array]
    \linfer[MPax]{
      \linfer[boxref]{
        \linfer[weakenl]{
          \lsequent{}{\bsprg\refines\asprg}
        }{\lsequent{\dbox{\asprg}{\asfml}}{\bsprg\refines\asprg}}
      }{\lsequent{\dbox{\asprg}{\asfml}}{\dbox{\asprg}{\bsfml}\limply\dbox{\bsprg}{\bsfml}}}
      !
      \linfer[MPax]{
        \linfer[K]{
          \linfer[weakenl]{
            \linfer[G]{
              \linfer[implyr]{
                \lsequent{\asfml}{\bsfml}
              }{\lsequent{}{\asfml \limply \bsfml}}
            }{\lsequent{}{\dbox{\asprg}{\asfml \limply \bsfml}}}
          }{\lsequent{\dbox{\asprg}{\asfml}}{\dbox{\asprg}{\asfml \limply \bsfml}}}
        }{\lsequent{\dbox{\asprg}{\asfml}}{\dbox{\asprg}{\asfml}\limply\dbox{\asprg}{\bsfml}}}
        !
        \linfer[id]{}{
          \lsequent{\dbox{\asprg}{\asfml}}{\dbox{\asprg}{\asfml}}
        }
      }{\lsequent{\dbox{\asprg}{\asfml}}{\dbox{\asprg}{\bsfml}}}
    }{\lsequent{\dbox{\asprg}{\asfml}}{\dbox{\bsprg}{\bsfml}}}
  \end{sequentdeduction}
  
  \item $C_{pk}^\mp \refines C_{pk}^\pm$:
  
  \hspace*{-1cm}
  \begin{minipage}{1\textwidth}
  \renewcommand{\linferPremissSeparation}{\hspace{0.1cm}}
  \begin{sequentdeduction}[array]
    \linfer[MPax]{
      \linfer[leqtrans]{
        \linfer[MPax]{
          \linfer[leqtrans]{
            \linfer[weakenl]{
              \lsequent{}{\bsprg_1\refines\asprg_1}
            }{\lsequent{\asprg_1\refines\asprg_2}{\bsprg_1\refines\asprg_1}}
          }{\lsequent{\asprg_1\refines\asprg_2}{\asprg_1\refines\asprg_2 \limply \bsprg_1\refines \asprg_2}}
          !
          \linfer[id]{}{
              \lsequent{\asprg_1\refines\asprg_2}{\asprg_1\refines\asprg_2}
          }
        }\lsequent{\asprg_1\refines\asprg_2}{\bsprg_1 \refines \asprg_2}
      }{\lsequent{\asprg_1\refines\asprg_2}{\asprg_2\refines\bsprg_2 \limply \bsprg_1\refines \bsprg_2}}
      ! \linfer[weakenl]{
        \lsequent{}{\asprg_2\refines\bsprg_2}
        }\lsequent{\asprg_1\refines\asprg_2}{\asprg_2\refines\bsprg_2}
    }{\lsequent{\asprg_1\refines \asprg_2}{\bsprg_1\refines \bsprg_2}}
  \end{sequentdeduction}
  \end{minipage}
  
  \end{itemize}
\end{proof}

Having proved \rref{thm:useAt}, the same rules as \rref{fig:ax_rule} may now be used within contexts, that is:
\begin{sequentdeduction}[array]
    \linfer[Ax]{\lsequent[L]{}{C_{ff}^+(\bsfml)}}{\lsequent[L]{}{C_{ff}^+(\asfml)}}
\end{sequentdeduction}
Proofs using refinements often benefit from introducing intermediate programs, where each one is related to the previous one by the use of an axiom (with sometimes \rref{thm:useAt} implicitly).
For that purpose, we will use the following notation:
\begin{align*}
	\asprg &\prgeq \bsprg \text{ by }(C)\\
	&\refines \csprg \text{ by }(D)
\end{align*}
This means that $\asprg \prgeq \bsprg$ is derivable by the axiom $(C)$ and $\bsprg \refines \csprg$ is derivable by the axiom $(D)$.
The derivation of $\asprg \refines \csprg$ is then obtained via the transitivity of the refinement relation (\irref{leqtrans}).
As both sequential composition and choice are associative, such rewriting will be omitted for clarity.
For instance, we can derive that nondeterministic assignment is idempotent (\irref{randidem}):
\begin{align*}
  \prandom{x};\prandom{x} &\prgeq \prandom{x};\ptest{\ltrue};\prandom{x} & \text{by }\irref{seqidr}\\
  &\prgeq \prandom{x};\ptest{\lexists x \ltrue} & \text{by }\irref{randtestmerge}\\
  &\prgeq \prandom{x};\ptest{\ltrue} & \text{by FOL}\\
  &\prgeq \prandom{x} & \text{by }\irref{seqidr}
\end{align*}
\begin{calculus}
    \cinferenceRule[randidem|${:=}{*}_{\text{idem}}$]{Random idempotent}
  {\linferenceRule[eq]
    {\prandom{x}}
    {\prandom{x};\prandom{x}}
  }{}    
\end{calculus}

This notation will be used extensively for the proof of \rref{thm:fresh}.

\begin{proof}[\rref{lem:cprge}]
  Given our syntax, if a context $C_{fp}$ has a single hole $\cdot_p$, then it always has a polarity.
  For instance, if $C_{fp}$ is monotone, then by \rref{thm:useAt}, \(C_{fp}(\asprg) \limply C_{fp}(\bsprg)\) can be derived from \(\asprg \refines \bsprg\) but also \(C_{fp}(\bsprg) \limply C_{fp}(\asprg)\) can be derived from \(\bsprg \refines \asprg\).
  Thus, the rule \irref{CPrgE} is derived using antisymmetry of program equivalence and formula equivalence.

  In case the context $C_{fp}$ has $n > 1$ holes, each hole is numbered using $0 \leq i < n$.
  For $0 \leq i < n$, we define $C_{fp}^i$ as the single hole context where each hole $j < i$ is replaced by $\bsprg$, and each hole $j > i$ is replaced by $\asprg$.
  In particular, we have $C_{fp}^0(\asprg) = C_{fp}(\asprg)$, for $0 < i < n$, $C_{fp}^{i}(\asprg) = C_{fp}^{i-1}(\bsprg)$, and $C_{fp}^{n-1}(\bsprg) = C_{fp}(\bsprg)$.
  Thus, by transitivity, it boils down to the single hole case used repeatedly on $C_{fp}^i(\asprg), C_{fp}^i(\bsprg)$ for all $i < n$.
\end{proof}

\begin{proof}[\rref{lem:focus}]
We prove the equivalent result that the sequent \(\lsequent{C_\asprg(\asprg \refines \bsprg)}{\contextapp{C}{\asprg} \refines \contextapp{C}{\bsprg}}\) is derivable.
It derives the rule from \rref{lem:focus} as follows:
\begin{sequentdeduction}[array]
  \linfer[cut]{
    \linfer[weakenl]{
      \linfer{}{
        \lsequent{C_\asprg(\asprg \refines \bsprg)}{\contextapp{C}{\asprg} \refines \contextapp{C}{\bsprg}}
      }
    }{\lsequent{\Gamma,C_\asprg(\asprg \refines \bsprg)}{\contextapp{C}{\asprg} \refines \contextapp{C}{\bsprg}}}
    !
    \linfer[weakenr]{
      \lsequent{\Gamma}{C_\asprg(\asprg \refines \bsprg)}
    }{\lsequent{\Gamma}{\contextapp{C}{\asprg} \refines \contextapp{C}{\bsprg}, C_\asprg(\asprg \refines \bsprg)}}
  }{\lsequent{\Gamma}{\contextapp{C}{\asprg} \refines \contextapp{C}{\bsprg}}}
\end{sequentdeduction}
For constructors that do not affect the projective context $C_\asprg$, the inferences are similar to the ones of \rref{thm:useAt}, while preserving the context $C_\asprg(\asprg \refines \bsprg)$.
For the sequence and the loop however, the rule \irref{G} removes all existing context.
As the projective context contains the same box modality as the one removed by \irref{G}, the inference can be adapted as follows -- note that for the loop case, we have $C(\asprg)$ instead of $\csprg$ though the inference is identical.
\begin{sequentdeduction}[array]
  \hspace*{-1cm}
  \linfer[MPax]{
    \linfer[weakenl]{
      \linfer[K]{
        \linfer[G]{
          \linfer[implyr]{
            \lsequent{C_\asprg(\asprg \refines \bsprg)}{\contextapp{C}{\asprg} \refines \contextapp{C}{\bsprg}}
          }{\lsequent{}{C_\asprg(\asprg \refines \bsprg) \limply \contextapp{C}{\asprg} \refines \contextapp{C}{\bsprg}}}
        }{\lsequent{}{\dbox{\csprg}{(C_\asprg(\asprg \refines \bsprg) \limply \contextapp{C}{\asprg} \refines \contextapp{C}{\bsprg})}}}
      }{\lsequent{}{\dbox{\csprg}{C_\asprg(\asprg \refines \bsprg)} \limply \dbox{\csprg}{\contextapp{C}{\asprg} \refines \contextapp{C}{\bsprg}}}}
    }{\lsequent{\dbox{\csprg}{C_\asprg(\asprg \refines \bsprg)}}{\dbox{\csprg}{C_\asprg(\asprg \refines \bsprg)} \limply \dbox{\csprg}{\contextapp{C}{\asprg} \refines \contextapp{C}{\bsprg}}}}
    !
    \linfer[id]{}{
      \lsequent{\dbox{\csprg}{C_\asprg(\asprg \refines \bsprg)}}{\dbox{\csprg}{C_\asprg(\asprg \refines \bsprg)}}
    }
  }{\lsequent{\dbox{\csprg}{C_\asprg(\asprg \refines \bsprg)}}{\dbox{\csprg}{\contextapp{C}{\asprg} \refines \contextapp{C}{\bsprg}}}}
\end{sequentdeduction}
\end{proof}

\section{Proofs for \rref{sec:ghost}}\label{app:33}

Regarding the proofs that new axioms are sound, we refer to the standard semantics of \dRL{}\cite{DBLP:conf/ijcar/PrebetP24}.
\begin{proof}[\rref{lem:ghost_ode}]
  \begin{itemize}
    \item \irref{odeconst}:
    The axiom can be derived directly from the logic.
    From the side conditions, we have that $x$ is not free in the equivalence.
    
    \hspace*{-2cm}
    \begin{minipage}{\textwidth}
    \begin{sequentdeduction}[array]
      \linfer[assignb]{
        \linfer[assigndet]{
          \linfer[sequence]{
            \linfer[andr]{
              \linfer[assignmerge]{\lclose}{
                \lsequent{}{\umod{x}{\astrm} \prgeq \umod{x}{\astrm};\umod{x}{\astrm}}
              }
            ! \linfer[G]{
                \linfer[leqrefl]{\lclose}{
                  \lsequent{}{\pode{\D{y}=\astrm}{\bsfml};\prandom{x} \prgeq \pode{\D{y}=\astrm}{\bsfml};\prandom{x}}
                }
              }{\lsequent{}{\dbox{\umod{x}{\astrm}}{(\pode{\D{y}=\astrm}{\bsfml};\prandom{x} \prgeq \pode{\D{y}=\astrm}{\bsfml};\prandom{x})}}}
            }{\lsequent{}{\umod{x}{\astrm} \prgeq \umod{x}{\astrm};\umod{x}{\astrm} \land \dbox{\umod{x}{\astrm}}{(\pode{\D{y}=\astrm}{\bsfml};\prandom{x} \prgeq \pode{\D{y}=\astrm}{\bsfml};\prandom{x})}}}
          }{\lsequent{}{\umod{x}{\astrm};\pode{\D{y}=\astrm}{\bsfml};\prandom{x} \prgeq \umod{x}{\astrm};\umod{x}{\astrm};\pode{\D{y}=x}{\bsfml};\prandom{x}}}
        }{\lsequent{}{\dbox{\umod{x}{\astrm}}{(\pode{\D{y}=\astrm}{\bsfml};\prandom{x} \prgeq \umod{x}{\astrm};\pode{\D{y}=x}{\bsfml};\prandom{x})}}}
      }{\lsequent{}{\pode{\D{y}=\astrm}{\bsfml};\prandom{x} \prgeq \umod{x}{\astrm};\pode{\D{y}=x}{\bsfml};\prandom{x}}}
    \end{sequentdeduction}
    \end{minipage}
    
    \item \irref{refDG}:
    We prove the axiom by proving the equivalent formula \cite{DBLP:conf/lics/LoosP16}:
    \begin{align*}
      \lforall{x^+}\lforall{y^+}(&\ddiamond{\pupdate{\umod{x}{c}};\pode{\D{y}=\astrm,\D{x}=\bstrm}{\bsfml};\prandom{x};\prandom{\D{x}}}{(x = x^+ \land y = y^+)}\\
      &\lbisubjunct \ddiamond{\pode{\D{y}=\astrm}{\bsfml};\prandom{x};\prandom{\D{x}}}{(x = x^+ \land y = y^+)})
    \end{align*}
    The formula $\ddiamond{\prandom{x};\prandom{\D{x}}}{(x = x^+ \land y = y^+)}$ reduces by \irref{randomd} to the first-order formula: $\lexists{x}\lexists{x'}{(x = x^+ \land y = y^+)}$ which is equivalent to $y = y^+$.
    So by duality, our axiom \irref{refDG} is equivalent to:
    \[\lforall{y^+}(\dbox{\pupdate{\umod{x}{c}};\pode{\D{y}=\astrm,\D{x}=\bstrm}{\bsfml}}y \neq y^+ \lbisubjunct \dbox{\pode{\D{y}=\astrm}{\bsfml}}y \neq y^+)\]
    We prove both directions using the fact that axiom \irref{DG} is also sound when replacing the existential quantifier by a universal quantifier \cite{DBLP:journals/jar/Platzer17}.
    \begin{itemize}
      \item $\dbox{\pupdate{\umod{x}{c}};\pode{\D{y}=\astrm,\D{x}=\bstrm}{\bsfml}}y \neq y^+$ implies \(\lexists{x}\dbox{\pode{\D{y}=\astrm,\D{x}=\bstrm}{\bsfml}}y \neq y^+\) by choosing $c$ as value for $x$.
      By \irref{DG}, we thus have $\dbox{\pode{\D{y}=\astrm}{\bsfml}}y \neq y^+$.
      \item Assuming $\dbox{\pode{\D{y}=\astrm}{\bsfml}}y \neq y^+$, by using the variant of \irref{DG} with universal quantification, we have \(\lforall{x}\dbox{\pode{\D{y}=\astrm,\D{x}=\bstrm}{\bsfml}}y \neq y^+\).
      In particular, the property holds for $x = c$ and thus $\dbox{\pupdate{\umod{x}{c}};\pode{\D{y}=\astrm,\D{x}=\bstrm}{\bsfml}}y \neq y^+$.
    \end{itemize}
  \end{itemize}
\end{proof}

The derivation of \rref{thm:fresh} requires the addition of two new axioms.
\irref{DErefL} highlights that ODEs overwrite the value of differential variables, while \irref{refRandomComm} enables a quicker implementation, by swapping nondeterministic assignment easily.

\vspace{0.2em}
\begin{calculus}
  \cinferenceRule[DErefL|$\text{DE}_{\prgeq l}$]{DE left axiom}
{\linferenceRule[eq]
  {\prandom{\D{x}};\pode{\D{x}=\oustrm[x]}{\ousfml[x]}}
  {\pode{\D{x}=\oustrm[x]}{\ousfml[x]}}
}{}
  \cinferenceRule[refRandomComm|{${{:=}{*}}_{\text{swap}}$}]{Nondet commute}
{\linferenceRule[eq]
  {\prandom{x};\asprg}
  {\asprg;\prandom{x}}
}{$x$ fresh for $\asprg$}
\end{calculus}

\begin{lemma}\label{lem:rand}
The formulas \irref{DErefL} and \irref{refRandomComm} is valid.
\end{lemma}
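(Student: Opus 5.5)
Both formulas are program equivalences, so I would prove validity semantically with the standard relational semantics of \dRL{} \cite{DBLP:conf/ijcar/PrebetP24}. For each side I compute the transition relation and show the two relations coincide. Then $\asprg \prgeq \bsprg$ holds in every state, because a refinement $\asprg\refines\bsprg$ is true in $\nu$ iff every $\omega$ with $(\nu,\omega)\in\llbracket\asprg\rrbracket$ also satisfies $(\nu,\omega)\in\llbracket\bsprg\rrbracket$.

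For \irref{DErefL}, write $\nu_d$ for $\nu$ with $\D{x}$ changed to an arbitrary value $d$. The relation of $\prandom{\D{x}};\pode{\D{x}=\oustrm[x]}{\ousfml[x]}$ is the union over $d$ of the ODE runs starting in $\nu_d$. Choosing $d=\nu(\D{x})$ shows that every run from $\nu$ is included, which gives one direction. For the converse direction, the key observation is semantic. The initial value of $\D{x}$ does not influence an ODE run, since $\D{x}$ is not free in $\oustrm[x]$ or $\ousfml[x]$. Moreover, every final state, including one reached after duration $0$, has $\D{x}$ set to $\oustrm[x]$ of the current state. Hence a run $\varphi$ from $\nu_d$ yields a run from $\nu$ with the same final state. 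The only care point is the duration-$0$ case: its final state is $\nu$ with $\D{x}:=\oustrm$ and domain $\ousfml$ checked, and this does not depend on $d$. I would invoke the coincidence lemma for ODEs here, which says the value of $\D{x}$ outside the evolution is irrelevant.

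For \irref{refRandomComm}, $x$ is fresh for $\asprg$, so $x$ is neither free nor bound in $\asprg$. I would use the coincidence and bound-effect lemmas of \dL{} (as in \cite{DBLP:journals/jar/Platzer17}) to factor $\asprg$'s relation as a relation on the variables other than $x$, with $x$ left unchanged. With $\nu^x_r$ denoting $\nu$ with $x$ set to $r$, I then chase states in both directions:
\begin{itemize}
\item If $(\nu,\mu)\in\llbracket\prandom{x}\rrbracket$ and $(\mu,\omega)\in\llbracket\asprg\rrbracket$, then $\mu=\nu^x_r$ for some $r$. By coincidence there is $\omega'$ with $(\nu,\omega')\in\llbracket\asprg\rrbracket$ that agrees with $\omega$ except on $x$. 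By the bound-effect lemma, $\omega(x)=r$, so $\omega={\omega'}^x_r$, which is reachable by $\asprg;\prandom{x}$.
\item The converse direction is symmetric.
\end{itemize}

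The main obstacle is the freshness bookkeeping. The paper's freshness must imply that $x$ is not free \emph{and} not bound in $\asprg$. If $x$ is a plain variable, $\D{x}$ must also not be bound, so that ODEs in $\asprg$ cannot touch it. I would state this reading explicitly before applying coincidence. An alternative would be a syntactic induction on $\asprg$, using \irref{randtest}, \irref{assignrandcomm} and \irref{randcomm} for atoms, the loop rules for $^*$, and \irref{seqdistl} for choice. However, the ODE case would again need a semantic argument, so the direct semantic proof is preferable.
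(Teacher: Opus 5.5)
Your proposal is correct and matches the paper's proof. For \irref{DErefL} you use, as the paper does, that an ODE solution starts from a $\{\D{x}\}$-variation of the initial state and that $\D{x}$ is overwritten by $\oustrm[x]$ (including at duration $0$); for \irref{refRandomComm} you combine coincidence with the bound-effect lemma so that changing $x$ commutes with $\asprg$, then chase states in both directions exactly as the paper does. Your extra reading that $\D{x}$ must also not be bound in $\asprg$ is unnecessary: $\prandom{x}$ never touches $\D{x}$, and any ODE evolving $x$ already puts $x$ into $\boundvars{\asprg}$, so $x\notin\freevars{\asprg}\cup\boundvars{\asprg}$ is all your state chase actually uses.
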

\begin{proof}~
\begin{itemize}
  \item \irref{DErefL}:
      The axiom follow from the fact that a solution $\sol$ to the differential equation must satisfies
      \begin{itemize}
        \item $\sol(0)$ is a $\set{\D{x}}$-variation of the initial state, i.e., it is equal to the initial state for all variables but $\D{x}$.
        So the solution is not affected by the nondeterministic assignment on $\D{x}$.
        \item $\ivaluation{\Isol}{x'} = \ivaluation{\Isol}{f(x)}$ for all $t$, so the value assigned to $\D{x}$ by the nondeterministic assignment is overwritten by the solution $\sol$.
      \end{itemize}

  \item \irref{refRandomComm}:
  As $x$ is fresh for $\asprg$, in particular, $x \notin \freevars{\asprg}$ and $x \notin \boundvars{\asprg}$.
  This means we can use the coincidence property for free variables and the bound effect for programs \cite{DBLP:conf/ijcar/PrebetP24}.
  We note $\modif{\iget[state]{\Itt}}{x}{r}$ with $r\in\R$ for the state where $\ivaluation{\It}y = \ivaluation{\Itt}y$ for all variables $y \neq x$ and $\ivaluation{\It}x = r$.
  The coincidence property implies that if $\iaccessible[\asprg]{\I}{\It}$, then for all $r\in\R$, there exists $r'\in\R$ such that $\iaccessible[\asprg]{\imodif[state]{\I}{x}{r}}{\imodif[state]{\It}{x}{r'}}$.
  On the other hand, the bound effect implies that if $\iaccessible[\asprg]{\I}{\It}$, then $\ivaluation{\I}{x} = \ivaluation{\It}{x}$.
  Thus, we can further refine the coincidence property, ensuring that $r' = r$, i.e., $\iaccessible[\asprg]{\imodif[state]{\I}{x}{r}}{\imodif[state]{\It}{x}{r}}$ if and only if $\iaccessible[\asprg]{\I}{\It}$.

  Take $\iaccessible[\asprg;\prandom{x}]{\I}{\It}$, then there exists $\iget[state]{\Itt}$ such that $\iaccessible[\asprg]{\I}{\Itt}$ and $\iaccessible[\prandom{x}]{\Itt}{\It}$.
  As the latter program only affects $x$ nondeterministically, we have $\iget[state]{\It} = \modif{\iget[state]{\Itt}}{x}{r}$ for some $r\in\R$.
  But then we have $\iaccessible[\prandom{x}]{\I}{\imodif[state]{\I}{x}{r}}$ and $\iaccessible[\asprg]{\imodif[state]{\I}{x}{r}}{\imodif[state]{\Itt}{x}{r}}$, so the program $\prandom{x};\asprg$ can also reach the state $\iget[state]{\It}$ from the state $\iget[state]{\I}$.

  Conversely, take $\iaccessible[\prandom{x};\asprg]{\I}{\It}$, then there exists $r\in\R$ such that we have $\iaccessible[\prandom{x}]{\I}{\imodif[state]{\I}{x}{r}}$ and $\iaccessible[\asprg]{\imodif[state]{\I}{x}{r}}{\It}$.
  But then we have $\iaccessible[\asprg]{\imodif[state]{\I}{x}{\nu(x)}}{\imodif[state]{\It}{x}{\nu(x)}}$ and $\iaccessible[\prandom{x}]{\imodif[state]{\It}{x}{\nu(x)}}{\imodif[state]{\It}{x}{\omega(x)}}$ with $\iget[state]{\I} = \modif{\iget[state]{\I}}{x}{\nu(x)}$ and $\iget[state]{\It} = \modif{\iget[state]{\It}}{x}{\omega(x)}$, so the program $\asprg;\prandom{x}$ can also reach the state $\iget[state]{\It}$ from the state $\iget[state]{\I}$.
\end{itemize}
\end{proof}

Due to the handling of loops when proving \rref{thm:fresh}, we first show the following:
\begin{lemma}\label{lem:notfree}
  If $x$ is not free in $\asprg$, then the formula \((\prandom{x};\asprg;\prandom{x}) \prgeq \asprg;\prandom{x}\) is derivable.
\end{lemma}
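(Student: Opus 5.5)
Both refinement directions follow by induction on the structure of $\asprg$. We combine them with \irref{leqantisym}, and glue the steps together using the chained-rewriting notation and \rref{thm:useAt}. The direction $\asprg;\prandom{x} \refines \prandom{x};\asprg;\prandom{x}$ should be cheap and need no induction. By \irref{seqidl}, $\asprg;\prandom{x} \prgeq \ptest{\ltrue};\asprg;\prandom{x}$. By \irref{skip}, $\ptest{\ltrue} \refines \prandom{x}$. By \rref{thm:useAt} in the monotone context $\cdot_p;\asprg;\prandom{x}$, this gives $\asprg;\prandom{x} \refines \prandom{x};\asprg;\prandom{x}$. So all the work is in the converse direction, $\prandom{x};\asprg;\prandom{x} \refines \asprg;\prandom{x}$. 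Here the hypothesis that $x$ is not free in $\asprg$ is used. Note that $x$ may be bound in $\asprg$, so \irref{refRandomComm} cannot be applied directly to $\asprg$.

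For the converse direction, I would prove the full equivalence by structural induction on $\asprg$, with a case split on the atomic programs.

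\emph{Atoms with $x$ not bound} (tests, assignments to $y\neq x$, $\prandom{y}$, and ODEs not mentioning $x$ or $\D{x}$): $x$ is fresh, so \irref{refRandomComm} gives $\prandom{x};\asprg \prgeq \asprg;\prandom{x}$. We then conclude with \irref{randidem}.

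\emph{$\pupdate{\umod{x}{\astrm}}$ with $x\notin\astrm$}: \irref{update} rewrites it as $\prandom{x};\ptest{x=\astrm}$. Then \irref{randtestmerge} together with \irref{randidem} collapses $\prandom{x};\prandom{x};\ptest{x=\astrm};\prandom{x}$ to $\prandom{x};\ptest{\lexists{y}{y=\astrm}}$, which is $\prandom{x}$ by FOL. The same chain applied to the right-hand side gives $\prandom{x}$ as well.

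\emph{$\prandom{x}$}: this is \irref{randidem}.

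\emph{ODEs binding $x$ but not reading it}: I expect this case cannot occur. Every variable with $\D{x}$ on the left is free (its initial value matters), so $x$ not free excludes $x$ as an ODE variable. However, $\D{x}$ itself may be bound in the ODE; when the ghost is $\D{x}$, \irref{DErefL} handles that.

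For the composite cases I would proceed as follows.

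\emph{Choice}: use \irref{seqdistl} and \irref{seqdistr} to push $\prandom{x};\cdot;\prandom{x}$ into both branches, apply the induction hypotheses, and recombine via \rref{thm:useAt}.

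\emph{Sequence $\asprg_1;\asprg_2$}: compute
$\prandom{x};\asprg_1;\asprg_2;\prandom{x} \prgeq \prandom{x};\asprg_1;\prandom{x};\asprg_2;\prandom{x}$.
To justify this step, first use the induction hypothesis for $\asprg_2$ backwards, inserting $\prandom{x}$ in front of $\asprg_2$ (this is a congruence step). Then apply the induction hypothesis for $\asprg_1$ to collapse the leading $\prandom{x};\asprg_1;\prandom{x}$ into $\asprg_1;\prandom{x}$. Finally apply the induction hypothesis for $\asprg_2$ again to obtain $\asprg_1;\asprg_2;\prandom{x}$.

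\emph{Loop $\prepeat{\asprg}$}: this is the main obstacle. Define $D \equiv \prandom{x}$. The induction hypothesis is $D;\asprg;D \prgeq \asprg;D$, and the goal is $D;\prepeat{\asprg};D \refines \prepeat{\asprg};D$.
\begin{itemize}
\item Unfold with \irref{unfold-l} and distribute: the $\ptest{\ltrue}$ branch gives $D;D \prgeq D \refines \prepeat{\asprg};D$.
\item This reduces the goal to showing $\prepeat{\asprg};D$ absorbs a leading $D$, i.e.\ $D;\prepeat{\asprg};D \refines \prepeat{\asprg};D$, via \irref{loopl} with $\busprg \equiv \prepeat{\asprg};D$.
\item I would first show $\prepeat{\asprg};D \prgeq \prepeat{(\asprg;D)};D$ using \irref{unloop} in both directions together with the induction hypothesis.
\item I would then try \irref{loopr} in the form: $D;\prepeat{\asprg};D \refines \busprg$ follows from $D;\asprg;D \refines \busprg$-style absorption, i.e.\ from the induction hypothesis that $\asprg;D$ absorbs a leading $D$ and that $D$ is idempotent.
\end{itemize}
The real difficulty is finding the right combination of \irref{loopl}/\irref{loopr} so that the inductive premise matches. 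If needed, I would fall back to proving $\prepeat{\asprg};D \prgeq D;\prepeat{\asprg};D$ by showing that $\prepeat{\asprg};D$ satisfies the absorption equation $D;\busprg \refines \busprg$ through a generalized \irref{loopl}-style induction.
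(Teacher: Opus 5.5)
Your base cases, including the \irref{randtestmerge} computation for $\pupdate{\umod{x}{\astrm}}$ and the use of \irref{DErefL} when $x$ is the differential variable of the ODE, are fine. So are the choice and sequence cases (the latter is exactly the paper's three uses of the induction hypotheses) and the induction-free argument for $\asprg;\prandom{x}\refines\prandom{x};\asprg;\prandom{x}$. The gap is the loop case. It is the only nontrivial part of the lemma, you leave it open, and each concrete step you propose for it fails as stated. Write $D\equiv\prandom{x}$ as you do.
\begin{itemize}
\item \irref{loopl} has the form $\dbox{\prepeat{\csprg}}{(\csprg;\dsprg\refines\dsprg)}\limply\prepeat{\csprg};\dsprg\refines\dsprg$. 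With $\dsprg\equiv\prepeat{\asprg};D$, its conclusion matches $D;\prepeat{\asprg};D\refines\prepeat{\asprg};D$ only if the leading $D$ is read as $\prepeat{D}$, and then its premise is your goal again inside a box modality.
\item After \irref{unfold-l}, the remaining branch $D;\asprg;\prepeat{\asprg};D\refines\prepeat{\asprg};D$ needs precisely the absorption you are trying to prove.
\item The direction $\prepeat{(\asprg;D)};D\refines\prepeat{\asprg};D$ of your intermediate equivalence cannot come from \irref{unloop}. Its premise would be $\dbox{\prepeat{(\asprg;D)}}{(\asprg;D\refines\asprg)}$, which is false already for $\asprg\equiv\ptest{\ltrue}$.
\item \irref{loopr} is $\csprg;\dsprg\refines\csprg\limply\csprg;\prepeat{\dsprg}\refines\csprg$, so the absorption premise you sketch is not one of its instances.
\end{itemize}

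The missing idea is to move the trailing $D$ inside the loop, where the induction hypothesis $D;\asprg;D\prgeq\asprg;D$ can act on every iteration. The paper proves \irref{aux}, $\prepeat{\asprg};D\prgeq D;\prepeat{(\asprg;D)}$, by antisymmetry.
\begin{itemize}
\item $D;\prepeat{(\asprg;D)}\refines\prepeat{\asprg};D$ follows, after prefixing with $\ptest{\ltrue}\refines\prepeat{\asprg}$, from \irref{loopr} with $\csprg\equiv\prepeat{\asprg};D$ and $\dsprg\equiv\asprg;D$. Its premise $\prepeat{\asprg};D;\asprg;D\refines\prepeat{\asprg};D$ is the induction hypothesis plus $\prepeat{\asprg};\asprg\refines\prepeat{\asprg}$.
\item The converse follows from \irref{loopl} with $\csprg\equiv\asprg$ and $\dsprg\equiv D;\prepeat{(\asprg;D)}$. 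Its premise needs no induction hypothesis, only \irref{G}, \irref{unfold-l}, \irref{choiceL}, \irref{seqidl} and \irref{skip}.
\end{itemize}
Then $D;\prepeat{\asprg};D\prgeq D;D;\prepeat{(\asprg;D)}\prgeq D;\prepeat{(\asprg;D)}\prgeq\prepeat{\asprg};D$ by \irref{aux}, \irref{randidem} and \irref{aux}.

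Since you already have the easy direction, a shorter repair of your plan is a single \irref{loopr} with $\csprg\equiv\prepeat{\asprg};D$ and $\dsprg\equiv\asprg$. Its premise $\prepeat{\asprg};D;\asprg\refines\prepeat{\asprg};D$ follows from \irref{skip}, the induction hypothesis and $\prepeat{\asprg};\asprg\refines\prepeat{\asprg}$. You then get $D;\prepeat{\asprg};D\refines\prepeat{\asprg};D;\prepeat{\asprg};D\refines\prepeat{\asprg};D;D\prgeq\prepeat{\asprg};D$. The three steps use \irref{seqidl} together with $\ptest{\ltrue}\refines\prepeat{\asprg}$, then that \irref{loopr} instance, then \irref{randidem}.
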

\begin{proof}
  We reason by induction on $\asprg$.
  \begin{itemize}
    \item $\asprg \equiv \ptest{\asfml}$:
    By assumption, $x \notin \freevars{\asfml}$, so $x$ is fresh for $\asfml$ (up-to alpha renaming).
    
    Thus, we can prove:
    \begin{align*}
      \prandom{x};\ptest{\asfml};\prandom{x} &\prgeq {\ptest{\asfml}};\prandom{x};\prandom{x} & \text{by }(\irref{refRandomComm})\\
      &\prgeq {\ptest{\asfml}};\prandom{x} & \text{by }(\irref{randidem})
    \end{align*}
    
    \item $\asprg \equiv \umod{y}{\astrm}$:
    By assumption, $x \notin \freevars{\astrm}$.
    There are two cases:
    \begin{itemize}
      \item $x \neq y$: in that case, $x$ is free for $\umod{y}{\astrm}$ so we can conclude as above.
      \item $x = y$: 
      \begin{align*}
        \prandom{x};\umod{x}{\astrm};\prandom{x} &\prgeq \prandom{x};\prandom{x};\ptest{x = \astrm};\prandom{x} & \text{by }\irref{update}\\
        &\prgeq \prandom{x};\ptest{x = \astrm};\prandom{x} & \text{by }\irref{randidem}\\
        &\prgeq \umod{x}{\astrm};\prandom{x} & \text{by }\irref{update}
      \end{align*}
    \end{itemize}

    In particular, the proof extends similarly to nondeterministic assignments.
    For the second case, it also extends to any program of the form $\umod{x}{\astrm};\bsprg$.
    
    \item $\asprg \equiv \pode{\D{y}=\astrm}{\asfml}$:
    By assumption $x \neq y, x \notin\freevars{\astrm}$, and $x \notin\freevars{\asfml}$, so either $x$ is fresh for $\asprg$ and the same reasoning as for $\ptest{\asfml}$ holds, or $x$ is the differential variable $\D{y}$.
    In the latter case, we can use \irref{DErefL} to rewrite $\pode{\D{y}=\astrm}{\asfml}$ as $\umod{\D{y}=\astrm};\pode{\D{y}=\astrm}{\asfml}$, and thus going back to the assignment case $\umod{x}{\astrm};\bsprg$.
    come back to the assignment case.

    \item $\asprg \equiv \pchoice{\bsprg}{\csprg}$:
    By assumption, $x \notin\freevars{\bsprg}$ and $x \notin\freevars{\csprg}$ both hold, so we can use both induction hypotheses.
    \begin{align*}
      \prandom{x};(\pchoice{\bsprg}{\csprg});\prandom{x} &\prgeq \prandom{x};(\pchoice{(\bsprg;\prandom{x})}{(\csprg;\prandom{x})}) &\text{by }\irref{seqdistr}\\
      &\prgeq \pchoice{(\prandom{x};\bsprg;\prandom{x})}{(\prandom{x};\csprg;\prandom{x})} & \text{by }\irref{seqdistl}\\
      &\prgeq \pchoice{(\bsprg;\prandom{x})}{(\csprg;\prandom{x})} & \text{by IHs}\\
      &\prgeq (\pchoice{\bsprg}{\csprg});\prandom{x} & \text{by }\irref{seqdistr}
    \end{align*}

    \item $\asprg \equiv \bsprg;\csprg$:
    By assumption, $x \notin\freevars{\bsprg}$ and $x \notin\freevars{\csprg}$ both hold, so we can use both induction hypotheses.
    \begin{align*}
      \prandom{x};\bsprg;\csprg;\prandom{x} &\prgeq \prandom{x};\bsprg;\prandom{x};\csprg;\prandom{x} &\text{by IH}_\csprg\\
      &\prgeq \bsprg;\prandom{x};\csprg;\prandom{x} &\text{by IH}_\bsprg\\
      &\prgeq \bsprg;\csprg;\prandom{x} &\text{by IH}_\csprg
    \end{align*}

    \item $\asprg \equiv \prepeat{\bsprg}$:
    By assumption, $x \notin \freevars{\bsprg}$ so we can derive \((\prandom{x};\bsprg;\prandom{x}) \prgeq \bsprg;\prandom{x}\).
    Intuitively, the idea is to show that the sequence $\bsprg;\bsprg;\dots;\bsprg;\prandom{x}$ is equivalent to $\prandom{x};\bsprg;\prandom{x};\bsprg;\dots;\prandom{x};\bsprg;\prandom{x}$.
    This is formalized by proving the following using the induction hypothesis:

    \begin{calculus}
      \cinferenceRule[aux|aux]{aux}
      {\linferenceRule[eq]
        {\prepeat{\bsprg};\prandom{x}}
        {\prandom{x};\prepeat{(\bsprg;\prandom{x})}}
      }{}    
    \end{calculus}

    The proof of \irref{aux} is done by antisymmetry.
    We first show $\prandom{x};\prepeat{(\bsprg;\prandom{x})} \refines \prepeat{\bsprg};\prandom{x}$.
    \begin{align*}
      \prandom{x};\prepeat{(\bsprg;\prandom{x})} &\prgeq \ptest{\ltrue};\prandom{x};\prepeat{(\bsprg;\prandom{x})} &\text{by }\irref{seqidl}\\
      &\refines (\pchoice{\ptest{\ltrue}}{\bsprg;\prepeat{\bsprg}});\prandom{x};\prepeat{(\bsprg;\prandom{x})} &\text{by }\irref{choiceL}\\
      &\prgeq \prepeat{\bsprg};\prandom{x};\prepeat{(\bsprg;\prandom{x})} &\text{by }\irref{unfold-l}\\
      &\refines \prepeat{\bsprg};\prandom{x} &\text{by }\irref{loopr}\text{ and (A)}
    \end{align*}
    Using \irref{loopr} relies on (A): $\prepeat{\bsprg};\prandom{x};\bsprg;\prandom{x} \refines \prepeat{\bsprg};\prandom{x}$ which holds by using the induction hypothesis.

    Then we show $\prepeat{\bsprg};\prandom{x} \refines \prandom{x};\prepeat{(\bsprg;\prandom{x})}$.
    \begin{align*}
      \prepeat{\bsprg};\prandom{x} &\prgeq \prepeat{\bsprg};\prandom{x};\ptest{\ltrue} &\text{by }\irref{seqidr}\\
      &\refines \prepeat{\bsprg};\prandom{x};(\pchoice{\ptest{\ltrue}}{\bsprg;\prandom{x};\prepeat{(\bsprg;\prandom{x})}}) &\text{by }\irref{choiceL}\\
      &\prgeq \prepeat{\bsprg};\prandom{x};\prepeat{(\bsprg;\prandom{x})} &\text{by }\irref{unfold-l}\\
      &\refines \prandom{x};\prepeat{(\bsprg;\prandom{x})} &\text{by }\irref{loopl}\text{ and (B)}
    \end{align*}
    Using \irref{loopl} relies on (B): $\dbox{\prepeat{\bsprg}}{(\bsprg;\prandom{x};\prepeat{(\bsprg;\prandom{x})} \refines \prandom{x};\prepeat{(\bsprg;\prandom{x})})}$ which holds by \irref{G} and the following refinement:
    \begin{align*}
      \bsprg;\prandom{x};\prepeat{(\bsprg;\prandom{x})} &\refines (\pchoice{\bsprg;\prandom{x}}{\ptest{\ltrue}});\prepeat{(\bsprg;\prandom{x})} &\text{by }\irref{choiceL}\\
      &\prgeq \prepeat{(\bsprg;\prandom{x})} &\text{by }\irref{unfold-l}\\
      &\prgeq {\ptest{\ltrue}};\prepeat{(\bsprg;\prandom{x})} &\text{by }\irref{seqidl}\\
      &\refines \prandom{x};\prepeat{(\bsprg;\prandom{x})} &\text{by }\irref{skip}
    \end{align*}

    Finally, we can finish the proof by using \irref{aux}.
    \begin{align*}
      \prandom{x};\prepeat{\bsprg};\prandom{x} &\prgeq \prandom{x};\prandom{x};\prepeat{(\bsprg;\prandom{x})} &\text{by }\irref{aux}\\
      &\prgeq \prandom{x};\prepeat{(\bsprg;\prandom{x})} &\text{by }\irref{randidem}\\
      &\prgeq \prepeat{\bsprg};\prandom{x} &\text{by }\irref{aux}
    \end{align*}
  \end{itemize}
\end{proof}

With \rref{lem:notfree} proved, we can finally show \rref{thm:fresh}.

\begin{proof}[\rref{thm:fresh}]
  We reason by induction on $C$:
  \begin{itemize}
    \item $C \equiv \cdot_p$:
      \begin{align*}
        \asprg;\prandom{x} &\prgeq \asprg;\prandom{x};\prandom{x} &\text{by }\irref{randidem}
      \end{align*}

    \item $C \equiv \bsprg;C_1$:
      \begin{align*}
        \bsprg;C_1(\asprg);\prandom{x} &\prgeq \bsprg;C_1(\asprg;\prandom{x});\prandom{x} &\text{by IH}
      \end{align*}
    \item $C \equiv C_1;\bsprg$:
      \begin{align*}
        C_1(\asprg);\bsprg;\prandom{x} &\prgeq C_1(\asprg);\prandom{x};\bsprg &\text{by }\irref{refRandomComm}\\
        &\prgeq C_1(\asprg;\prandom{x});\prandom{x};\bsprg &\text{by IH}\\
        &\prgeq C_1(\asprg;\prandom{x});\bsprg;\prandom{x} &\text{by }\irref{refRandomComm}
      \end{align*}    

    \item $C \equiv \pchoice{C_1}{\bsprg}$:
      \begin{align*}
        (\pchoice{C_1(\asprg)}{\bsprg});\prandom{x} &\prgeq \pchoice{(C_1(\asprg);\prandom{x})}{(\bsprg;\prandom{x})} &\text{by }\irref{seqdistr}\\
        &\prgeq \pchoice{(C_1(\asprg;\prandom{x});\prandom{x})}{(\bsprg;\prandom{x})} &\text{by IH}\\
        &\prgeq (\pchoice{C_1(\asprg;\prandom{x})}{\bsprg});\prandom{x} &\text{by }\irref{seqdistr}
      \end{align*}

    \item $C \equiv \pchoice{\bsprg}{C_1}$:
      The proof is identical as the choice operator is commutative.

    \item $C \equiv \prepeat{C_1}$:
      \begin{align*}
        \prepeat{C_1(\asprg)};\prandom{x} &\prgeq \prepeat{(C_1(\asprg);\prandom{x})};\prandom{x} &\text{by (C)}\\
        &\prgeq \prepeat{(C_1(\asprg;\prandom{x});\prandom{x})};\prandom{x} &\text{by IH}\\
        &\prgeq \prepeat{C_1(\asprg;\prandom{x})};\prandom{x} &\text{by (C)}
      \end{align*}

      We prove (C) by antisymmetry, i.e., $\prepeat{\csprg};\prandom{x} \prgeq \prepeat{(\csprg;\prandom{x})};\prandom{x}$ whenever $x \notin\freevars{\csprg}$.
      Note that this condition holds for both $C_1(\asprg)$ and $C_1(\asprg;\prandom{x})$ for which (C) is used.
      The first direction $\prepeat{\csprg};\prandom{x} \refines \prepeat{(\csprg;\prandom{x})};\prandom{x}$ is simple.
      \begin{align*}
        \prepeat{\csprg};\prandom{x} &\refines \prepeat{(\csprg;\ptest{\ltrue})};\prandom{x} &\text{by }\irref{seqidr}\\
        &\refines \prepeat{(\csprg;\prandom{x})};\prandom{x} &\text{by }\irref{skip}
      \end{align*}
      We show the other direction, i.e., $\prepeat{(\csprg;\prandom{x})};\prandom{x} \refines \prepeat{\csprg};\prandom{x}$
      \begin{align*}
        \prepeat{(\csprg;\prandom{x})};\prandom{x} &\prgeq \prepeat{(\csprg;\prandom{x})};\ptest{\ltrue};\prandom{x} &\text{by }\irref{seqidl}\\
        &\refines \prepeat{(\csprg;\prandom{x})};(\pchoice{\ptest{\ltrue}}{\csprg;\prepeat{\csprg}});\prandom{x} &\text{by }\irref{choiceL}\\
        &\prgeq \prepeat{(\csprg;\prandom{x})};\prepeat{\csprg};\prandom{x} &\text{by }\irref{unfold-l}\\
        &\refines \prepeat{\csprg};\prandom{x} &\text{by }\irref{loopl}\text{ and (D)}
      \end{align*}
      Using \irref{loopl} relies on (D): $\dbox{\prepeat{(\csprg;\prandom{x})}}{\csprg;\prandom{x};\prepeat{\csprg};\prandom{x} \refines \prepeat{\csprg};\prandom{x}}$ which holds by \irref{G} and the following refinement:
      \begin{align*}
        (\csprg;\prandom{x};\prepeat{\csprg};\prandom{x}) &\prgeq \csprg;\prepeat{\csprg};\prandom{x} &\text{by \rref{lem:notfree}}\\
        &\refines (\pchoice{\csprg;\prepeat{\csprg}}{\ptest{\ltrue}});\prandom{x} &\text{by }\irref{choiceL}\\
        &\prgeq \prepeat{\csprg};\prandom{x} &\text{by }\irref{unfold-l}
      \end{align*}
  \end{itemize}
\end{proof}
\end{document}